\titleformat{\section}{\Large\bfseries}{\thesection}{1em}{}
\titleformat{\subsection}{\large\bfseries}{\thesubsection}{1em}{}
\titleformat{\subsubsection}[runin]{\bfseries}{\thesubsubsection}{1em}{}
\def\fullversion
\newcommand{\ifconference}[1]{{{\ifx\fullversion\undefined{#1}\fi}\xspace}}
\newcommand{\iffullversion}[1]{{{\ifx\conference\undefined{#1}\fi}\xspace}}
\newcommand{\hide}[1]{} % hide
\newcommand{\yan}[1]{{\textcolor{violet}{Yan: #1}}}
\newcommand{\yihan}[1]{{\color{blue}{\bf Yihan:} #1}}
\newcommand{\xiaojun}[1]{{\color{cyan}{\bf Xiaojun:} #1}}
\newcommand{\laxman}[1]{{\textcolor{purple}{Laxman: #1}}}
\newcommand{\yan}[1]{}
\newcommand{\yihan}[1]{}
\newcommand{\xiaojun}[1]{}
\newcommand{\laxman}[1]{}
\newcommand{\defn}[1]{\emph{\textbf{#1}}} % definition style
\newcommand{\emp}[1]{\emph{\textbf{#1}}} % highlight
\newcommand{\fname}[1]{\textsf{#1}} % function name format
\newcommand{\vname}[1]{\mathit{#1}} % variable name format
\newcommand{\mathfunc}[1]{\mathit{#1}}
\newtheorem{theorem}{Theorem}[section]
\newtheorem{lemma}[theorem]{Lemma}
\let \originalleft \left
\let\originalright\right
\renewcommand{\left}{\mathopen{}\mathclose\bgroup\originalleft}
\renewcommand{\right}{\aftergroup\egroup\originalright}
\newtheoremstyle{exampstyle}
{.5em} % Space above
{1em} % Space below
{\it} % Body font
{.5em} % Indent amount
{\it \bfseries} % Theorem head font
{.} % Punctuation after theorem head
{.5em} % Space after theorem head
{} % Theorem head spec (can be left empty, meaning `normal')
\theoremstyle{exampstyle} 
\theoremstyle{exampstyle} 
\theoremstyle{exampstyle} 
\theoremstyle{exampstyle} 
\renewenvironment{proof}[1][\proofname]{\par
\vspace{-1\topsep}% remove the space after the theorem
\pushQED{\qed}%
\normalfont
\topsep0.1em \partopsep0pt % no space before
\trivlist
\item[\hskip\labelsep
      \itshape
  #1\@addpunct{.}]\ignorespaces
}{%
\popQED\endtrivlist\@endpefalse
\addvspace{0.3em plus 0.2em} % some space after
}
\newcommand{\whp}[1]{\emph{whp}\xspace}
\DeclareMathOperator*{\polylog}{polylog}
\newcommand{\modelop}[1]{\texttt{#1}}
\newcommand{\forkins}{\modelop{fork}}
\newcommand{\thread}{thread}
\setlist{topsep=0.3em,itemsep=0.2em,parsep=0.1em,leftmargin=*}
\newcolumntype{L}[1]{>{\raggedright\let\newline\\\arraybackslash\hspace{0pt}}m{#1}}
\newcolumntype{C}[1]{>{\centering\let\newline\\\arraybackslash\hspace{0pt}}m{#1}}
\newcolumntype{R}[1]{>{\raggedleft\let\newline\\\arraybackslash\hspace{0pt}}m{#1}}
\newcolumntype{B}{>{\bf \boldmath}l}
\newcolumntype{P}{>{\boldmath\begin{math}}l<{\end{math}}}
\newcommand{\mysubsubsection}[1]{\underline{#1}.}
\titleformat{\subsubsection}[runin]
{\normalfont\normalsize\bfseries}{\thesubsubsection}{1em}{\mysubsubsection}
\newcommand{\myparagraph}[1]{\vspace{.1em}\noindent\emp{#1}\enspace}
\newcommand{\myfirstparagraph}[1]{\noindent\emp{#1}\enspace}
\newcommand{\nosemic}{\renewcommand{\@endalgocfline}{\relax}}% Drop semi-colon ;
\newcommand{\dosemic}{\renewcommand{\@endalgocfline}{\algocf@endline}}% Reinstate semi-colon ;
\newcommand{\popline}{\Indm\dosemic}% Undent
\definecolor{commentgreen}{RGB}{0,128,0}
\renewcommand{\LinesNumbered}{%
  \setboolean{algocf@linesnumbered}{true}%
  \renewcommand{\algocf@linesnumbered}{\everypar={\nl}}}%
\let\oldnl\nl% Store \nl in \oldnl
\newcommand{\nonl}{\renewcommand{\nl}{\let\nl\oldnl}}% Remove line number for one line
\newcommand{\stepcomment}[1]{\vspace{.03in}\nonl \popline\textbf{\textcolor{commentgreen}{\footnotesize \enspace\,\textit{\underline{#1:}}}}\\\pushline}
\definecolor{framelinecolor}{RGB}{68,114,196}
\newdimen\zzsize
\newdimen\kwsize
\newcommand{\basicstyle}{\fontsize{\zzsize}{1\zzsize}\ttfamily}
\newcommand{\keywordstyle}{\fontsize{\kwsize}{1\kwsize}\ttfamily\bf}
\newdimen\zzlstwidth
\crefname{section}{Sec.}{Sec.}
\crefname{theorem}{Thm.}{Thm.}
\crefname{lemma}{Lem.}{Lem.}
\crefname{corollary}{Col.}{Col.}
\crefname{table}{Tab.}{Tab.}
\crefname{algorithm}{Alg.}{Alg.}
\crefname{figure}{Fig.}{Fig.}
\crefname{fact}{Fact}{Fact}
\Crefname{table}{Tab.}{Tab.}
\crefname{problem}{Problem}{Problem}
\crefname{appendix}{Appendix}{Appendix}
\newcommand{\naive}{na\"ive}
\definecolor{forestgreen}{rgb}{0.13, 0.55, 0.13}
\newcommand{\basecase}{\theta}
\newcommand{\nheavy}{n_H} % number of heavy keys = num of heavy buckets
\newcommand{\nlight}{n_L} % number of light buckets
\newcommand{\nbucket}{n_B} % number of total buckets
\newcommand{\B}{l}
\newcommand{\stepname}[1]{{\textit{#1}}}
\newcommand{\block}{subarray\xspace}
\newcommand{\bucket}{bucket\xspace}
\newcommand{\record}{record}
\newcommand{\presum}{X}
\newcommand{\offsets}{\vname{offsets}}
\newcommand{\usedbits}{\vname{\gamma}}
\newcommand{\dtmerge}{\fname{DTMerge}}
\newcommand{\sampling}{{\stepname{Sampling}}\xspace}
\newcommand{\counting}{{\stepname{Distributing}}\xspace}
\newcommand{\refining}{\stepname{Recursing}\xspace}
\newcommand{\Merging}{\stepname{Dovetail Merging}\xspace}
\newcommand{\merging}{\stepname{dovetail merging}\xspace}
\newcommand{\bits}{\mathfunc{bits}}
\newcommand{\cbrt}[1]{\sqrt[3]{#1}}
\newcommand{\matrixtrans}{\intercal}
\newcommand{\impname}[1]{\textsf{#1}}
\newcommand{\oursortfull}{{\impname{DovetailSort}}}
\newcommand{\oursort}{{\impname{DTSort}}}
\newcommand{\plain}{{\impname{Plain}}}
\newcommand{\plss}{\impname{PLSS}\xspace}
\newcommand{\plis}{\impname{PLIS}\xspace}
\newcommand{\ipso}{\impname{IPS$^4$o}\xspace}
\newcommand{\regions}{\impname{RS}\xspace}
\newcommand{\raduls}{\impname{RD}\xspace}
\newcommand{\ipsra}{\impname{IPS$^2$Ra}\xspace}
\newcommand{\plmerge}{\fname{PLMerge}}
\newcommand{\uniform}[1]{\emph{Unif-#1}}
\newcommand{\exponential}[1]{\emph{Exp-#1}}
\newcommand{\zipfian}[1]{\emph{Zipf-#1}}
\newcommand{\bitexp}[1]{\emph{BExp-#1}}
\newcommand{\bitsexponential}[1]{Bit-Exponential}
\newcommand{\bexp}{BExp}
\begin{document}

% The following includes the CC license icon appropriate for your paper.
% Download the image from www.scomminc.com/pp/acmsig/4ACM-CC-by-88x31.eps
% and place within your figs or figures folder

%%
%% The "title" command has an optional parameter,
%% allowing the author to define a "short title" to be used in page headers.
%\title{The Theory and Practice of Parallel Integer Sort}
\title{Parallel Integer Sort: Theory and Practice}
%\title{Faster Parallel Integer Sorting in Theory and Practice \\ using Dovetail Sort}
% Placeholder

%%
%% The "author" command and its associated commands are used to define
%% the authors and their affiliations.
%% Of note is the shared affiliation of the first two authors, and the
%% "authornote" and "authornotemark" commands
%% used to denote shared contribution to the research.
\settopmatter{authorsperrow=4}
\author{Xiaojun Dong}
\affiliation{%
  \institution{UC Riverside}
  %\streetaddress{P.O. Box 1212}
  %\city{Riverside}
  %\state{CA}
  \country{}
  %\postcode{43017-6221}
}
\email{xdong038@ucr.edu}

\author{Laxman Dhulipala}
\affiliation{%
  \institution{University of Maryland}
  %\streetaddress{P.O. Box 1212}
  %\city{Riverside}
  %\state{CA}
  \country{}
  %\postcode{43017-6221}
}
\email{laxman@umd.edu}

\author{Yan Gu}
\affiliation{%
  \institution{UC Riverside}
  %\streetaddress{P.O. Box 1212}
  %\city{Riverside}
  %\state{CA}
  \country{}
  %\postcode{43017-6221}
}
\email{ygu@cs.ucr.edu}

\author{Yihan Sun}
\affiliation{%
  \institution{UC Riverside}
  %\streetaddress{P.O. Box 1212}
  %\city{Riverside}
  %\state{CA}
  \country{}
  %\postcode{43017-6221}
}
\email{yihans@cs.ucr.edu}

%%
%% By default, the full list of authors will be used in the page
%% headers. Often, this list is too long, and will overlap
%% other information printed in the page headers. This command allows
%% the author to define a more concise list
%% of authors' names for this purpose.
%\renewcommand{\shortauthors}{}

%%
%% The abstract is a short summary of the work to be presented in the
%% article.
\begin{abstract}
Integer sorting is a fundamental problem in computer science.
This paper studies parallel integer sort both in theory and in practice.
In theory, we show tighter bounds for a class of
existing practical integer sort algorithms, which provides a solid
theoretical foundation for their widespread usage
in practice and strong performance.
In practice, we design a new integer sorting algorithm,
\textsf{DovetailSort}, that is theoretically-efficient and has good practical performance.

In particular, \textsf{DovetailSort} overcomes a common challenge in existing
parallel integer sorting algorithms, which is the difficulty of detecting and
taking advantage of duplicate keys.
The key insight in \textsf{DovetailSort} is to combine algorithmic ideas from
both integer- and comparison-sorting algorithms.
In our experiments, \textsf{DovetailSort} achieves competitive or better
performance than existing state-of-the-art parallel integer and
comparison sorting algorithms on various synthetic and real-world
datasets.
%
%since the early days of parallel computing.
%
%we revisit the problem and present new parallel integer
%sorting algorithms that are both theoretically and practically
%efficient.
%
%
%answered the long-standing question on why practical integer sorting algorithms can outperform comparison-based approaches.
%We showed that a class of parallel integer sorting algorithms, including the new \textsf{DovetailSort} proposed in this paper, have $O(n\sqrt{\log r})$ work and low span \whp{}.
%
%
%addressed the long-standing issue that practical integer sorting algorithms cannot benefit from duplicate keys.
%We proposed \textsf{DovetailSort}, which combines the algorithmic insights from both integer- and comparison-sorting algorithms, and has a novel \emph{dovetailing} step to combine them.
%\laxman{summarize numbers?}
%\yihan{It's a bit hard to summarize numbers since usually that's ~20\% improvement (so not that existing), and also as we mentioned in paper it's hard to get an algorithm that is always better, so we might need to add a lot of conditions to make our improvement clear.}
%Our \textsf{DovetailSort} is efficient, stable, and determinacy-race-free, and consistently outperforms the state-of-the-art parallel sorting algorithms.
\end{abstract}

%%
%% The code below is generated by the tool at http://dl.acm.org/ccs.cfm.
%% Please copy and paste the code instead of the example below.
%%
\begin{CCSXML}
  <ccs2012>
     <concept>
         <concept_id>10003752.10003809.10010170</concept_id>
         <concept_desc>Theory of computation~Parallel algorithms</concept_desc>
         <concept_significance>500</concept_significance>
         </concept>
     <concept>
         <concept_id>10003752.10003809.10010170.10010171</concept_id>
         <concept_desc>Theory of computation~Shared memory algorithms</concept_desc>
         <concept_significance>500</concept_significance>
         </concept>
     <concept>
         <concept_id>10003752.10003809.10010031.10010033</concept_id>
         <concept_desc>Theory of computation~Sorting and searching</concept_desc>
         <concept_significance>500</concept_significance>
         </concept>
   </ccs2012>
\end{CCSXML}
  
\ccsdesc[500]{Theory of computation~Parallel algorithms}
\ccsdesc[500]{Theory of computation~Shared memory algorithms}
\ccsdesc[500]{Theory of computation~Sorting and searching}

%%
%% Keywords. The author(s) should pick words that accurately describe
%% the work being presented. Separate the keywords with commas.
\keywords{Integer Sort, Radix Sort, Sorting Algorithms, Parallel Algorithms}

\renewcommand\footnotetextcopyrightpermission[1]{} % This line removes the footnote about the conference and year.
\fancyhead{} % This line removes the page headers about the conference and authors.

%%
%% This command processes the author and affiliation and title
%% information and builds the first part of the formatted document.
\maketitle

\pagenumbering{arabic}

\section{Introduction}\label{sec:intro}

Sorting is one of the most widely-used primitives in algorithm design, and has been extensively studied.
For many if not most cases, the keys to be sorted are fixed-length integers.
Sorting integer keys is referred to as the \emp{integer sort} problem.
%Algorithms that take advantage of integer input keys are referred to as \emp{integer sorting algorithms}.
An integer sorting algorithm takes as input $n$ records with integer keys in the range $0$ to $r-1$,
and outputs the records with keys in non-decreasing order.
%In this paper, we study the theory and practice of parallel integer sorts.
%Despite decades of studying, we observed, however, that there have been unanswered challenges on parallel integer sort \emp{both in theory and in practice}.
%In this paper, we study the theory and practice of parallel integer sorts.
Despite decades of effort in studying integer sorting
algorithms, however, obtaining \emph{parallel integer sorting (IS) algorithms that are
 efficient both in theory and in practice} has remained elusive.

\myparagraph{Theoretical Challenges.}
\emph{As a special type of sorting, integer sorting algorithms \textbf{can} outperform comparison sorting algorithms by using the integer encoding of keys.}
%since they can directly work on the bits of the keys.
%Such a claim is verified in our experiments (e.g., \cref{fig:heatmap}), on the state-of-the-art parallel integer sorting algorithms~\cite{blelloch2020parlaylib,obeya2019theoretically,axtmann2022engineering} and comparison sorting algorithms~\cite{blelloch2020parlaylib,axtmann2022engineering}.
This claim is verified in many existing studies~\cite{obeya2019theoretically,axtmann2022engineering} as well as our experiments (see \cref{fig:heatmap}). %,
%integer sorts outperforms comparison sorts in many cases.  %on general inputs with all distinct keys.
%Indeed in many existing studies~\cite{obeya2019theoretically,axtmann2022engineering} and our own experiments (see \cref{fig:heatmap}), parallel integer sorts outperform comparison sorts in many cases.
As a result, in real-world applications
(e.g.,~\cite{gbbs2021,ptreedb,wang2023parallel}), integer
sort is usually preferred (instead of comparison sort) when the keys are integers. %keys (e.g., graph transposing in some graph libraries~\cite{}) choose to use integer sorts other than general comparison sorts.
While it is not surprising that IS algorithms can outperform comparison sorts,
we observe a \emph{significant gap in connecting the high performance with theory}.
%Although there have been algorithms with good bounds~\cite{bhatt1991improved,albers1997improved,andersson1998sorting},
%we are unaware of any implementations of them due to complicated algorithmic details.
%Although there exist parallel integer sorts with good bounds~\cite{bhatt1991improved,albers1997improved,andersson1998sorting,hagerup1991constant,matias1991parallel}, they are quite complicated and we are unaware of any implementations of them.
Theoretical parallel IS algorithms with good bounds~\cite{bhatt1991improved,albers1997improved,andersson1998sorting,hagerup1991constant,matias1991parallel} are quite complicated, and we are unaware of any implementations of them.
On the other hand, for the practical parallel IS solutions~\cite{blelloch2020parlaylib,obeya2019theoretically,axtmann2022engineering}, we are unaware of ``meaningful'' analysis to explain their good performance for general $r$.
%Although there exist integer sorting algorithms with $\Theta(n)$ work for small key ranges $[0,n\log^{O(1)}n]$,
The best-known bounds for them, as discussed in~\cite{obeya2019theoretically}, are $O(n\log r)$ work (number of operations) and $\polylog(rn)$ span (longest dependence chain).
However, note that the main use case for integer sort is when $r=\Omega(n)$,
since otherwise the simpler counting sort~\cite{vishkin2010thinking,RR89} can be used.
%\footnote{For this reason, we assume $r=\Omega(n)$ unless otherwise specified. }
In this case, the bounds for practical parallel IS algorithms are no better than comparison sorts ($O(n\log n)$ work and polylogarithmic span~\cite{blelloch2020optimal,goodrich2023optimal,blelloch2010low,blelloch2020parlaylib}).
%$O(\log n)$ span~\cite{blelloch2020optimal,goodrich2023optimal} or $\polylog(n)$ span for practical implementations~\cite{blelloch2010low,blelloch2020parlaylib}).%,  which does not provide a solid support to the better performance of integer sorts to comparison sorts in many cases.
%This conclusion obviously contradicts with the experimental results.
This leads to the following open question in theory: do the practical parallel IS algorithms indeed have lower asymptotic costs than comparison sort (and if so, under what circumstances)?

\hide{
\myparagraph{Practical challenges.}
Meanwhile, as a special type of sorting, integer sorting algorithms \emp{should} outperform comparison sorting algorithms by using the integer encoding of keys.
In fact, comparison sort can be considered as a baseline for sorting integers as integers are comparable.
Unfortunately, even on standard 32-bit integers, SOTA integer sorting algorithms still face significant challenges to outperform comparison sorts.
%One such difficulty is that parallel integer sorts naturally \emph{suffer significant performance degradation for inputs with many duplicate keys}, which are realistic and %frequently occur in practice.
One such challenge comes from the inherent difficulty to deal with \emph{duplicate keys}.
In particular, existing parallel integer sorting algorithms follow the most-significant digit (MSD) framework that uses 8--12 highest bits to first partition all keys into buckets, and recurse on the next several bits within each bucket in parallel.
%Thus, the case where the keys are uniform in a large key range is probably the most advantageous---due to the uniform distribution, the buckets (and thus the parallel tasks) are roughly load-balanced.
Heavy duplicate keys may make the subproblems unbalanced.
On the other hand, most comparison sorting algorithms naturally can identify and thus benefit from heavy duplicates in the input, and achieve much better performance than the general case (see xx).
For example, sample sort can skip a recursive subproblem once two samples have the same key, and quicksort can also separate keys equal to the pivot in partition, and avoid further processing them.
%However, duplications are hard to be utilized by integer sorts, since the subproblems are divided by the integer bits: as long as there is a
Interestingly, such a case does not apply to integer sorts.
To understand this, one can consider a simple input with $n-1$ keys $x$ and one key $x+1$.
In most comparison sorts, the heavy duplicates of $x$ simplify the problem and the complexity would degenerate to $O(n)$.
However, for parallel integer sorts, all computations until the base case level make no progress to distinguish $x$ and $x+1$,
causing much redundant work. In \cref{tab:xx}, we can see that on distributions with heavy duplicates, existing integer sorts can be much slower than
%In particular, existing parallel integer sorting algorithms follow the general framework that uses 8--12 most-significant bits in the keys to first partition all keys into buckets, and recurse on the next several bits within each bucket in parallel.
%Thus, the case where the keys are mostly distinct is probably the simplest and most advantageous---due to the uniform distribution, the buckets (and thus the parallel tasks) are roughly load-balanced.
%so all records will be sorted after only a small number of rounds.
%For comparison sorts, such a case is also the most general
%As an example, in \cref{fig:heatmap}, the input \texttt{uniform-}$10^9$ (each key is uniformly drawn from 0 to $10^9$),
%is almost the fastest case for an existing integer sort \plis{}~\cite{}, and one of the slowest cases for the comparison sorts \plss{}~\cite{} and \ipso{}~\cite{}.
%The overall trends are similar for other integer sorting algorithms since they share similarities at a high level.
}

\myparagraph{Practical Challenges.}
\emph{As a special type of sorting, integer sorting algorithms \textbf{should} outperform comparison sorting algorithms by using the integer encoding of keys.}
Since integers are comparable, comparison sort can be considered as a baseline for sorting integers.
Unfortunately, SOTA parallel IS algorithms do not consistently outperform comparison-based sorting algorithms.
%One such difficulty is that parallel integer sorts naturally \emph{suffer significant performance degradation for inputs with many duplicate keys}, which are realistic and %frequently occur in practice.
%One such challenge comes from the inherent difficulty to deal with \emph{duplicate keys} in integer sorts.
One key reason is the inherent difficulty of dealing with \emph{duplicate keys} in integer sorts.
In principle, duplicate keys are beneficial for sorting algorithms.
%For most comparison sorting algorithms it is naturally easy to identify and thus benefit from heavy duplicates, and achieve much better performance than the general case (see xx).
%as the most widely used parallel comparison sorting algorithm,
For example, {\em samplesort} can skip a recursive subproblem between two equal pivots; similarly, quicksort can separate keys equal to the pivot to avoid further processing them.
%Interestingly, such a case does not apply to integer sorts.
Interestingly, such a case does not apply to integer sort.
%we are not aware of any existing integer sorts that utilize similar ideas to benefit from duplicate keys in a non-trivial manner.
Existing parallel IS implementations follow the {\em most-significant digit (MSD)} framework that partitions all
keys into buckets based on the \emph{integer encoding} (i.e., 8--12 highest bits), and recurses within each bucket.
As such, equal keys cannot be detected until the last recursion.
%Consider the extreme cases that all or almost all records have the same key.
%Samplesort or quicksort implementations easily achieve $O(n)$ work.
Although some techniques can be used to detect special distributions (e.g., all keys are the same),
to the best of our knowledge, no existing parallel IS implementation can benefit from duplicate keys in a general (provable) and non-trivial manner\footnote{
%We note that some techniques in existing integer sorting algorithms (e.g., \regions{}~\cite{obeya2019theoretically} and \ipsra{}~\cite{axtmann2022engineering}) can benefit from duplicates in some cases, at a cost of making the algorithms \emph{unstable}.  The goal of this paper is to overcome this issue without sacrificing the property of stability.
%Some techniques in existing implementations (e.g., \regions{}~\cite{obeya2019theoretically} and \ipsra{}~\cite{axtmann2022engineering}) have a side-effect to benefit from duplicates in some cases, at a cost of making the algorithms \emph{unstable}.  In this paper we want to overcome this issue without sacrificing stability.
Some techniques in existing IS implementations~\cite{obeya2019theoretically,axtmann2022engineering} have a side-effect to benefit from duplicates in some cases, at a cost of making the algorithms \emph{unstable}.  We want to overcome this issue without sacrificing stability.
}.
%In the extreme case that all records have the same key,
%To understand this, one can consider an input with many records with key $x$ and one record with key $x+1$.
%In sample sort or quicksort, when any record with key $x$ is selected as a pivot, all .
%In MSD integer sorts, all computations until the base case level make no progress to distinguish $x$ and $x+1$, causing much redundant work.
%As a comparison, in sample sort or quicksort, the records with key $x$ can be skipped when any record with key $x$ is selected as a pivot, which is likely happened .
\cref{fig:heatmap,tab:synthetic} show that on inputs with heavy duplicates, SOTA parallel IS algorithms can be slower than comparison sorts.
%(although this is not their main goal). However, the corresponding techniques (usually those to make them in-place) unfortunately made the algorithms \emph{unstable}.}.
%In particular, existing parallel integer sorting algorithms follow the general framework that uses 8--12 most-significant bits in the keys to first partition all keys into buckets, and recurse on the next several bits within each bucket in parallel.
%Thus, the case where the keys are mostly distinct is probably the simplest and most advantageous---due to the uniform distribution, the buckets (and thus the parallel tasks) are roughly load-balanced.
%so all records will be sorted after only a small number of rounds.
%For comparison sorts, such a case is also the most general
%As an example, in \cref{fig:heatmap}, the input \texttt{uniform-}$10^9$ (each key is uniformly drawn from 0 to $10^9$),
%is almost the fastest case for an existing integer sort \plis{}~\cite{}, and one of the slowest cases for the comparison sorts \plss{}~\cite{} and \ipso{}~\cite{}.
%The overall trends are similar for other integer sorting algorithms since they share similarities at a high level.

\hide{
However, existing integer sorting algorithms become much slower for inputs with many duplicate keys.
In principle duplicate keys can be beneficial for sorting algorithms, since records with the same key are considered ties and do not need further sorting.
Unfortunately, existing integer sorting algorithms cannot take this advantage, and it is even worse that highly duplicate keys will cause imbalanced recursive subproblem sizes and more rounds to finish.
In contrast, comparison sorting algorithms, such as PLSS~\cite{blelloch2020parlaylib} and IPS$^4$o~\cite{axtmann2022engineering}, which are variants of samplesort, take advantage of duplicate keys inherently by the sampling schemes in the algorithms.
When a duplicate key has many occurrences, it can get more than one sampled pivots that form its own bucket, and no further sorting is needed.
Given the ebb and flow, existing integer sorting can be slower than comparison sorting when the occurrences of duplicate keys are high (see \cref{fig:heatmap} and \cref{tab:overall}).
}

\myparagraph{Our Contributions.} In this paper, we study parallel integer sort to overcome the aforementioned challenges both in theory and in practice.
In theory, \emph{we answer the open question by proving that \textbf{a class of} parallel IS algorithms, including \oursortfull{} proposed in this paper, have $O(n\sqrt{\log r})$ work for input size $n$ and key range $0$ to $r-1$}.
This is asymptotically better than comparison sorts for a large key range $r=n^{o(\log n)}$.
This explains why the existing parallel IS algorithms outperform comparison sorts in practice in many cases.
We also show that we can achieve $\polylog(nr)$ span \whp{} for an unstable integer sort, and $\tilde{O}(2^{\sqrt{\log r}})$ span for a class of practical stable parallel IS algorithms, such as our new \oursortfull{} and the integer sort in ParlayLib~\cite{blelloch2020parlaylib}. %Note that the span bound is subpolynomial.
We also prove that \oursortfull{} has $\Theta(n)$ work for some special input distributions with heavy duplicates.
In addition to the new bounds, our theoretical results (\cref{sec:theory})
also explain the choices of parameters for many (if not all) practical parallel IS implementations.
%We believe the intellectual merit of this analysis is novel and significant.
%We believe that our analysis gives the first non-trivial work bound on practical parallel integer sorts.
We believe that our theoretical contribution is important both for
analyzing our own algorithm and for understanding parallel integer
sort as a general algorithmic problem.
%We believe that our contribution is important both for analyzing our own algorithm,
%as well as for closing the gap between theory and practice for the parallel integer sort problem.

\begin{figure}
  \centering
  \includegraphics[width=\columnwidth]{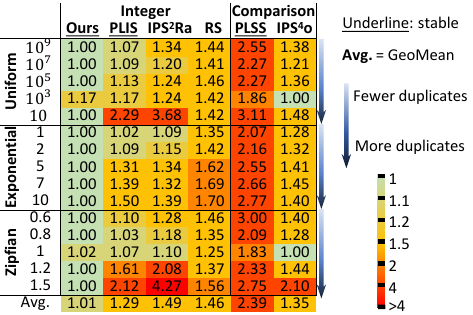}
  \caption{\textbf{Heatmap to compare sorting algorithms on $10^9$ records with 32-bit keys and 32-bit values.}
  All numbers are \textbf{running times relative to the best} for each input.
  Raw data are in \cref{tab:synthetic}.
  The baseline algorithms are described in \cref{tab:baseline}.
  %In each distribution group,  instances contain more duplicates from top to down.
  }\label{fig:heatmap}
\end{figure} 

In practice, \emph{we propose \oursortfull{} (\oursort{}), a parallel integer sorting algorithm that is \textbf{efficient and stable}}. %, and determinacy-race-free}}.
\oursort{} combines the algorithmic ideas from both integer- and comparison-sorting algorithms.
%The backbone of our algorithm is still based on integer sort since it can efficiently sort lightly duplicate keys (the ``light keys'').
%Our key innovation is to integrate a set of techniques into integer sorts, such that
The goal is to detect and take advantage of heavily duplicate keys (\emp{heavy keys}) as in samplesort (a commonly-used comparison sort),
but preserve the low asymptotic work bound as in integer sort.
%remain %remain at a low cost when the number of heavy keys is small.
%Our key innovation is to integrate the ideas from comparison-based sort and semisorting algorithms into integer sorts, such that the algorithm can separate the \emp{heavy keys} (i.e., keys with heavy duplicates) from \emp{light keys} (i.e., the other keys),
%but remains at a low cost when the number of heavy keys is small.
The key idea are to 1) use sampling to identify heavy keys, separate them from the other keys (called \emp{light keys}),
and skip them in the recursive calls,
and 2) exploit the bitwise encoding of the keys to split light keys
into subproblems and deal with them recursively.
Although the high-level idea of \emph{separating} heavy and light keys is natural, an interesting challenge is that
all keys need to be put in increasing order in the output,
and thus the heavy and light keys have to be finally \emph{interleaved} with each other.
Therefore, the heavy-light separation idea requires \emp{dovetailing} the keys at a low cost.
To do this, \oursort{} first carefully puts the heavy keys close to their final destinations when they are identified.
After the light keys have been sorted from the recursive calls, \oursort{} uses a dedicated algorithm to combine them into the final sorted order.
We present more details in \cref{sec:oursort}.
%Despite the benefit of separating the heavy keys, we note that an additional
%This process employs comparisons on the keys, and
%Note that now the heavy and light keys are processed separately and must be combined in the final output.
\hide{
A na\"ive parallel merge will be very costly here, especially due to the recursive structure of integer sort.
Hence, such merging must be applied at every level, offsetting the benefit of distinguishing heavy keys from light keys (see \cref{sec:exp} for details).
In comparison, our techniques will carefully put heavy keys ``close to'' their final destinations, and deploy a sophisticated yet efficient in-place algorithm to shuffle them to the correct final destinations.
In most cases, the cost of merging is roughly proportional to the smaller number of heavy or light keys in this subproblem.
More interestingly, our algorithm remains \emph{stable} and \emph{race-free} (see definitions in \cref{sec:prelim}), which are key features for sorting algorithms and parallel algorithms, respectively. %\yan{Need to mention some theory, but we can only work on that when the algorithm part is almost done.}
}

\hide{
Integer sorting algorithms can outperform comparison sorting algorithms since they can directly work on the bits of the keys.
Such a claim is verified in our experiments (e.g., \cref{fig:heatmap}), on the state-of-the-art parallel integer sorting algorithms~\cite{blelloch2020parlaylib,obeya2019theoretically,axtmann2022engineering} and comparison sorting algorithms~\cite{blelloch2020parlaylib,axtmann2022engineering}.
 %on $10^9$ records with 32-bit key-value pairs on various distributions, among three state-of-the-art parallel integer sorting algorithms: ParlayLib Integer Sort (PLIS)~\cite{blelloch2020parlaylib}, RegionSort~\cite{obeya2019theoretically}, and IPS2Ra~\cite{axtmann2022engineering}, and two comparison sorting algorithms: ParlayLib SampleSort (PLSS)~\cite{blelloch2020parlaylib} and IPS4o~\cite{axtmann2022engineering}.

\myparagraph{Theoretical challenges and our contributions.}
We note a \emph{significant gap} between the theory and practice of parallel integer sort.
Theoretically, there have been algorithms with good bounds~\cite{bhatt1991improved,albers1997improved,andersson1998sorting} but they are impractically complicated.
On the practice side, we are unaware of ``meaningful'' analysis for existing algorithms~\cite{blelloch2020parlaylib,obeya2019theoretically,axtmann2022engineering}.
The best known bounds are $O(n\log r)$ work (number of operations) and $\polylog(rn)$ span (longest dependence chain)~\cite{obeya2019theoretically}.
However, the bounds are no better than comparison sort ($O(n\log n)$ work and $O(\log n)$ span~\cite{blelloch2020optimal,goodrich2023optimal}\footnote{Or $\polylog(n)$ span for practical implementations~\cite{blelloch2010low,blelloch2020parlaylib}.}) when $r=\Omega(n)$\footnote{When $r=o(n)$, one should directly use parallel counting sort with $O(n)$ work and $O(\log n)$ span~\cite{RR89,blelloch2020optimal}.}.
This conclusion obviously contradicts with the experimental results, for instance those in \cref{fig:heatmap}.

In this paper, we close this long-standing gap between the theory and practice by showing that the integer sorting algorithms, including PLIS and our new \oursort, use $O(n\sqrt{\log r})$ work.
This explains why they are faster than comparison sort in practice, and asymptotically incur fewer operations in the range of $r=o(n^{\sqrt{\log n}})$.
We also show that we can achieve $O(\sqrt{\log r}\log n)$ span \whp{} for unstable integer sort, and $O(\sqrt{\log r}2^{\sqrt{\log n}})$ span for the practical PLIS and our \oursort.
The theory developed in this paper (in \cref{sec:theory}) explains the efficiency of the parallel integer sorting algorithms (including our \oursort), the asymptotic trends, and the choice of parameter in practice.
We believe the intellectual merit of this analysis is novel and significant.
}

%, they generally outperform the state-of-the-art parallel comparison sorting algorithms such as ParlayLib SampleSort (PLSS)~\cite{blelloch2020parlaylib} and IPS4o~\cite{axtmann2022engineering}.
\hide{
\myparagraph{Practical challenges and our contributions.}
The main challenge we observed for existing integer sorting algorithms is that \emph{they suffer significant performance degradation for inputs with many duplicate keys}, which are realistic and frequently occur in practice.
Existing parallel integer sorting algorithms follow the overall framework that look at 8--12 most-significant bits of the keys, distribute the records to the associated buckets, and recurse.
Thus, the case where the keys are distributed {\em uniformly with few duplicates} are probably the simplest and most advantageous---due to the keys being uniformly distributed, the buckets are roughly balanced, so all records will be sorted after only a small number of rounds.
As an example, in \cref{fig:heatmap}, the uniform case, e.g., \texttt{uniform-}$10^9$ (each key is uniformly drawn from 0 to $10^9$), is almost the fastest case for PLIS, and one of the slowest cases for PLSS and IPS$^4$o.
The overall trends are similar for other integer sorting algorithms since they share similarities at a high level.

However, existing integer sorting algorithms become much slower for inputs with many duplicate keys.
In principle duplicate keys can be beneficial for sorting algorithms, since records with the same key are considered ties and do not need further sorting.
Unfortunately, existing integer sorting algorithms cannot take this advantage, and it is even worse that highly duplicate keys will cause imbalanced recursive subproblem sizes and more rounds to finish.
In contrast, comparison sorting algorithms such as PLSS~\cite{blelloch2020parlaylib} and IPS$^4$o~\cite{axtmann2022engineering}, which are variants of samplesort, take advantage of duplicate keys inherently by the sampling schemes in the algorithms.
When a duplicate key has many occurrences, it can get more than one sampled pivots that form its own bucket, and no further sorting is needed.
Given the ebb and flow, existing integer sorting can be slower than comparison sorting when the occurrences of duplicate keys are high (see \cref{fig:heatmap} and \cref{tab:overall}).

\yan{Since we now have the theory part, we may want to significantly cut this paragraph, but kind of leave some ``ideas'' on why it is not trivial.}
In this paper, we try to understand whether the deficit of parallel integer sort in handling duplicate keys is inherent or not. %, given that in real-world cases, the input instances usually contain duplicate keys.
%An ideal parallel integer sorting algorithm should get faster when duplicate keys occur, but unfortunately we are unaware of any existing solutions.
One idea is to combine the ideas in checking the bits in integer sort and sampling scheme in comparison sort, but this is highly non-trivial since the high-level approaches in the two types of algorithms are quite different.
For instance, if we simply adopt the sampling scheme, we can now handle heavily duplicate keys easily, but the boundaries between the subproblems are no longer based on the binary representations of the integers, losing the benefit of integer inputs.
}

\hide{
This paper shows a novel approach to combine the advantages of both approaches.
The backbone of our algorithm is still based on integer sort since it can efficiently sort lightly duplicate keys (the ``light keys'').
Our key innovation is a set of techniques that can detect and deal with the heavily duplicate keys (the ``heavy keys''), but remains at a low cost when the number of heavy keys is small.
Note that now the heavy and light keys are processed separately and must be combined in the final output.
A na\"ive parallel merge will be very costly here, especially due to the recursive structure of integer sort.
Hence, such merging must be applied at every level, offsetting the benefit of distinguishing heavy keys from light keys (see \cref{sec:exp} for details).
In comparison, our techniques will carefully put heavy keys ``close to'' their final destinations, and deploy a sophisticated yet efficient in-place algorithm to shuffle them to the correct final destinations.
In most cases, the cost of merging is roughly proportional to the smaller number of heavy or light keys in this subproblem.
More interestingly, our algorithm remains \emph{stable} and \emph{race-free} (see definitions in \cref{sec:prelim}), which are key features for sorting algorithms and parallel algorithms, respectively. %\yan{Need to mention some theory, but we can only work on that when the algorithm part is almost done.}
}

%Due to the novel algorithmic improvements, we believe our algorithm is the first parallel integer sorting algorithm that can benefit from duplicate keys instead of being at a disadvantage.
Due to the algorithmic improvements, \oursort{} achieves
consistently strong performance on various synthetic and real-world
input instances.
We evaluate \oursort{} using SOTA parallel sorting algorithms as baselines.
We illustrate the results on 32-bit keys in \cref{fig:heatmap}, and present more in \cref{sec:exp}.
%Compared to the fastest existing integer sort \plis{}, on 32-bits keys, our algorithm is competitive (8\% faster) for no or very light duplicate keys, and can be up to 2.5$\times$ faster depending on the amount of duplication, with a geometric mean of 27\% speedup.
Compared to the best baseline on 32-bit keys (\plis{}), our algorithm is competitive (1.02$\times$ faster) for no or light duplicate keys, and can be up to 2.3$\times$ faster with more duplicates.
%with a geometric mean of 27\% speedup than the best baseline.
%On 64-bit integers (\cref{tab:synthetic}), the advantage of \oursort{} is larger---it is up to 6.5$\times$ faster than \ipsra{} (which is the best existing integer sort in this setting).
The advantage of \oursort{} is larger on 64-bit keys (see \cref{tab:synthetic}).
We also evaluate \oursort{} in several applications with real-world input data.
In tens of the input instances we tested, \oursort{} is the fastest in most cases. %always the fastest, except for losing to a comparison sort \ipso{} in a few cases. %, but still achieves the best overall (geometric mean) running time.
%---a comparison sort is probably more suitable for these cases---
%Considering the importance of sorting, we believe our algorithm is potentially useful to accelerate a wide range of applications.
We note that sorting is one of the most well-studied problems, %and even for the highly-optimized case of sequential sorting,
and it is almost impossible for one algorithm to be the {best} for \emph{all} input instances.
%The relative performance among algorithms usually depend on the use-case,
%and t
The goal is to design algorithms that are always close to best in any use case.
We believe \oursort{} achieves very good overall performance on a variety of input patterns.
Our code is publicly available~\cite{integersortcode}.
\ifconference{
We present more results and analyses in the full version of this paper~\cite{dong2024parallelfull}.
}

%Considering the importance of sorting and the amount of existing work, we believe our algorithm is an interesting alternative solution that one can always try to plug in their algorithms if possible, and expect a competitive or improved performance.

\hide{
\myparagraph{Our contributions on the theory side.}
Our paper also provides strong results on the theoretical side for parallel integer sort.
Despite some probably impractical theoretical work~\cite{}, the best known bounds for the practical integer sorting algorithms are $O(n\log r)$ work (number of operations) and $\polylog(rn)$ span (longest dependence chain)~\cite{obeya2019theoretically}.
It is well-known that comparison sort has $O(n\log n)$ work and $O(\log n)$ span~\cite{blelloch2020optimal,goodrich2023optimal} (or $\polylog(n)$ span for practical implementations~\cite{blelloch2010low}).
When comparing the bounds, integer sort is only advantageous when $r=o(n)$, but in this case, one should directly use parallel counting sort with $O(n)$ work and $O(\log n)$ span~\cite{RR89,blelloch2020optimal}.
When $r=\Omega(n)$, the theory indicates that comparison sort is always better, but this conclusion obviously contradicts with the experimental results, for instance those in \cref{fig:heatmap}.

In this paper, we close this long-standing gap between the theory and practice by showing that the integer sorting algorithms, including PLIS and our algorithm, use $O(n\sqrt{\log r})$ work.
This explains why they are faster than comparison sort in practice, and asymptotically better in the range of $r=o(n^{\sqrt{\log n}})$.
Indeed, we show that we can achieve $O(\sqrt{\log r}\log n)$ span for unstable integer sort.
\textbf{To the best of our knowledge, this is the first practical parallel sorting algorithm that achieves $o(n\log n)$ work and $\polylog(n)$ span for a non-trivial setting of the integer key range $r=n^{O(1)}$.}\yan{Check if we want to claim this as a main contribution since it will need some justifications.}
We believe the theoretical contributions of this paper bridges the theory and practice of existing parallel sorting algorithms.
}

\section{Preliminaries}\label{sec:prelim}

%\myparagraph{Problem Definitions.}
%Given a sequence of records

\newcommand{\mysubsection}[1]{#1.}
\titleformat{\subsection}[runin]
{\normalfont\normalsize\bfseries}{\thesubsection}{0.5em}{\mysubsection}

\subsection{Notations}
We use $O(f(n))$ \emph{with high probability} (\whp{}) in $n$ to mean $O(cf(n))$ with probability at least $1-n^{-c}$ for $c \geq 1$.
Throughout the paper, we consider the high probability bound in $n$ for input size $n$. We omit ``in $n$'' with clear context.
%where $n$ is the overall input size.
We use $[r]$ to refer to integers in the range $0$ to $r-1$.
In our analysis, we use the term $2^{\sqrt{\log r}}$, which satisfies
$\omega(\log^{c_1}r)=2^{\sqrt{\log r}}=o(r^{c_2})$ %\laxman{shall we use $\in$ instead of $=$?}
for any positive constants $c_1$ and $c_2$.
We use $\tilde{O}(f(nr))$ to denote ${O}(f(nr)\cdot \polylog(nr))$.
We use $\log n$ to represent $\log_2 n$.
More notations are presented in \cref{tab:notation}.

\begin{table}
  \centering \small
  \begin{tabular}{Bl}
    \hline
    % after \\: \hline or \cline{col1-col2} \cline{col3-col4} ...
    %\multicolumn{2}{@{}l}{\textbf{Input:}}\\
    %$n$ & Input size\\
    $A[1..n]$ & Original input array with size $n$\\
    $[r]$ & The range of the integer keys $0, \dots, r-1$ \\
    %\multicolumn{2}{@{}l}{\textbf{Parameters in the general MSD Sort:}}\\
    $\gamma$ & Number of bits in a ``digit'' (sorted by each level)\\
    $b=2^\gamma$ & The ``radix'' size\\
    $\basecase$ & Base case size threshold\\
    %\multicolumn{2}{@{}l}{\textbf{Variables in \oursort{}:}}\\
    $n'$ & Current (recursive) problem size\\
    $d$ & Number of remaining ``digits'' to be sorted\\
    \hline
  \end{tabular}
  \caption{\small \textbf{Notations and parameters in this paper.}
  }\label{tab:notation}
\end{table}

\subsection{Computational Models}
We use the work-span (or work-depth) model for fork-join parallelism with
binary forking to analyze parallel algorithms~\cite{CLRS,blelloch2020optimal}.
%which is recently used in many papers on parallel algorithms~\cite{}.
We assume a set of \thread{}s that share a common memory.
A process can \forkins{} two child software \thread{s} to work in parallel.
When both children complete, the parent process continues.
The \defn{work} of an algorithm is the total number of instructions and
the \defn{span} (depth) is the length of the longest sequence of dependent instructions in the computation.
We can execute the computation using a randomized work-stealing
scheduler in practice in $W/P+O(D)$ time \whp{} on $P$ processors~\cite{BL98,ABP01,gu2022analysis}.
%All algorithms analyzed in this paper only use standard arithmetic operations.

%We say an algorithm is \defn{free of determinacy race} (race-free for short) when no logically parallel operations can access the same memory location and at least one of them is a write~\cite{CLRS}.

\subsection{Sorting and Integer Sorting}\label{sec:integer}

Integer sort takes as input an array $A$ of $n$
records with a key function that maps each record to an integer key in range $[r]$.
The goal is to reorder the records to be in non-decreasing order.
%The main use scenario of integer sorts is when $r=\Omega(n)$, since we can use counting sort in \cref{sec:counting} when $r=o(n)$ to achieve better bounds.
%Therefore, throughout the paper, we assume $r=\Omega(n)$ unless otherwise specified.
As mentioned in \cref{sec:intro}, the main and most general use scenario of integer sorts is when $r=\Omega(n)$, so we assume this in our theoretical analysis
%Therefore, to present simplified bounds, we assume $r=\Omega(n)$
unless otherwise specified. %throughout the paper.
Many existing integer sort algorithms consider the integer keys as
radix-$b$ (base-$b$) numbers, and work on each ``digit'' in a round.
Usually each digit refers to every $\usedbits$ bits in the integer and $b=2^{\usedbits}$.
These algorithms are also referred to as radix sort.
The classic sequential radix sort uses $b=\Theta(n)$, starts from the least significant digits (LSD),
and reorders all the records on every digit using a stable sort, until all bits are used up~\cite{CLRS}.
%applies a stable counting sort on every digit iteratively until all bits are used up~\cite{CLRS}.
%It takes $O(\log r/\log n)$ rounds, with $O(n\log_n r)$ work in total.
It takes $O(n\log_n r)$ work.

\hide{Unfortunately, the sequential radix sort cannot be parallelized efficiently due to the lack of an efficient \emph{stable} parallel counting sort.
As discussed in \cref{sec:counting}, to sort each digit in polylogarithmic span, the digit can only consist of $O(\log \log n)$ bits, leading to too many iterations and is thus expensive in work. In addition, using counting sort on each digit incurs many rounds of global data movement, which tend to be I/O-unfriendly~\cite{gu2015top}.
Hence, existing parallel algorithms instead uses a top-down approach to sort from the most significant digit (MSD).
This will partition all records into $b=2^{\usedbits}$ buckets by their highest $\usedbits$ bits.
Then all the buckets are sorted recursively in parallel.
All parallel integer algorithms we compared in this paper~\cite{blelloch2020parlaylib,obeya2019theoretically,axtmann2022engineering,kokot2017sorting} follow this high-level idea, with minor differences in the size of $b$ ($2^{12}$ for \ipsra{}~\cite{axtmann2022engineering} and $2^8$ for the rest) and how to distribute records to the associated subproblems.
Among these algorithms, ParlayLib Integer Sort (\plis{}) is stable, whereas others are not.
While these algorithms deliver good practical performance, we are unaware of any theoretical analysis to explain why they can perform well.
Unlike sequential LSD sort, the best known work bound for MSD sort is $O(n\log r)$ (see discussions in~\cite{obeya2019theoretically}), which shows no advantage over existing samplesort~\cite{blelloch2020optimal,goodrich2023optimal,blelloch2010low} and counting sort~\cite{RR89} as discussed in \cref{sec:intro}.}

Unlike the sequential setting, existing parallel integer sort implementations start from the most significant digit (MSD).
%In parallel, most practical integer sort implementations start from the most significant digit (MSD).
The algorithms (see \cref{alg:msd}) have two steps: 1) \emph{distributing} all records into $b=2^{\usedbits}$ buckets by their MSD,
and 2) \emph{recursing} to sort all buckets in parallel.
This top-down approach allows all subproblems to be solved independently in parallel, which is more I/O friendly.
The distribution step can be performed by a counting sort (see \cref{sec:counting}) with the bucket id as the MSD.
Most practical parallel integer sorting algorithms~\cite{blelloch2020parlaylib,obeya2019theoretically,axtmann2022engineering,kokot2017sorting,cho2015paradis,Blelloch91,lee2002partitioned,polychroniou2014comprehensive,solomonik2010highly,wassenberg2011engineering,zhang2012novel} follow the framework in~\cref{alg:msd} and have good performance.
However, if a tight bound exists for these parallel MSD algorithms remains open.
%, with differences in the size of $b$ and how to distribute records to subproblems.
%Among them, ParlayLib Integer Sort (\plis{}) is stable, whereas others are not.
\hide{While these algorithms have good performance, we are unaware of any non-trivial theoretical analysis to explain why they perform well.
The best known work bound for parallel MSD sort is $O(n\log r)$~\cite{obeya2019theoretically}, which shows no advantage over existing samplesort~\cite{blelloch2020optimal,goodrich2023optimal,blelloch2010low}.}
% ($2^{12}$ for \ipsra{}~\cite{axtmann2022engineering} and $2^8$ for the rest)
% as discussed in \cref{sec:intro}

\hide{
There are parallel integer sorting algorithms with good theoretical bounds~\cite{bhatt1991improved,albers1997improved,andersson1998sorting,hagerup1991constant,matias1991parallel}, but unfortunately the underlying techniques are complicated, which limits their practicality and/or programmability.
}

%\subsection{Basics of sorting}
%A sorting algorithm takes as input a sequence $A$ of \emp{records} with pre-defined total ordering.
%The goal is to reorder all records in $A$ to be in non-decreasing order.
For simplicity, we assume to store the sorted result in the input array (``in-place'').
Here ``in-place'' refers to the algorithm {\em interface}.
Our algorithm still uses $O(n)$ extra space.
%The implementations of many existing algorithms including our new one, have the option to output to a different array (the ``out-of-place version'').
%It is also easy for most existing algorithms to output to a different array (the ``out-of-place version'').
A sorting algorithm is \defn{stable} if the output preserves the relative order among equal keys as in the input, and unstable otherwise.
Stability is required in many sorting applications.
%\yihan{I don't think the LSD sort is a good example to motivate our algorithm to be stable, since it only requires stability on a small key range. Maybe we can cite DC3 and our paper on collect-reduce. }

\subsection{Counting Sort (aka. Distribution)}\label{sec:counting}

Counting sort is a commonly-used primitive in parallel sorting algorithms to distribute records into (a small number of) buckets.
It takes an input of $n'$ records with a function to map each record to an integer (the \emp{bucket id}) in $[r']$ ($r'\le n'$).
The goal is to sort the records by the bucket ids.
The classic stable counting sort~\cite{vishkin2010thinking}, which is widely used in practice, has $O(n')$ work and $O(r'+\log n')$ span.
%and can be made I/O-efficient~\cite{blelloch2010low}.
Due to page limits, we put more discussions about this algorithm in \ifconference{the full paper~\cite{dong2024parallelfull}}\iffullversion{\cref{sec:app-counting}}, but the audience can treat the counting sort as a black box.

%\laxman{In the following sentence, should every $n$ be $n'$?}\yihan{I think they are different where $n$ is the parameter in high probability bound, and $n'$ is the problem size.}\laxman{ack, thanks.}
There exists a randomized algorithm for unstable counting sort on $n'=\Omega(\log^2n)$ records with $r'=O(n'\log^{O(1)}n)$ buckets in $O(n'+r')$ work and $O(\log n)$ span \whp{} in $n$~\cite{RR89}.
However, this approach is complicated and is less frequently used in practice (discussions also given in \ifconference{the full paper~\cite{dong2024parallelfull}}\iffullversion{\cref{sec:app-counting}}).

\begin{algorithm}[t]
\small
\caption{Parallel-MSD-Sort($A[0..n-1], d$)\label{alg:msd}}
\SetKwFor{parFor}{parallel\_for}{do}{endfor}
\KwIn{$A$: input array with keys in $[r]$. $d$: number of digits remained to be sorted. Each digit refers to $\usedbits$ bits in the key. On the root-level $d=(\log r)/\usedbits$. $\basecase$: base case threshold.}
\notes{We use the $d$-th digit as the $d$-th digit from low to high. }
%%%%%%%%%%%%%%%%%%%%%%%%%%%%%%%%%%%%%%%%%%%%%%%%%%%%%%%%%%%%%%%%%%%%%%%%%%%%%%%%%

%%%%%%%%%%%%%%%%%%%%%%%%%%%%%%%%% Base Case %%%%%%%%%%%%%%%%%%%%%%%%%%%%%%%%%%%%%%%%%
\lIf(\tcp*[f]{All bits are sorted}) {$d=0$} {
    \Return{$A$} \DontPrintSemicolon \label{line:nobits}
}
%\tcp{Base cases: conditions are different in each algorithm}
\lIf {$|A|\le \basecase$} {
    \Return{ComparisonSort$(A)$\label{line:msd-base-case}}\tcp*[f]{Base case} \DontPrintSemicolon %\label{line:basecase}
}

%%%%%%%%%%%%%%%%%%%%%%% Sampling and bucketing %%%%%%%%%%%%%%%%%%%%%%%%%%%%%%%%%%%%%%

%Build $2^{\usedbits}$ empty buckets\\
%Define the key function $\mathit{key}^*$ as extracting $d$-th digit.
%Use counting sort on $A$ using a key function to extract its $d$-th digit based on key function $\mathit{key}^*$ to distribute all records into $b=2^{\usedbits}$ buckets.
%\stepcomment{\counting{} (split subproblems)}
\emph{Distributing:} Apply counting sort on $A$ using the $d$-th digit of the key as bucket id\label{line:msd-distr}\\%, such that all records are distributed into $b=2^{\usedbits}$ buckets.\\
%\stepcomment{\refining{} (sort each bucket)}
\emph{Recursing:} For $i\in [2^{\usedbits}]$, let $A_i$ be the chunk in $A$ where the records have the $d$-th digit as $i$ \\
%Distribute $A$ to the buckets based on the MSD,
%Let $A_{0},..A_{2^{\usedbits}-1}$ be the records in the buckets without the top $\usedbits$ bits\label{line:msd-distr}\\
\lparFor{$i\gets 0$ to $2^{\usedbits}-1$\label{line:msd-par-for}}{
    Parallel-MSD-Sort($A_i$, $d-1$)
}

\end{algorithm}

\subsection{Comparison Sort, Semisort, and Sampling}\label{sec:sort}

Efficient and parallel sorting is widely studied~\cite{inoue2015simd,shen2004adaptive,bhattacharya2022cache,traff2018parallel,katajainen1997meticulous,sato2016chunking,shamoto2016gpu}.
Comparison sort and semisort are the other two types of sorting problems.
Comparison sort takes an input of $n$ records and a comparison function ``$<$'' to order two given records, and outputs the records in non-decreasing order.
Semisort takes an input of $n$ records and an equality-test ``$=$'' on two records, and reorders the records such that records with equal keys are adjacent in the output~\cite{dong2023high}.
SOTA parallel algorithms for comparison sort~\cite{axtmann2022engineering,blelloch2020parlaylib} and semisort~\cite{gu2015top,dong2023high} all use sampling that can detect heavy duplicate keys.

The sampling scheme used in these papers
%\laxman{here --> in this paper? (not sure if here is in reference to the earlier discussion)}
was first proposed in~\cite{RR89}.
Given a parameter $p$ and input size $n$, the sampling scheme first selects $p\log n$ samples uniformly at random.
It then sorts the samples, subsampling every $(\log n)$-th element into a list $S'$.
Based on Chernoff bounds, any key that appears more than once in $S'$
must have $\Omega(n/p)$ occurrences in the input \whp{}.
In existing sorting/semisorting algorithms~\cite{axtmann2022engineering,blelloch2020parlaylib,gu2015top,dong2023high},
each such key will be placed in its own bucket.
Since all records in this bucket have the same key, no further sorting is needed.
As such, these algorithms can be significantly faster when the input contains many duplicate keys.
Our \oursort{} is inspired by this technique for detecting heavy duplicate keys.
In \cref{sec:oursort}, we will discuss how to apply this sampling scheme to an integer sort algorithm.

\titleformat{\subsection}
{\normalfont\large\bfseries}{\thesubsection}{1em}{}

\begin{figure*}
  \centering
  \includegraphics[width=\textwidth,page=1]{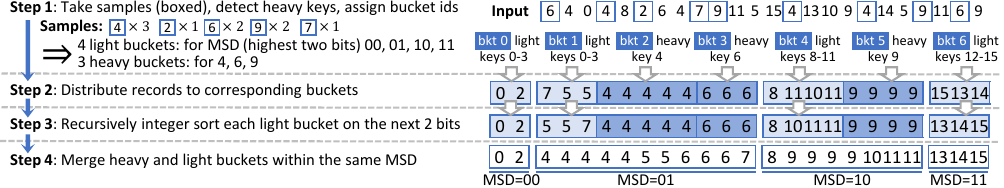}
  \caption{\textbf{An overview of the approach in the \oursort{}.} Here $r=16, \usedbits=2$.
  For simplicity and space limit, the sampling scheme in the figure is not exactly accurate as described in the algorithm.
  Here we simply set keys with 2 or more samples as heavy keys.
  }\label{fig:sort}
\end{figure*}

\section{The \oursortfull{} Algorithm}\label{sec:oursort}
%\input{figures/merge_text.tex}

%We now introduce our \oursort algorithm that overcomes the major issues of existing integer sort algorithms---they cannot handle heavy duplicate keys well.
We now present \oursortfull{} (\oursort{}), which follows the MSD sort framework in \cref{alg:msd}.
By a careful design, \oursort{} can efficiently handle heavy duplicate keys, and is \emp{stable}.
%Our algorithm is \emp{stable} and \emp{race-free}.
%but use careful design to overcome the performance degeneration with heavy duplicates in the existing integer sort algorithms.
%
%Unlike other state-of-the-art integers sort algorithms~\cite{obeya2019theoretically,axtmann2022engineering,kokot2018even,blelloch2020parlaylib},
%our algorithm can detect keys with heavy duplicates, and put each heavy key into a separate bucket.

To detect and handle duplicate keys, we use the sampling idea in existing comparison-based sorts.
However, we need to carefully combine this approach with the MSD sort to avoid blowing up the work bound to $O(n\log n)$.
%At a high level, \oursort{} uses sampling to detect \emph{heavy keys} (keys with many duplicates), and places each heavy key into a separate bucket (\emp{heavy bucket}).
%\laxman{Seems like there is some overlap in the description here and right before in the prelims; perhaps we can just say we use the sampling scheme described in the prelims here}\yihan{reworded.}
%We call all other keys \emp{light keys} and group them into buckets (\emp{light buckets}) using their MSD.
%We call a record with a heavy (light) key a heavy (light) record.
As mentioned, we want our algorithm to separate heavy and light keys, with the goal of avoiding sorting heavy keys in all recursive levels.
%Extracting heavy keys can potentially avoid the load imbalance when the algorithm recurses on the next digit in parallel,
%and also simplify computation since no further refinement (i.e., recursive sorting) is required on them.
%can potentially
%since no further refinement (recursive sorting) is required on heavy keys.
%Thus, only the light keys require recursive sorts, resulting in smaller recursive subproblems.
%Heavy keys are relatively easy to be processed as they do not require further sorting.
%The idea of identifying heavy/light buckets using sampling has been used in parallel semisort to achieve efficiency both in theory and in practice~\cite{}.
The idea of identifying heavy/light buckets has been used in parallel semisort~\cite{gu2015top,dong2023high}.
However, integrating it into a parallel MSD sort is highly non-trivial. %and is much more difficult than semisort.
Unlike in semisort where heavy and light keys are completely independent and can be placed in any order in the output,
all buckets in an integer sort must be finally placed in an increasing order.
We refer to this step as the ``\emph{\merging{}}'' step, and discuss an efficient solution both in theory and in practice in \cref{sec:merge}.

\hide{
Therefore, the heavy buckets must be carefully interleaved with the light buckets.
%\laxman{perhaps ``carefully interleaved with the light buckets''?}
%
One can simply call a parallel merge if all light records and heavy records are sorted, but this na\"ive solution can introduce up to 70\% performance overhead, offsetting the benefit of identifying heavy keys.
Hence, we exploit a set of interesting algorithmic ideas that can accelerate this ``merging'' step.
The idea is to first interleave the heavy and light buckets so they are close to their final destinations,
%and finally various levels of granularity control to minimize data movements, etc.
and to design a delicate merge algorithm to improve the performance.
More details are given in \cref{sec:sample,sec:merge}.}
%In most cases,
%Therefore, introducing heavy keys, while saving the recursive computation on them, is at a cost of properly interleaving them back into the light keys.
%The additional technical difficulty is therefore to keep such a cost small.

\hide{
After sorting the light keys recursively, these heavy keys can be merged with the light keys by a light-weight post-processing.
The threshold of each heavy key is proportional to the input size of each recursive level,
which means a heavy key does not need to have a substantial duplicates. \yihan{I don't understand the previous sentence?}
Even median-level heavy keys can also be detected with a lower threshold in the recursive calls due to smaller input size.
Many real-world applications exhibits keys with different level of heaviness (xee XX),
which allows our algorithm to sort these keys more efficiently.
Our algorithm is stable and race-free.
}

\hide{
We note that although merging two sorted sequences is a well-known primitives in parallel setting,
preserving the stableness and achieving an efficient implementation is non-trivial.
Since allocating memory in parallel can be hard,
we optimize the merging heavy and light keys step so that the auxiliary space is always proportional to the smaller half.
}

\begin{algorithm}[t]
\small
\caption{\oursort($A,d$)\label{alg:intsort}}
\KwIn{$A$: input array with keys in $[r]$. $d$: number of digits remained to be sorted.
Each digit refers to $\usedbits=\sqrt{\log r}$ bits in the key. On the root-level $d=(\log r)/\usedbits=\sqrt{\log r}$.}
\notes{We use ``the $d$-th digit'' as the $d$-th digit from low to high. }
\SetKwInOut{Parameters}{Parameters}
\SetKwProg{myfunc}{Function}{}{}
\SetKwFor{parForEach}{parallel\_for\_each}{do}{endfor}
\SetKwFor{parFor}{parallel\_for}{do}{endfor}
%\Parameters{
%$\basecase$: base case threshold.\\
%$\B$: subarray size.\\
%}

\DontPrintSemicolon

%%%%%%%%%%%%%%%%%%%%%%%%%%%%%%%%%%%%%%%%%%%%%%%%%%%%%%%%%%%%%%%%%%%%%%%%%%%%%%%%%

%%%%%%%%%%%%%%%%%%%%%%%%%%%%%%%%% Base Case %%%%%%%%%%%%%%%%%%%%%%%%%%%%%%%%%%%%%%%%%
%$n'\gets |A|$\\
\lIf(\tcp*[f]{All bits are sorted}) {$d = 0$} {
    \Return{$A$} \DontPrintSemicolon \label{line:nobits}
}
\lIf(\tcp*[f]{Base cases}) {$|A| \le 2^{2\usedbits}$} {
    \Return{ComparisonSort$(A)$ }  \DontPrintSemicolon\label{line:basecase}
}

%%%%%%%%%%%%%%%%%%%%%%% Sampling and bucketing %%%%%%%%%%%%%%%%%%%%%%%%%%%%%%%%%%%%%%
\stepcomment{ \sampling{} (find heavy keys \& assign buckets)}
$S \gets \Theta(2^\gamma\log n)$ sampled keys from $A$ \label{line:sample}\\
%$S'\gets$ (sorted) keys with at least $\log n$ samples in $S$ \tcp*[f]{heavy keys}\\
Sort $S$, subsample every ($\log n$)-th key, and store the keys that have more than one subsamples into $S'$\\
\For(\tcp*[f]{each heavy keys' MSD and its key}) {$i\gets 0$ to $|S'|-1$} {
  %$t\gets$ the $d$-th digit in $S'[i]$\\
  $h[i]\gets \langle$the $d$-th digit in $S'[i]$, $S'[i] \rangle$\label{line:h}
}
\For(\tcp*[f]{each light bucket's MSD and a dummy key}) {$i\gets 0$ to $2^{\usedbits}-1$} {
  $l[i]\gets \langle i,-1 \rangle$\label{line:l}
  %\tcp*[f]{Use -1 to let light buckets go before heavy}\label{line:l}
  }
Merge $h$ and $l$ into an array $\mathit{buckets}$\\
Initialize the hash table $H$ and lookup table $L$\\
\For(\tcp*[f]{assign bucket ids in order}) {$i\gets 0$ to $|\mathit{buckets}|$} {\DontPrintSemicolon
  $\langle x, y \rangle \gets \mathit{buckets}[i]$\\
  \lIf (\tcp*[f]{light bucket and its id}) {$y=-1$}{
    $L[x]\gets i$
  } \lElse(\tcp*[f]{heavy bucket and its id}){
    %insert key $y$ into $H$ with value (bucket id) $i$
    Insert key $y$ with value $i$ to $H$\label{line:sample-end}
  }
}

\hide{
\tcp{We only compare the $\usedbits$ bits in an integer after this step.}
%Count the occurrences of each key in $S$ \label{line:count}\\
Initialize the heavy table $H$ and light table $L$\\
$\nlight \gets 2^{\usedbits}$\\
$id \gets 0$\\
\For {$i: 0\le i< \nlight$} {
    $L[i] \gets id$\\
    $id \gets id+1$\\
    \For {each key $k \in S$ between $[2^i, 2^{i+1})$} {
        \If {the occurrences of $k$ in $S$ is at least $\log n'$} {
            $H.$insert$(k,id)$ \label{line:insert}\tcp*[f]{Assign bucket id $i$ to heavy key $k$}\\
            $id \gets id+1$\\
        }
    }
}
}

\hide{
$\nheavy \gets$ number of distinct keys in $H$ \label{line:step1:end}\\
$\nbucket \gets \nlight + \nheavy + 1$ \tcp{Total number of buckets, including ight, heavy, and the overflow buckets}
}

\stepcomment{\counting{} (Reorder $A$ by bucket ids)}

%%%%%%%%%%%%%%%%%%%%%%% Blocked distributing %%%%%%%%%%%%%%%%%%%%%%%%%%%%%%%%%%%%%%
Use counting sort on $A$ with key function $\mathsf{GetBucketID}$ (\cref{line:bucketid})\label{line:distribute}\\

\hide{
For $i\in [2^{\usedbits}]$, let $A_i$ be the chunk in $A$ for the $i$-th light bucket. \\
\lparFor{$i\gets 0$ to $2^{\usedbits}-1$\label{line:msd-par-for}}{
    DovetailSort($A_i$, $d-1$)
}
}

\hide{
Initializing matrix $C[][]$ with size $\nbucket\times(n'/\B)$\\
%$\offsets \gets$ parallel count sort $A$ with $\mf{GetBucketId}$ \label{line:count_sort}
\parFor(\tcp*[f]{\mbox{For each \block{}}}){$i: 0\le i < n'/\B$\label{line:compute-c}}{
    \For{$j: i\cdot \B \le j < (i+1)\cdot\B$\label{line:seq-loop-1}} {
    $id\gets \textsc{GetBucketId}(A[j],H,L,\usedbits)$\\
    \tcp*[h]{\mbox{$C[i][id]$:~\#\record{s} falling into bucket $id$ in \block{} $i$}}
    $C[i][id]\gets C[i][id]+1$\label{line:cplus}
    }
}
Initialize $T$ of size $n'$\\
\tcp*[h]{\mbox{$\presum[i][j]$:~offset in $T$ for record in \block{} $i$ going to \bucket{} $j$}\label{line:offset}}
Compute $\presum[i][j]\gets \sum_{j'<j\text{~or~}(j'=j, i'<i)}C[i'][j']$\label{line:prefix}\\
\parFor{$i: 0\le i\le \nbucket$}{$\offsets[i]\gets \presum[i][0]$\label{line:step2:block}}
\parFor(\tcp*[f]{\mbox{For each \block{}}}){$i: 0\le i < n'/\B$\label{line:distribute-start}}{
    \For{$j: i\cdot \B \le j < (i+1)\cdot\B$\label{line:seq-loop-2}} {
    $id\gets \textsc{GetBucketId}(A[j],H,L,\usedbits)$\\
    $T[\presum[i][id]]\gets A[j]$\\
    $\presum[i][id]\gets \presum[i][id]+1$\label{line:distribute-end}\\
    }
}
}
%$A\gets T$\tcp*[f]{\mbox{Avoided in implementation, see \cref{sec:inplace}}}\label{line:copy}\\

%%%%%%%%%%%%%%%%%%%%%%%%%%%% Recursive sort %%%%%%%%%%%%%%%%%%%%%%%%%%%%%%%%%%%%%%
%\nonl \popline\textbf{\textcolor{commentgreen}{\quad\textit{// \refining{}:}}}\\\pushline
\stepcomment{\refining{} (sort light buckets)}
\lparForEach{light bucket $B$} {
    %Recursively sort $B$ with $d-1$ digits\label{line:recursive}
    \oursort($B,d-1$)\label{line:recursive}\DontPrintSemicolon
}
%Sort the overflow bucket on $\bits$

%%%%%%%%%%%%%%%%%%%%%%%%%%%% Merge %%%%%%%%%%%%%%%%%%%%%%%%%%%%%%%%%%%%%%
\stepcomment{Dovetail Merging (interleave light and heavy buckets)}
\parFor(\tcp*[f]{merge buckets for all MSD zones}) {$i\in [2^{\usedbits}]$} {
    Let $B_0, B_1,\dots B_m$ be all buckets in MSD zone $i$\\
    \dtmerge($B_0, B_1,\dots B_m$)\label{line:dtmerge}
}

\Return{$A$}\\

%%%%%%%%%%%%%%%%%%%%%%%%%%%% Other functions %%%%%%%%%%%%%%%%%%%%%%%%%%%%%%%%%%%%%%
\vspace{.05in}
\myfunc{\upshape \textsf{GetBucketId}$(k)$\label{line:bucketid}} {
    \leIf {$k$ is found in $H$} {
        \Return{$H[k]$} \DontPrintSemicolon
    }  {
        \Return{$L[k]$} \DontPrintSemicolon
    }
}

\end{algorithm}

%\myparagraph{The outline of \oursort.}
We present \oursort{} in \cref{alg:intsort} and an illustration in \cref{fig:sort}.
%Our algorithm is stable, practical, and I/O friendly.
It consists of four steps, explained in the subsections below.
It is an MSD sort using $\usedbits=\sqrt{\log r}$ bits in each digit.
We explain how this parameter enables a good work bound in \cref{sec:theory}.
%As with other MSD algorithms, \oursort{} starts from the highest $\usedbits$ bits in each key.
%The other bits will be handled in the recursive calls until base cases are reached.
%To simplify the algorithm description,
%we present our algorithm in the top level (sorting the MSD) with clearly marked recursive call.
%In our algorithm description, we assume the original input size is $n$, and the current recursive subproblem size $n'$.
%---we use randomization in \oursort{} so we need to be specific about the probability space.
%When necessary, we use $n'$ to denote the input size of the current (recursive) subproblem.
%Finally, in \cref{sec:basecase}, we discuss the base cases and parameters.
%We noted that the choice of the base case size threshold is chosen not only ``empirically'' to optimize performance, but also crucial to obtain the work bound in the analysis in \cref{sec:theory}.
We use $n$ as the original input size, and $n'$ as the current recursive subproblem size.
Separating them is needed to rigorously analyze the algorithm and give the correct high probability bounds.
We use \emph{heavy/light bucket}  to mean a bucket for heavy/light keys, and \emph{heavy/light record}  to mean a record with a heavy/light key.
We use ``MSD'' to refer to the MSD in the current subproblem (see the $d$-th digit in \cref{alg:intsort}).

\hide{
\begin{enumerate}[leftmargin=*]
    \item \textbf{\sampling{}.} The goal of this step is to detect the heavy keys in this level, and assign each light and heavy key with an associated bucket id. Unlike the semisort algorithm where heavy keys are put aside and fully separated with the light keys, in \oursort{} we try to keep the heavy keys close to their final destination from this very first step.
    \item \textbf{\counting{}.}
    \item \textbf{\refining{}.}
    \item \textbf{\merging{}.}
\end{enumerate}}

\hide{Our algorithm is an MSD algorithm, which means it sorts the bit range $[0, b-1]$ from
the most significant bit (i.e., $b-1$) to the least significant one.
Our algorithm is recursive and works on $\usedbits$ bits (or fewer in the last level) in each level.
To simply the algorithm description,
we present our algorithm in the top level (with a bit range $[b-\usedbits, b-1]$) with clearly mark recursive call.
We also assume the total input size is $n$ and the input size of the current level is $n'$.
}

\subsection{Step 1: \sampling{}}\label{sec:sample}
%As mentioned, our sampling step is inspired by the parallel semisort algorithm~\cite{dong2023high,gu2015top} overviewed in \cref{sec:sort}.
%As mentioned earlier, our sampling step is inspired by the parallel semisort and counting sort algorithms~\cite{RR89,dong2023high,gu2015top} overviewed in \cref{sec:sort}.
%Theoretically, separating heavy keys from light keys is essential to achieve linear work in semisort.
%Practically, this idea is also a good heuristic since it can avoid moving the heavy keys back and forth in the recursion.
As mentioned earlier, the first step is to use sampling to determine heavy keys, and assign a bucket id to each heavy and light bucket.
The theory of this idea is developed in previous work on samplesort and semisort~\cite{RR89,gu2015top}.

In this step, $\Theta(2^{\gamma}\log n)$ samples are picked into an array $S$.
Recall that $\gamma$ is the number of bits to be sorted in a level. %and $b=2^\gamma$ is the radix size.
%We count the occurrences of each key in $S$.
We then subsample every $(\log n)$-th key in $S$, and output keys with more than one subsample to $S'$.
As discussed in \cref{sec:sort}, we define all keys in $S'$ as heavy keys, and all other keys are light keys.
%We say a key is a heavy key if it appears more than $\log n$ times in $S$, and a light key otherwise.
According to the MSD (the highest $\usedbits$ bits) of the keys, the key range is divided into $b=2^\usedbits$ subranges.
We call each key range sharing the same MSD an \emp{MSD zone}.
In a regular integer sort, each MSD zone forms a bucket.
In our case, all \emph{light keys} within the same MSD zone will share a bucket, but each \emph{heavy key} has its own bucket.

Unlike samplesort and semisort that do not need to further handle heavy keys once they are recognized and distributed, in integer sort,
we need to finally combine them with light keys in the same MSD zone.
Hence, we wish to keep the heavy keys close to their final destination in the output from this very first step,
in order to perform an efficient merge at the end of the algorithm.
To do so, we will set the bucket id of a heavy bucket immediately after the light bucket which it should finally be integrated into.
Namely, for each MSD zone, there is exactly one light bucket followed by some (possible zero) heavy buckets, which are all the heavy keys with the same MSD as this zone.
%All heavy buckets are also sorted by their keys.
An illustration is given in \cref{fig:sort}.
%The bucket ids are ordered by the ranges.
%If heavy keys exist, the bucket id of a heavy key is greater than the light bucket id in this range.
A unique id is assigned to each bucket in a serial order.
Multiple heavy buckets in the same MSD zone will be totally ordered by their full keys. %\laxman{totally ordered by comparing the full keys?}
%That is to say, the light bucket of range $0$ should have bucket id 0.
%If there are $k$ heavy keys $v_1 < v_2 < \cdots < v_k$, the bucket ids are $id(v_1)=1, id(v_2)=2, \cdots, id(v_k)=k$.
%The light bucket id of the next range is $k+1$, and so on so forth.
In this way, the light buckets and heavy buckets in the same MSD zone are consecutive,
which makes it easier to combine them in Step 4.

%\yan{Yihan will write the next two paragraphs.}
We explain this step in detail on \cref{line:sample}--\ref{line:sample-end} in \cref{alg:intsort}.
To assign the bucket ids, we first obtain an array $h$ of all heavy keys in a sorted order (\cref{line:h}),
which is a pair $(z,k)$, where $k$ is the key itself and $z$ is the MSD zone of $k$.
%\laxman{this sentence is a bit confusing --- what is a pair of the (zone, key)?}
%Then we represent each light bucket by its smallest element (i.e., the MSD followed by all 0's), and output them into an array $Y$.
We then use an array $l$ (\cref{line:l}) to maintain a pair $(z,k)$ for each light bucket (similar to $h$),
where $z\in[2^{\usedbits}]$ is the MSD of the light bucket, and $k=-1$ is a dummy key to mark this is a light bucket.
%\laxman{the dummy value -1 should be in the key position so they are sorted first? also MSD zone and MSD are the same thing?}
By merging $h$ and $l$, all buckets will be re-organized in the desired order---all buckets will first be ordered by their MSD;
within the same MSD, the first bucket is always the light one, followed by all heavy buckets sorted by their key.
We can then assign ids to them serially. %\laxman{what does labeling them mean?}\yihan{changed all ``labeling'' to ``assigning an id to''}
%In practice, we do this sequentially as the total number of buckets $2\cdot 2^{\usedbits}$ is small.

After assigning the bucket ids, we build a hash map $H$ to maintain the bucket id of the heavy keys, and use a lookup array $L$ to map
every MSD to its associated light bucket id.
As mentioned, the total number of buckets is small, so the cost to maintain them is small both in time and space.
With $H$ and $L$, we can determine the bucket id for any key $k$ in constant time (the \textsf{GetBucketID} function in \cref{alg:intsort}):
we first look up whether $k$ is in $H$;
if so, $k$ is a heavy key, and its bucket id can be found in $H$.
Otherwise, $k$ is a light key, and we can look up its bucket id in $L$ by using the MSD of $k$.
%Therefore, determining the bucket id only has a constant cost.
%The number of buckets $\nbucket$ is $2^\usedbits + $ the number of heavy keys.

%Given that the sample set size is small enough, all operations in this step is in sequential.
%One can parallelized it by XXX. \xiaojun{Finish it if we need to prove polylog span.}

\subsection{Step 2: \counting{}}%\label{sec:counting}

After the first step, we know the bucket id of all keys.
In this step, we use the stable counting sort in \cref{sec:counting} with their bucket id as the key,
similar to many existing parallel sorting algorithms~\cite{dong2023high,blelloch2020parlaylib},
%we use the counting sort described in \cref{sec:prelim}.
After this step, all records are stably reordered into the corresponding buckets.
\hide{
This distributing step is widely used in many parallel sorting algorithms~\cite{dong2023high,blelloch2020parlaylib,andmore}.
Among various existing approaches, we use the \emp{blocked distributing} algorithm from \cite{dong2023high,blelloch2010low},
as it allows for better I/O efficiency. In a high level, the distribution is performed by evenly splitting the input into $l$ blocks, and deal with each block in parallel.
In particular, we first use a matrix $C$, where $C[i][j]$ counts the keys for each bucket $i$ in each block $j$.
We can effectively infer from $C$ the offset in the output when distributing a key in block $i$ to bucket $j$.
With this information, we can further process all blocks in parallel again and distribute all keys to their final destination.
For page limitation, we refer the audience to the existing papers~\cite{dong2023high,blelloch2010low} for more details.
This distribution is \emp{stable}, i.e., all records within the same bucket appear as their original order.
After this step, all records are reordered into the corresponding buckets.
}

%Once each key is associated with a bucket id, we can apply a parallel counting sort here to sort all keys by their bucket ids.
%More importantly, the output is stable.
\hide{
We first initialize a matrix $C[][]$ with size $\nbucket\times(n'/\B)$,
and partition the input into \block{s} of size $\B$ ($n'/\B$ \block{s} in total).
We then can process all \block{s} in parallel.
For each \block{} $i$, we count the occurrences of keys falling into bucket $j$ in $C[i][j]$,
where the bucket id can be determined as described in Step 1. \xiaojun{Add pseudocode?}

We then distribute the input such that keys are ordered by their associated bucket ids.
The essential step here is to compute the column-major prefix sum $X[][]$ of $C[][]$.
$X[i][j]$ is the sum of the number keys in bucket $[0..i-1]$, and the number of keys in bucket $i$ prior to \block{} $j$.
That is to say, the keys in \block{} $j$ with bucket id $i$ should be sent to the destination index starting at $X[i][j]$.
We can use the matrix $X[][]$ to distribute all keys in $A$ to a temporary array $T$.
All the keys will be ordered by their bucket id in $T$ and this process is stable.
}

\subsection{Step 3: \refining{}}\label{sec:recursion}
After Step 2, keys are sorted across MSD zones, but
keys in the same MSD zone are not necessarily sorted.
In this step, we first sort each light bucket recursively.
%In the next step, we will merge the light and heavy buckets in the same MSD zone.

As with a regular integer sort, the recursive call deals with the subproblem using the same integer sort routine on the next ``digit''.
%In \oursort{}, the heavy keys can bypass all subsequent recursive calls, since all records in a heavy bucket have the same key and no further refining is needed.
In \oursort{}, since all records in a heavy bucket have the same key, no further sorting is needed.
Hence, we only recursively sort the light buckets (\cref{line:recursive}).%, after which the light keys within each bucket are fully sorted.

%By only recursing on the light keys, they heavy keys can bypass levels of recursive calls,
%since further refining them on the rest of the digits (or comparing them in the base case) is completely redundant work.
Note that by using a recursive approach, we can potentially detect much more than $2^\gamma$ heavy keys if we consider all levels of recursion.
%In \cref{alg:intsort}, the threshold of a heavy key is decided by the current recursive problem size $n'$.
Namely,
%This means that in recursive calls, different levels of heavy keys can be detected.
the heaviest keys will be identified at the topmost level and bypass all recursive calls,
and some medium-level heavy keys will be identified in the middle levels but still bypass a few recursive calls.
This enables a balance in finding more heavy keys \emph{in total} to save work,
as well as to use a limited number of heavy keys \emph{in each subproblem} to avoid a large amount of extra space in the counting sort in Step 2 (see a discussion in~\cite{dong2023high}).
%We can perform the integer sort recursively on each light bucket on the next $\usedbits$.
%We will leave the heavy keys untouched in this step since the light keys and heavy keys in the same range will be merged in the next step.

%The advantage of using a recursive structure is XXX.

\begin{figure}
  \centering
  \includegraphics[width=\columnwidth]{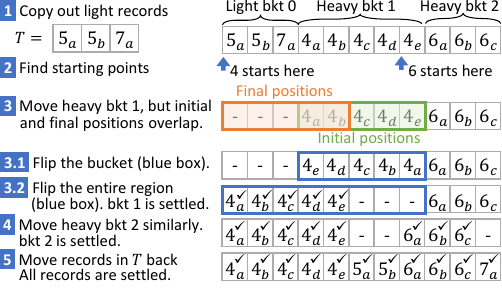}
  \caption{\textbf{Illustration of the dovetail merging step.} The example merges the buckets in MSD zone 01 in \cref{fig:sort}.
  We use a letter as subscription to distinguish different records with the same key.}\label{fig:merge}
\end{figure} 
\subsection{Step 4: \Merging{}}\label{sec:merge}
After Step 2, all keys are sorted across MSD zones,
and all heavy keys are ordered by their keys.
Step 3 further sorts all light keys within each MSD zone.
To compute the final output, we finally must merge the two sorted sequences (light records and heavy records) in each MSD zone.
We refer to this step as the \merging{} step, which is the \dtmerge{} function on \cref{line:dtmerge} in \cref{alg:intsort}.

Each \dtmerge{} call deals with an MSD zone with $m+1$ buckets $B_{0..m}$ stored consecutively in array $A$.
The first one $B_0$ is the light bucket in this zone.
The rest of the $m$ buckets are heavy buckets.
%As a commonly used parallel primitive, merging two sorted sequences can be done in linear work and polylogarithmic span\cite{xxx}.
%However, these algorithms are xxx. \xiaojun{How can we motivate here?}
Given these special properties, the problem is equivalent to viewing the $m$ heavy buckets as partitioning the light bucket into $m+1$ blocks, and then to interleave them.
Hence, we call this process \merging{}.
A \naive{} solution is to call a standard parallel merge to combine $B_0$ with the rest $B_{1..m}$, with linear work and polylogarithmic span.
Despite the efficiency in theory, in practice, it requires two rounds of out-of-place global data movements---\emph{all} records
have to be merged into a temporary array, and then be copied back to the input array.
Such data movements will cause more cache misses and may make the algorithm inefficient.
%two rounds of data movement on all records (i.e., merge them into a temporary array and then copy back).
There exists an in-place parallel merge algorithm in theory~\cite{gu2021parallel},
but it incurs a large constant hidden in the asymptotic cost (moving each record at most four times),
which may also be expensive in practice.
%but it requires to move every record at most \emph{four} times, and is unclear to work well in practice.
%so this approach has not been proven to be practical.
%processing the entire array four times with data movement and is unlikely to be practical.
%For sorting algorithms with high parallelism, the performance is usually bottlenecked by memory bandwidth.
%Therefore, the performance can be very sensitive to any additional data movements in the algorithm.
To enable good performance, our goal is to reduce the data movements,
with the additional consideration of moving less data out of the input array.

Our idea is based on two observations.
First, being strictly in-place is unnecessary, since $O(n)$ temporary space is required by the counting sort in Step 2.
Therefore, we can reuse the space while minimizing the number of out-of-place memory access.
The second useful observation is that the number of heavy keys is guaranteed to be small (at most $2^{\usedbits}$ of them).
%\laxman{perhaps we can say ``in particular at most the number of buckets?''}
As a result, to achieve a practical solution, our idea is to process
buckets in the same MSD zone one by one \emph{sequentially} (note that
all MSD zones are still processed in parallel),
and aim to achieve \emph{high parallelism within each bucket}.
%The key observation here is that the total number of heavy buckets is guaranteed to be small---there are at most $2^{\sqrt{\log r}}$ of heavy buckets, which is at most a few hundreds for real-world integer length.
%Interestingly, to achieve a practical solution with good parallelism, our idea is to process all heavy buckets \emph{sequentially}.
%Therefore, we can process them in a sequential order and aim to achieve high parallelism \emph{within each bucket}.
%Hence, we can process and merge each heavy bucket $B_i$ one at a time, but this idea cannot apply to a general parallel merge.

With these insights, we present our \dtmerge{} algorithm with an illustration in \cref{fig:merge} and pseudocode in \cref{alg:dtmerge}.
The high-level idea is to only copy the smaller one of the light bucket ($B_0$)
or the heavy buckets ($B_{1..m}$) out to a temporary array, so that we copy at most half the records.
The larger half will be handled within the input array.
WLOG, assume the current subproblem has more heavy records than light ones (\cref{line:more-heavy}),
and the other case can be handled symmetrically (\cref{line:more-light}).
In this case, we first move the light records to a temporary array $T$ (\cref{line:merge-copy}),
and then move all heavy buckets to their final positions.
To obtain the final positions of each heavy bucket, we first binary search all heavy keys (in parallel) in the light bucket $B_0$ (\cref{line:binary-search}).
Combining this information with the sizes of all buckets, we can infer the
starting point of each heavy key. %if the array is fully sorted.
The next step is to (stably) move the entire bucket $B_i$ to this starting point,
and do so for all buckets one by one from $B_1$.

Ideally, we wish to copy all records in $B_i$ to their final positions in parallel,
but have to be careful to ensure that their final positions are ``safe'' to be overwritten.
Since all heavy buckets are initially at the end of this MSD zone,
their final destinations must be ahead of their initial positions.
There are three cases for a bucket $B_i$'s destination.
The first case is when it contains light records, which is safe since all light records have been backed up in $T$.
%For example, in \cref{fig:merge}, record $4_a$ need to move to $5_a$'s position, which is safe since $5_a$ is copied out to $T$.
The second is when it overlaps with an earlier heavy bucket $B_{i'}$ where $i'<i$.
It is also safe---since $i'<i$, bucket $i'$ must have been moved to its final destination.
In the first two cases, we can safely move the records directly into the final destination in parallel (\cref{line:non-overlapping}).
%For example, in \cref{fig:merge}, record $6_a$ need to move to $4_e$'s position, which is safe since the entire bucket for $4$ has been moved to the final positions.
The last and the most interesting case is when the destination of the bucket overlaps with its own initial position (\cref{line:overlap}).
Our idea is inspired by the in-place \emph{circular shift} algorithm~\cite{gries1981swapping,yang2013place} (see Phase 3 in \cref{fig:merge}):
%to move $B_{i}$ from its initial position starting from $A[s]$ to a previous position $A[p_i]$,
we can first flip (reverse the order) the bucket itself (\cref{line:flip1}), and then flip the entire associated region in the array (\cref{line:flip2}).
Flipping an array can be performed in parallel and in-place by swapping the corresponding records.
As such, the entire chunk of $B_i$ is moved to an earlier position in a \emph{stable} way.
After dealing with all heavy buckets, we copy all records in $T$ back to $A$, all in parallel, directly to their final destination (\cref{line:merge-copy-back}).
In this way, our idea incurs out-of-place copies for at most half the records, and
requires moving the remaining records at most twice, but entirely within $A$.
%requires at most twice of ``in-place'' movement within array $A$.
Within each bucket, all operations (direct copy or swapping records) are highly parallel.
%In practice, we group non-overlapping buckets (\cref{line:non-overlapping}) and move them together in parallel.
%Also, when the bucket $B_i$ to be moved (on \cref{line:overlap}) is small, we directly move it sequentially.

Note that although buckets within the same MSD zone are processed sequentially,
this only increases the span of the entire algorithm by a factor of $\log n$ because there are only $O(2^{\usedbits})$ heavy buckets.
Interestingly, although a parallel merge gives better span bound than our \dtmerge{}, even in theory it is not necessary,
because the counting sort in Step 2 also requires $O(2^{\usedbits})$ span.
We present more theoretical analysis in \cref{thm:msdours}.

In \cref{sec:exp-study}, we experimentally study \dtmerge{} and show it accelerates Step 4 compared to using a parallel merge.
%improving the overall sorting time by up to 40\%.
We believe the ``dovetail'' step is a unique challenge arising from detecting heavy keys in integer sorts,
and our new technique solves it in an effective yet simple way.

\hide{
Since keys are already sorted across different ranges, we can merge each range in parallel.
WLOG, we assume the number light keys are more than the number of heavy keys.
We also assume there are $k$ heavy keys in the range,
then the light keys are split into $k+1$ subarrays.
We can use binary search to find out the boundaries $p[]$ of these subarrays,
and the boundaries $q[]$ of them after the heavy keys being merged.
For example, $[p[0], p[1])$ (resp. $[p[k], p[k+1])$) is the range of the first (resp. $(k+1)$-th) subarray before merging,
and $[q[0], q[1])$ (resp. $[q[k], q[k+1])$) is the range of the first (resp. $(k+1)$-th) subarray after merging.
We first copy all heavy keys into a temporary array,
and move the light keys starting from the $(k+1)$-th subarray to the first subarray.
To move the $i$-th subarray, if $[p[i], p[i+1])$ and $[q[i], q[i+1])$ do not overlap, we can easily move the keys.
Otherwise, we perform two flipping as described in \cref{fig:merge}. \xiaojun{The figure is not accurate. I'll update it.}
After moving all light keys, we can copy the heavy keys from the temporary array back to the original array.
}

\subsection{Base Cases}\label{sec:basecase}
For the recursive structure, we use two base case conditions.
The first is as in regular integer sort when all digits have been sorted (\cref{alg:intsort} \cref{line:nobits}).
The second is when the problem size is smaller than $2^{2\usedbits}$, where we apply a comparison sort.
Using comparison sort as a base case is standard in implementations for parallel integer sort.
Interestingly, instead of viewing it as an empirical optimization, our theoretical analysis indicates that this base case,
as well as the threshold to use this base case, are very important to guarantee theoretical efficiency. We present more details in \cref{sec:theory}.
%The choice of other parameters (e.g., $\usedbits$) will also be presented along with the analysis of the algorithm in \cref{sec:theory}.

\begin{algorithm}[t]
\small
\caption{\dtmerge($A,B_0, B_1,\dots B_m$)\label{alg:dtmerge}}
\KwIn{$m+1$ buckets consecutive in array $A$. The first one $B_0$ is a sorted light bucket.
The other $m$ buckets are heavy buckets sorted by their keys.}
\DontPrintSemicolon
Binary search all heavy keys in $B_0$, accordingly get $p_{1..m}$,
where $p_m$ is the starting index for heavy bucket $B_i$ when the array is fully sorted.\label{line:binary-search} \\
%Let $\mathit{offset}[m+1]$
\If(\tcp*[f]{more heavy keys than light}){$|B_0|\le \sum_{i=1}^{m}|B_i|$ \label{line:more-heavy}} {
  copy $B_0$ to array $T$\label{line:merge-copy}\\
  \For (\tcp*[f]{starting from the first heavy bucket}){$i\gets 1$ to $m$}{
    \tcp{Move $B_i$ to the final positions starting from $A[p_i]$}
    \If {The final positions for $B_i$ overlap with the current positions of $B_i$\label{line:overlap}}{
    Let $s$ be the current start point of $B_i$\\
    flip($A,s,s+|B_i|$)\tcp*[f]{Flip the entire region}\label{line:flip1}\\
    flip($A,p_i,s+|B_i|$)\tcp*[f]{Flip the destination region}\label{line:flip2}\\
    }
    \Else(\tcp*[f]{No overlap. Directly copy}){
    \parForEach{$j\gets 0$ to $|B_i|-1$\label{line:non-overlapping}}{
    Copy the $j$-th record in $B_i$ to $A[p_i+j]$
    }
    }
  }
  Move light records from $T$ to their final positions in $A$\label{line:merge-copy-back}
}
\lElse{
Symmetric case: copy heavy keys out and merge\label{line:more-light}
}

\end{algorithm}

\section{The Theory of Parallel MSD Sort}\label{sec:theory}

In this section, we provide an analysis for \emph{a class of MSD integer sorting algorithms} based on the framework in \cref{alg:msd}, including \oursort{}.
Recall that the algorithm sorts a ``digit'' of $\usedbits$ bits in each round by 1) distributing all records to the corresponding buckets by the current digit being sorted,
and 2) sorting each bucket recursively.
Finally, a base case of using comparison sort is applied to subproblems with sizes smaller than a parameter $\basecase$.
%The main difference in these algorithms is how to distribute the records to the buckets .
As discussed in \cref{sec:intro,sec:prelim}, although this parallel MSD sort is widely used in practice, we are unaware of any tight analysis of it.
To the best of our knowledge, the best work bound is $O(n\log r)$~\cite{obeya2019theoretically}, which is no better than a comparison sort with $O(n\log n)$ work for the realistic setting when $r=\Omega(n)$.
This is inconsistent with the excellent performance of existing parallel IS algorithms in practice, which are much faster than SOTA comparison-based sorts for many inputs.
In this paper, we bridge this gap between theory and practice by showing \cref{thm:msd,thm:msd2,thm:msdours}.
Recall that the main use case for integer sorts is when $r=\Omega(n)$ (otherwise a counting sort has a better cost bound).
We hence make this assumption on $r$ in our analysis.

\subsection{The Analysis of the General MSD Sort}

As a warmup, and to illustrate some of the key theoretical ideas, we start by showing that we can achieve an \emph{unstable} MSD sort with low work and span.

\hide{
As mentioned in \cref{sec:prelim}, many exiting parallel algorithms for integer sort~\cite{blelloch2020parlaylib,obeya2019theoretically,axtmann2022engineering,kokot2017sorting} recursively sort the records based on the most-significant digits (MSD) of the keys.
Here we give a high-level outline of this approach in \cref{alg:msd} and refer to it as a parallel MSD sort.
It basically sorts $\usedbits$ bits in each round, where $\usedbits$ can either be a constant or a function of $|A|$ and $\bits$\footnote{We refer to this algorithm a MSD radix sort if we use a constant $\usedbits$ for all recursive levels, but there seem to be no restrictions on why we need to use a fixed $\gamma$ in all levels.}.
The main difference in these algorithms is how to distribute the records to the buckets (\cref{line:msd-distr}).
}

\begin{theorem}\label{thm:msd}
  There exists an unstable parallel MSD sorting algorithm with $O(n\sqrt{\log r})$ work and $O(\log r + \sqrt{\log r}\log n)$ span \whp{}.
\end{theorem}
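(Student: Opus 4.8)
The plan is to analyze the recursion in \cref{alg:msd} with the digit size set to $\usedbits=\sqrt{\log r}$ bits (so there are $\sqrt{\log r}$ levels of recursion from root to leaf), and with the base-case threshold $\basecase$ chosen carefully. First I would handle the \emph{distributing} step at each node: since we only need an unstable algorithm, I would use the randomized unstable counting sort cited earlier (from~\cite{RR89}), which on $n'$ records with $r'=2^{\usedbits}=2^{\sqrt{\log r}}=O(n'\polylog n)$ buckets runs in $O(n'+r')=O(n')$ work and $O(\log n)$ span \whp{}, \emph{provided} $n'=\Omega(\log^2 n)$. This is exactly why the base case matters: I would set $\basecase$ large enough (e.g.\ $\basecase=2^{2\usedbits}$, which is $\mathrm{poly}(2^{\sqrt{\log r}})$ and hence $\polylog(n)$ under $r=n^{O(1)}$, but in any case $\ge\Omega(\log^2 n)$ after adjusting constants) that every recursive call that actually performs a counting sort has $n'>\basecase\ge\Omega(\log^2 n)$, so the \whp{} counting-sort bound applies at every internal node.

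Next I would bound the work. Charge the work in two parts. (i) \emph{Internal (distributing) work.} At each level of the recursion the subproblems partition the $n$ records, so the total distributing work at one level is $O(n)$ plus $O(2^{\usedbits})=O(2^{\sqrt{\log r}})$ per node for the bucket arrays; since the number of nodes at any level is at most $n/\basecase$ and $2^{\sqrt{\log r}}\le\sqrt{\basecase}$, the per-node overhead sums to $O(n)$ per level. With $\sqrt{\log r}$ levels this gives $O(n\sqrt{\log r})$ total. (ii) \emph{Base-case work.} Each base case has size at most $\basecase=2^{2\usedbits}$, and comparison sort on it costs $O(\basecase\log\basecase)=O(2^{2\usedbits}\cdot\usedbits)$. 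The base cases partition the input, so there are at most $n/1$ records in them in total and the number of base cases times the amortized per-record cost is $O(n\log\basecase)=O(n\usedbits)=O(n\sqrt{\log r})$. Summing (i) and (ii) yields $O(n\sqrt{\log r})$ work \whp{}.

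For the span, note that along a root-to-leaf path there are at most $(\log r)/\usedbits=\sqrt{\log r}$ internal nodes, each contributing $O(\log n)$ span for its counting sort (plus $O(2^{\usedbits})$ to zero/scan the bucket arrays, but $2^{\usedbits}=2^{\sqrt{\log r}}=o(\log n)$ is absorbed), and one base case contributing $O(\log\basecase)=O(\log n)$ span for the parallel comparison sort; the \parFor{} over $2^{\usedbits}$ children adds $O(\usedbits)=O(\sqrt{\log r})$ per level, i.e.\ $O(\log r)$ total. Altogether the span is $O(\log r + \sqrt{\log r}\log n)$ \whp{}, where the union bound over the $\mathrm{poly}(n)$ counting-sort invocations preserves the high-probability guarantee (using $n'\ge\Omega(\log^2 n)$ so each individual failure probability is $n^{-\Omega(c)}$).

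The main obstacle, and the place requiring the most care, is the interaction between the \whp{} guarantee of the randomized counting sort and the recursion: the counting-sort bound only holds when the subproblem is $\Omega(\log^2 n)$ in the \emph{original} $n$, not in the local $n'$, so I must argue that stopping at $\basecase=\Theta(\log^2 n)$-ish (or the stated $2^{2\usedbits}$, which is at least this large once $r$ is large enough, and otherwise the whole problem is tiny) both keeps every randomized call in its valid regime and does not blow up the work — which it does not, since each level of recursion already costs $\Theta(n)$ regardless. A secondary subtlety is verifying $2^{\usedbits}=O(n'\polylog n)$ so that the $O(n'+r')$-work counting sort is genuinely $O(n')$; this holds because $2^{\sqrt{\log r}}=o(r^{c})\le o(n^{c'})$ for the relevant constants, and at internal nodes $n'>\basecase$ already dominates it.
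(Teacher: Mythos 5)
Your proposal follows essentially the same route as the paper's proof: the same parameter choices ($\usedbits=\sqrt{\log r}$, base-case threshold $\basecase=2^{c\usedbits}$), the same decomposition of the work into base-case comparison sorts (each costing $O(n_i\log n_i)=O(n_i\usedbits)$, summing to $O(n\sqrt{\log r})$) and per-level distribution cost ($O(n_i+2^{\usedbits})=O(n_i)$ because every internal node has $n_i>\basecase\ge 2^{\usedbits}$), and the same span accounting of $O(\log n)$ per level for the unstable counting sort of~\cite{RR89} plus $O(\usedbits)$ for forking the $2^{\usedbits}$ children. Your explicit treatment of the $n'=\Omega(\log^2 n)$ precondition of the randomized counting sort is a point the paper leaves implicit, and your resolution (the base-case threshold keeps every randomized call in its valid regime) is correct. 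One asymptotic claim you make is wrong, however: $2^{\sqrt{\log r}}$ is \emph{not} $o(\log n)$, and $2^{2\sqrt{\log r}}$ is not $\polylog(n)$ --- the paper's own notation section records that $2^{\sqrt{\log r}}=\omega(\log^{c_1}r)$ for every constant $c_1$, so under $r=n^{O(1)}$ (and $r=\Omega(n)$) it is \emph{super}polylogarithmic in $n$. This slip is harmless where you use it to argue $\basecase\ge\Omega(\log^2 n)$ (the true, larger magnitude only helps you), but it is load-bearing in your span paragraph: if the counting sort really did contribute an additive $O(2^{\usedbits})$ span per level, the total span would be $\Omega(\sqrt{\log r}\cdot 2^{\sqrt{\log r}})$, which exceeds the claimed $O(\log r+\sqrt{\log r}\log n)$, and your reason for discarding that term would not save you. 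The bound holds only because the unstable counting sort has span $O(\log n)$ \whp{} with no $2^{\usedbits}$ term at all (that additive term belongs to the \emph{stable} counting sort used in \cref{thm:msd2}); that, and not the purported smallness of $2^{\usedbits}$, is the correct justification.
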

Under the assumption of the standard RAM model that can handle integers with $O(\log n)$ bits (i.e., $r=n^{O(1)}$), the theorem implies $O(n\sqrt{\log n})$ work and $O(\log^{1.5} n)$ span \whp{}.
%We assume we are working in the standard word RAM model that can manipulate integers with $O(\log n)$ bits (i.e., $r=n^{O(1)}$).
%In this model, for $r = n^{O(1)}$ our theorem implies $O(n\sqrt{\log n})$ work and $O(\log^{1.5} n)$ span \whp{}.
%The lower work bound shows explains why existing integer sort can be faster

The key in our analysis is to properly set up the parameters in \cref{alg:msd}.
Once we do so, the analysis becomes relatively clean.
%For the unstable case, the analysis becomes relatively clean once we select the parameters for \cref{alg:msd} correctly.
The parameters to use are digit width $\usedbits=\sqrt{\log r}$ and base case threshold $\basecase = 2^{c\usedbits}$ for some constant $c\ge 1$.
%To prove \cref{thm:msd}, we first need to settle down a few details in \cref{alg:msd}.
This indicates {$(\log r)/\usedbits=O(\sqrt{\log r})=O(\usedbits)$ recursive levels} in the entire algorithm.
We use the \emph{unstable counting sort}~\cite{RR89} (see \cref{sec:counting}) for the distribution step,
and use comparison sort with $O(n'\log n')$ work and $O(\log n')$ span~\cite{goodrich2023optimal} to handle base cases with size $n'$.
%from~\cite{RR89} or the semisort from~\cite{gu2015top} that uses $O(m+b)$ work and $O(\log n)$ span \whp{} to distribute $m=\Omega(\log^2n)$ keys to $b=O(m\log^{O(1)}n)$ buckets.
The two main parts in \cref{alg:msd} are the base cases (\cref{line:msd-base-case}) and the distribution step (\cref{line:msd-distr}).
We first analyze their work with the parameters shown above.

\begin{lemma}\label{lem:msd-base-case}
  With the chosen parameters, the total work for base cases (\cref{line:msd-base-case}) in \cref{alg:msd} is $O(n\sqrt{\log r})$.
\end{lemma}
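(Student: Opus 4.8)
The plan is to exploit two facts: (i) with the chosen parameters each base-case subproblem is tiny relative to the global key range, so the comparison sort on it is cheap per element; and (ii) the size-triggered base cases are disjoint and together touch at most $n$ records. Multiplying the per-element cost by $n$ then gives the bound directly.

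First I would pin down the per-base-case cost. Since $\usedbits=\sqrt{\log r}$, the threshold is $\basecase = 2^{c\usedbits} = 2^{c\sqrt{\log r}}$, so any subproblem reaching \cref{line:msd-base-case} has size $n'\le\basecase$, hence $\log n' \le c\sqrt{\log r}$. Running the comparison sort of~\cite{goodrich2023optimal} with $O(n'\log n')$ work on such a subproblem therefore costs $O(n'\log n') = O\!\left(c\,n'\sqrt{\log r}\right) = O\!\left(n'\sqrt{\log r}\right)$ work, since $c$ is a constant.

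Next I would establish the accounting. In \cref{alg:msd}, a recursive call that enters the base case (either $d=0$ at \cref{line:nobits} or $|A|\le\basecase$ at \cref{line:msd-base-case}) sorts and returns without spawning any further recursive call. Thus the recursion tree is finite and every input record follows exactly one root-to-leaf path, so it is ``consumed'' by exactly one leaf; in particular the subproblems that trigger \cref{line:msd-base-case} are pairwise disjoint subsets of the $n$ records. Writing $n'_1,n'_2,\dots$ for their sizes, this gives $\sum_j n'_j \le n$. Summing the per-base-case cost, the total base-case work is $\sum_j O\!\left(n'_j\sqrt{\log r}\right) = O\!\left(\sqrt{\log r}\right)\cdot\sum_j n'_j = O\!\left(n\sqrt{\log r}\right)$, as claimed.

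The only step needing any care is the disjointness/coverage claim, since the whole bound hinges on never charging a record to two distinct base cases; but this is immediate once one observes that the base case is a genuine recursion terminator. I do not expect any real obstacle here — the lemma is essentially a one-line summation once the two observations above are in place, and it serves mainly as a building block for bounding the work of the distribution step and hence for \cref{thm:msd}.
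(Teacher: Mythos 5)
Your proposal is correct and follows essentially the same argument as the paper: bound each base case's comparison-sort cost by $O(n'\log n') = O(n'\sqrt{\log r})$ using $n' \le 2^{c\sqrt{\log r}}$, then sum over the disjoint base cases whose sizes total at most $n$. Your explicit justification of disjointness via the recursion tree is a detail the paper leaves implicit, but the substance is identical.
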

\begin{proof}
  Let $n_1,\cdots, n_{k}$ be the base case sizes, where $\sum n_i=n$.
  Due to the base case condition, $n_i\le 2^{c\usedbits}$, so the work to sort it is $O(n_i\log n_i)=O(n_ic\usedbits)$.
  The total work for all base cases is therefore $O(nc\usedbits)=O(n\sqrt{\log r})$ ($c$ is a constant).
\end{proof}
\begin{lemma}\label{lem:msd-distr}
  With the chosen parameters, the distribution~step (\cref{line:msd-distr}) \cref{alg:msd} in all recursive calls has work $O(n\sqrt{\log r})$ \whp{}.
\end{lemma}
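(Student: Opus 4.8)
The plan is to bound the distribution work one recursion level at a time, using the unstable counting sort of~\cite{RR89} whose guarantee we recalled in \cref{sec:counting}. First I would observe that every recursive call that actually reaches the distribution step (\cref{line:msd-distr}) operates on a subproblem of size $n' > \basecase = 2^{c\usedbits}$, since any smaller subproblem is diverted to the comparison-sort base case on \cref{line:msd-base-case} before the distribution step is executed. That distribution step uses $r' = 2^{\usedbits}$ buckets, and since $c \ge 1$ we have $r' = 2^{\usedbits} \le 2^{c\usedbits} < n'$, so the hypothesis $r' = O(n'\polylog n)$ of the counting sort holds. I would also verify the size hypothesis $n' = \Omega(\log^2 n)$: because $r = \Omega(n)$ we have $\usedbits = \sqrt{\log r} = \Omega(\sqrt{\log n})$, hence $n' > 2^{c\usedbits} = 2^{\Omega(\sqrt{\log n})} = \omega(\log^2 n)$. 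Therefore each distribution step on a subproblem of size $n'$ runs in $O(n' + r') = O(n')$ work \whp{} in $n$.

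Next I would sum these per-call costs. Since the buckets produced by one distribution step are disjoint subarrays, across all recursive calls at any fixed recursion depth the subproblem sizes sum to at most $n$, so the distribution work at each depth is $O(n)$ \whp{}. The recursion has depth $(\log r)/\usedbits = \sqrt{\log r}$ (each level consumes one $\usedbits$-bit digit), so summing over the $O(\sqrt{\log r})$ depths yields total distribution work $O(n\sqrt{\log r})$.

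The one subtlety requiring care is the high-probability claim, because the counting sort is randomized and its bound holds only up to a failure probability $n^{-c}$ for a constant $c$ that we may choose. I would note that only subproblems of size $> \basecase$ invoke counting sort, so at each depth there are at most $n/\basecase < n$ such calls, and with fewer than $n$ depths the total number of counting-sort invocations is polynomially bounded in $n$. Choosing the constant in the counting-sort guarantee large enough and applying a union bound over all these invocations keeps the overall failure probability $n^{-\Omega(1)}$. This union bound is the only genuine obstacle; once it is in place, the level-by-level summation above completes the argument.
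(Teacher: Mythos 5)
Your proof is correct and follows essentially the same route as the paper's: bound each distribution call by $O(n_i + 2^{\usedbits}) = O(n_i)$ using the base-case condition $n_i > 2^{c\usedbits} \ge 2^{\usedbits}$, sum to $O(n)$ per level, and multiply by the $O(\sqrt{\log r})$ levels. You are in fact somewhat more careful than the paper, which does not explicitly verify the counting sort's preconditions ($n' = \Omega(\log^2 n)$, $r' = O(n'\polylog n)$) nor spell out the union bound over the polynomially many randomized invocations needed for the overall \whp{} claim.
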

\begin{proof}
  We start with the total work in each recursive level (i.e., those with the same parameter $d$).
  Assume all subproblems in this level have sizes $n_1,\cdots, n_{k}$, where $\sum n_i\le n$.
  Using the unstable counting sort~\cite{RR89}, the distribution work is $\sum O(n_i+2^{\usedbits})$ \whp{} for each subproblem.
  Due to the base case condition, $n_i> 2^{c\usedbits}\ge 2^{\usedbits}$ since $c\ge 1$.
  Hence the total work in each recursive level is $\sum O(n_i)=O(n)$.
  Consider all $O(\sqrt{\log r})$ recursive levels, the total work is $O(n\sqrt{\log r})$.
\end{proof}

With the two lemmas, we can now prove \cref{thm:msd}.

\smallskip
\begin{proof}[Proof of \cref{thm:msd}]
  Combining \cref{lem:msd-base-case,lem:msd-distr} gives the work bound in \cref{thm:msd}.
  We now prove the span.
  The span for the base case is $O(\log 2^{c\usedbits})=O(\usedbits)$~\cite{goodrich2023optimal}.
  The total span for one recursive call is the sum of $O(\log n)$ \whp{} (for the distribution step)~\cite{blelloch2020optimal} and $O(\gamma)$ (\cref{line:msd-par-for}).
  Since there are $O(\sqrt{\log r})$ recursive levels, the total span bound is as stated.
\end{proof}

\hide{First, it using counting sort~\cite{RR89} or the semisort~\cite{gu2015top} is not stable, so the overall sorting result is also not stable.
Second, both of the sorting algorithms are not I/O-efficient, so using them will cause significant performance overhead in practice.
We now show how to overcome these issues.}

We note that \cref{thm:msd} holds for $\usedbits=\Theta(\sqrt{\log r})$, but for simplicity we omit the $\Theta(\cdot)$ notation here.

\myparagraph{Practical and Stable MSD Sorting Algorithms.}
\cref{thm:msd} shows the theoretical efficiency of parallel MSD algorithms.
However, the counting sort~\cite{RR89} used in the analysis is not stable, and is unlikely to be practical due to too many rounds of data movements~\cite{gu2015top}.
Many practical algorithms choose to use the stable counting sort~\cite{vishkin2010thinking}.
%We will then prove that this will also give the same work bound with slightly higher span.
We analyze it as follows.

\begin{theorem}\label{thm:msd2}
  There exists a stable parallel MSD sorting algorithm with $O(n\sqrt{\log r})$ work and $O(2^{\sqrt{\log r}}\sqrt{\log r})$ span.
\end{theorem}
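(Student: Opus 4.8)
The plan is to reuse essentially the same parameter choices and accounting as in the proof of \cref{thm:msd}, replacing the unstable counting sort by the stable counting sort of \cref{sec:counting} and replacing the base-case routine by a \emph{stable} parallel comparison sort. Concretely, I would again take digit width $\usedbits=\sqrt{\log r}$ and base-case threshold $\basecase=2^{c\usedbits}$ for a constant $c\ge 1$, so that there are $(\log r)/\usedbits=\sqrt{\log r}$ recursive levels. Since both the distribution step and the base case are now stable, and an MSD radix sort that partitions stably by the current digit and then recurses on each bucket is stable overall, the resulting algorithm is stable; moreover, both the stable counting sort and a stable comparison sort are deterministic, so the bounds hold deterministically (no high-probability qualifier is needed).

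For the work, the base-case contribution is unchanged: by \cref{lem:msd-base-case} it is $O(n\sqrt{\log r})$ (that bound only uses $n_i\le 2^{c\usedbits}$ and $O(n_i\log n_i)$ work per base case, which a stable comparison sort also satisfies). For the distribution step, the stable counting sort costs only $O(n')$ work on a subproblem of size $n'$ — the number of buckets $b=2^{\usedbits}$ does not enter the work bound at all — so the distribution work per recursive level is $O(n)$, and over all $\sqrt{\log r}$ levels it is $O(n\sqrt{\log r})$. Summing the two contributions gives $O(n\sqrt{\log r})$ work.

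For the span, the stable counting sort on a subproblem of size $n'$ with $b=2^{\usedbits}$ buckets has span $O(b+\log n')=O(2^{\sqrt{\log r}}+\log n)$, spawning the $2^{\usedbits}$ recursive calls adds $O(\usedbits)$ to the span of a recursive call, and a stable base-case comparison sort on $O(2^{c\usedbits})$ elements contributes only $\polylog(2^{c\usedbits})=\usedbits^{O(1)}$. Using the standing assumption $r=\Omega(n)$, we have $\log n=O(\log r)$, and since $2^{\sqrt{\log r}}=\omega(\log^{c_1}r)$ for every constant $c_1$ (as recorded in \cref{sec:prelim}), the $2^{\sqrt{\log r}}$ term dominates both $\log n$ and $\usedbits^{O(1)}$. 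Hence the span of one recursive level is $O(2^{\sqrt{\log r}})$, and over the $\sqrt{\log r}$ levels the total span is $O(2^{\sqrt{\log r}}\sqrt{\log r})$, as claimed.

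The only point requiring a little care — and the closest thing to an obstacle — is the span bookkeeping: one must confirm that the $O(2^{\sqrt{\log r}})$ term coming from the stable counting sort's per-level span genuinely absorbs the $\log n$ term from its prefix-sum phase and the polylogarithmic base-case span, which follows from $r=\Omega(n)$ together with the growth rate of $2^{\sqrt{\log r}}$. Everything else is a direct transcription of the argument for \cref{thm:msd}, now with a deterministic stable counting sort in place of the randomized unstable one.
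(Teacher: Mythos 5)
Your proposal is correct and follows essentially the same route as the paper's proof: keep the parameters and work accounting from \cref{thm:msd}, swap in the stable counting sort whose span is $O(\log n + 2^{\usedbits})$ per level, and use $r=\Omega(n)$ so that $2^{\sqrt{\log r}}$ dominates, giving $O(2^{\sqrt{\log r}}\sqrt{\log r})$ span over the $O(\sqrt{\log r})$ levels. Your version is just somewhat more explicit about stability of the base case and the determinism of the resulting bounds.
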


%Note that the previous analysis uses the unstable counting sort~\cite{RR89,gu2015top}, but this approach suffers from: 1) that the output is unstable, and 2) that the algorithm is complicated and thus less practical.
%Practical sorting algorithms use the stable counting sort as mentioned in \cref{sec:counting}, including PLIS~\cite{blelloch2020parlaylib} and also in \oursort.
\begin{proof}
First, note that the work analysis remains the same as \cref{thm:msd}, since the stable counting sort has the same work as the unstable one.
The only difference is that the stable version has span related to the number of buckets, which is $O(\log n + 2^{\usedbits})$.
Assuming $r=\Omega(n)$, this is $O(2^{\usedbits})$.
The total span of the entire MSD sorting is therefore $O(2^{\sqrt{\log r}}\sqrt{\log r})$.
\end{proof}

This indicates the same work bound as the unstable version.
Although the span is slightly higher than that in \cref{thm:msd},
this version enables stability, and is more practical.
In fact, many existing practical parallel IS algorithms, such as \plis{}~\cite{blelloch2020parlaylib},
use a similar approach as in \cref{thm:msd2}.

\hide{
To achieve stability and I/O efficiency, we can consider the distribution algorithm from~\cite{blelloch2010low}, which distributes the records into the buckets in a \emph{stable} manner.
It is used in PLIS~\cite{blelloch2020parlaylib} and also in \oursort.
Here we overview the high-level idea and give more details given in \cref{sec:algo}.
Given the current input size $n'$, it partitions the input array into $n'/l$ subarrays each with size $l$.
For each subarray, we sequentially count the appearances of the keys into the $2^\gamma$ buckets, and in total we have $n'/l \times 2^\gamma$ such counts in a matrix (the counting matrix).
We then apply a prefix-sum on the counting matrix to get the offset of each bucket in each subarray, and finally distribute the records to their destinations in the buckets.
This algorithm uses $O(n'+n'2^\gamma/l)$ work and $O(\log n+l)$ span.
Plugging in $l=O(2^\gamma)=O(2^{\sqrt{\log r}})$ gives $O(n')$ work and $O(2^{\sqrt{\log r}})$ span.}

\myparagraph{Theory-Guided Practice.}
\cref{thm:msd2} shows $O(n\sqrt{\log r})$ work for practical parallel MSD sort.
This explains why the integer sort algorithms can outperform comparison sorts with $O(n\log n)$ work for realistic range of $r=n^{O(1)}$.
%It also shows the asymptotic trends between integer and comparison sort---since $r$ is in the work bound of the MSD sort, larger range $r$ diminishes its advantage.
%This also matches the experimental results.
%As shown in \cref{tab:64-bit}, when the key range increases to 64-bit integers, the advantage of integer sort is smaller than the case for 32-bit integers (\cref{fig:heatmap}).
%
Our parameter $\gamma=\sqrt{\log r}$ used in the analysis is also the choice for most practical MSD algorithms.
When $r=2^{64}$, we have $\gamma=\sqrt{\log r}=8$ and $\basecase=2^{c\usedbits}=2^{8c}$, which roughly matches the existing algorithms
that empirically sort 8--12 bits in each recursive level,
and use base case sizes from $2^8$ to $2^{16}$~\cite{axtmann2022engineering,blelloch2020parlaylib,obeya2019theoretically}.

\hide{
\medskip
As a summary, we bridge the theory and practice of integer sort by giving this analysis of the parallel MSD sort.
It is the first rigid theoretical analysis that can explain the efficiency of the existing and our new integer sort algorithms, the asymptotic trends, and the choice of parameter in practice.
}

\subsection{The Analysis of \oursort{}}

In this section, we show that our \oursort{} with the additional steps of sampling and dovetail merging is theoretically efficient.
We first prove the following theorem.

\begin{theorem}\label{thm:msdours}
  The \oursort{} algorithm (\cref{alg:intsort}) is a stable integer sort with $O(n\sqrt{\log r})$ work and $\tilde{O}(2^{\sqrt{\log r}})$ span.
\end{theorem}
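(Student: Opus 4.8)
The plan is to build on the analysis of the general MSD sort (\cref{thm:msd2}) and then account for the three extra steps that \oursort{} adds: \sampling{}, \Merging{}, and the restriction that heavy buckets are skipped in recursion. Stability is handled by inspection of each step. For the work bound, the key observation is that the new steps never asymptotically exceed the distribution step, which already costs $O(n')$ per subproblem; so the total remains $O(n\sqrt{\log r})$ provided we can charge the sampling and dovetail-merging costs appropriately.

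First I would argue stability: the counting sort in Step 2 is stable, the recursive calls are stable by induction, and the \dtmerge{} step in Step 4 is stable because equal keys are either all light (handled recursively) or all heavy (each heavy key owns its own bucket), and the flip-based circular-shift moves an entire heavy bucket as a contiguous block without reordering within it, while light records restored from $T$ preserve their order. Then I would bound the work per recursive level. In a subproblem of size $n'$: the \sampling{} step picks $\Theta(2^\gamma \log n)$ samples and sorts them; since $\gamma=\sqrt{\log r}$ and the base-case threshold is $2^{2\gamma}$, we have $n' > 2^{2\gamma} = 2^\gamma\cdot 2^\gamma \ge 2^\gamma\log n$ once $2^\gamma \ge \log n$ (which holds for $r = n^{\Omega(1)}$, and otherwise one can absorb the sampling cost into a slightly larger base case), so sorting $\Theta(2^\gamma\log n)$ samples costs $O(2^\gamma\log n\cdot\log(2^\gamma\log n)) = O(n')$. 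Building $h$, $l$, merging them, and populating $H$ and $L$ costs $O(2^\gamma + |S'|) = O(2^\gamma) = O(n')$. The counting sort in Step 2 is $O(n')$ as in \cref{thm:msd2}. The \dtmerge{} step: within an MSD zone, we copy out the smaller of the light vs. heavy side (at most half the records), binary-search at most $2^\gamma$ heavy keys in $B_0$ (cost $O(2^\gamma\log n')$, which is $O(n')$ over the whole subproblem since there are at most $2^\gamma$ zones... actually $O(2^\gamma\cdot 2^\gamma\log n')$ total, bounded by $O(n'\log n')$ — here I would be more careful and note the total number of heavy buckets across all zones is at most $2^\gamma$, not $2^\gamma$ per zone, by the sampling guarantee, so it is $O(2^\gamma\log n') = O(n')$), and each record is moved $O(1)$ times. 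So each recursive level costs $O(n')$ summed over subproblems $= O(n)$, and with $O(\sqrt{\log r})$ levels and $O(n\sqrt{\log r})$ for base cases (\cref{lem:msd-base-case}), the total work is $O(n\sqrt{\log r})$.

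For the span, I would bound one recursive level by $\tilde{O}(2^{\sqrt{\log r}})$ and multiply by the $O(\sqrt{\log r})$ depth. The \sampling{} step: sorting $\Theta(2^\gamma\log n)$ samples takes $O(\log n)$ span with a parallel comparison sort; the merge of $h$ and $l$ and the hash-table construction take $\tilde{O}(1)$ span. The counting sort contributes $O(2^\gamma + \log n)$ span as in \cref{thm:msd2}. The \dtmerge{} step processes the (at most $2^\gamma$) heavy buckets in a zone sequentially, each requiring $O(\log n)$ span for the parallel flips/copies, giving $O(2^\gamma\log n)$; all zones run in parallel. So one level is $O(2^{\sqrt{\log r}}\log n + \sqrt{\log r}\cdot\text{polylog})$; since $2^{\sqrt{\log r}}$ dominates any polylog in $r$ and we absorb the $\log n$ and the $\sqrt{\log r}$ depth factor into the $\tilde{O}$, the total span is $\tilde{O}(2^{\sqrt{\log r}})$.

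I expect the main obstacle to be making the \dtmerge{} work bound rigorous while respecting the high-probability guarantees: one must use that the sampling scheme (from \cite{RR89}, as cited in \cref{sec:sort}) guarantees at most $O(2^\gamma)$ heavy buckets per subproblem \emph{and} that every heavy key so identified genuinely has enough occurrences that skipping it in recursion is correct, and one must track the distinction between the global $n$ and the recursive $n'$ so that the \whp{} bounds compose over all $O(\sqrt{\log r})$ levels and all subproblems via a union bound. The correctness of skipping heavy buckets (all records in a heavy bucket share a key, so no further sorting is needed) and the safety argument for the three overlap cases in \dtmerge{} also need to be stated cleanly, but these are structural rather than quantitative. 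The routine Chernoff-bound details for the sampling I would cite rather than reprove.
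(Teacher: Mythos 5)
Your proposal is correct and follows essentially the same route as the paper's proof: treat \oursort{} as the stable MSD sort of \cref{thm:msd2} plus two extra steps, show the sampling step costs $o(n')$ work (using $|S|=\Theta(2^\gamma\log n)$ versus the base-case floor $n'>2^{2\gamma}$) and the dovetail merge costs $O(n')$ work (at most $2^\gamma$ binary searches total plus $O(1)$ moves per record), and bound each recursive call's span by $\tilde{O}(2^\gamma)$ from the sequential bucket-id assignment and the sequential per-zone processing of at most $2^\gamma$ heavy buckets with $O(\log n)$-span flips. Your explicit stability argument and your resolution of the per-zone versus total heavy-bucket count are correct additions the paper leaves implicit; the only slight inaccuracy is calling the bucket-id assignment $\tilde{O}(1)$ span when the algorithm does it serially in $O(2^\gamma)$, which does not affect the final bound.
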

\begin{proof}
  \oursort{} can be viewed as an extension of the generic stable integer sort which we proved work/depth bounds for in \cref{thm:msd2}.
  Compared to the algorithm analyzed in \cref{thm:msd2}, \oursort{} has two additional steps: the sampling step and the dovetail merging step.
  We first show that the asymptotic work bound remains the same. Note that without sampling and merging, a subproblem with size $n'$ has $O(n')$ work.
  In \oursort{}, the total number of samples is $|S|=2^{\usedbits}\log n$.
  We use base case threshold $\basecase=2^{2\usedbits}$, and assume $r=\Omega(n)$,
  so the cost to sort the samples is $O(|S|\log |S|)=O(2^{\usedbits}\log n\cdot \usedbits)=o(2^{2\usedbits})=o(n')$.
  Other parts in the sampling step have work proportional to $|S|$, so the overall work in Step 1 is $o(n')$.
  %The total number of buckets is evidently no more than $n'$
  %Therefore, sampling step simply takes no more than $n'$ samples and preprocess all buckets, and the total work is $O(n')$.
  %Therefore, the sampling step has $O(n')$ work.

  In the dovetail merging step, we need $O(2^{\usedbits})$ binary searches on $n'$ records, so the total cost is $O(2^{\usedbits}\log n')$,
  which is $O(n')$ considering $n'\ge 2^{2\usedbits}$ and $r=\Omega(n)$.
  When moving a bucket of size $t$, the total work is $O(t)$ (either a direct copy or two rounds of flipping on at most $2t$ records).
  We also copy out and back at most $n'/2$ elements. Combining both parts, the work for the merging step is also $O(n')$. Therefore, compared to the algorithm we used in \cref{thm:msd2}, \oursort{} uses asymptotically the same amount of work, $O(n\sqrt{\log r})$.

  For the span, note that the algorithm used in \cref{thm:msd2} has $O(2^{\usedbits})$ span in each recursive call.
  Again consider the two additional steps. Taking samples in parallel and sorting them requires $O(\usedbits)$ span.
  We process $L$ and $H$ sequentially since there are at most $2^\usedbits$ heavy buckets and exactly $2^\usedbits$ light buckets.
  This gives $O(2^{\usedbits})$ span in each recursive call.

  For the dovetail merging step, we need to move at most $2^{\usedbits}$ heavy buckets.
  Moving each of them is either a direct copy, or two rounds of simple flipping. Both can be done by a parallel-for on all affected records, giving $O(\log n)$ span assuming binary forking. This gives $O(2^{\usedbits}\log n)=\tilde{O}(2^{\usedbits})$ span in one recursive call.
  %Combining this with the other steps gives the stated span bound.
  Therefore the two additional steps both have $\tilde{O}(2^{\usedbits})$ span in each recursive call.
  Considering all $O(\sqrt{\log r})$ levels gives the stated span bound.
\end{proof}

We also note that for the dovetail merging step, if we use the simple parallel merge (i.e., our baseline solution), we can get the same span bound of $O(2^{\sqrt{\log r}}\sqrt{\log r})$ as in \cref{thm:msd2}.
Also, if we assume arbitrary forking (e.g., PRAM), we can also obtain the same span bound as in \cref{thm:msd2}.
%then the span of the \merging{} step is $O(2^{\sqrt{\log r}})$, leading to the overall span the same as in \cref{thm:msd2}.
To avoid the minor subtlety here, we hide the logarithmic terms in the theorem by the $\tilde{O}(\cdot)$ notation.
Our main claim is that \oursort{} can achieve $O(n\sqrt{\log r})$ work and is highly parallel.

\smallskip
In addition to the worst-case bound, we can also show that \oursort{} can achieve asymptotically better work bound on some real-world distributions with heavy duplicates.
We show two special cases that \oursort{} achieves linear work in \cref{thm:special1,thm:special2}.
Here the exponential distribution means the frequencies of keys following the probability density function $f(x;\lambda)=\lambda e^{-\lambda x}$ for $x>0$ and a parameter $\lambda>0$.
%is more efficient in many real-world distributions.
%our new technique can enable a better work bounds for special distributions with heavy duplicates.

\begin{theorem}\label{thm:special1}
  \oursort{} has $O(n)$ work \whp{} if the input key frequency exhibits an exponential distribution with $\lambda e^{\lambda}\ge \bar{c}/2^{\gamma}$ for some $\bar{c}>1$. Here $\lambda>0$ is the parameter of the exponential distribution, which gives probability density function $f(x;\lambda)=\lambda e^{-\lambda x}$ for $x>0$.
\end{theorem}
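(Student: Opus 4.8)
The plan is to show that \oursort{}, run on this input, makes geometrically decreasing progress down the recursion, so that the $O(n')$-per-subproblem cost already established in the proof of \cref{thm:msdours} (distribution, \sampling{}, and \merging{}) sums to $O(n)$ rather than $O(n\sqrt{\log r})$. The mechanism is the memorylessness of the exponential law: if the keys follow an exponential with parameter $\lambda$ over $[r]$, then conditioning on a key lying in a tail $\{m,m+1,\dots\}$ gives the same exponential shifted by $m$, up to a negligible truncation at $r$. Hence in every subproblem that \oursort{} actually recurses into, the records it contains are (to lower order) distributed this way, and in particular the heaviest surviving key has relative frequency $\Theta(1-e^{-\lambda})$.

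First I would pin down the \sampling{} guarantee: with $|S|=\Theta(2^\gamma\log n)$ and subsampling rate $\log n$ (as in \cref{sec:sample} and~\cite{RR89}), in a subproblem of size $n'$ every key of relative frequency $\ge c_2/2^\gamma$ lands in its own bucket \whp{}, while every detected key has relative frequency $\ge c_1/2^\gamma$, for absolute constants $0<c_1<c_2$. Applying this to the shifted exponential, the keys $\{0,1,\dots,K\}$ clearing the threshold satisfy $e^{-\lambda(K+1)}<\bigl(c_2/((e^\lambda-1)2^\gamma)\bigr)\cdot e^{\lambda}$, so their combined mass is $1-e^{-\lambda(K+1)}\ge 1-O\!\left(\tfrac{c_2 e^\lambda}{(e^\lambda-1)2^\gamma}\right)$. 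A case split finishes the estimate: for $\lambda\ge 1$ even the single heaviest key carries mass $\ge 1-1/e$; for $\lambda<1$ one has $1-e^{-\lambda}=\Theta(\lambda)=\Theta(\lambda e^\lambda)\ge\Omega(\bar c/2^\gamma)$ and $e^\lambda/(e^\lambda-1)=\Theta(1/\lambda)$, so the removed mass is $1-O(c_2/\bar c)$. For $\bar c$ exceeding a constant fixed by the sampling scheme (the stated $\bar c>1$ lies in this range), this is $\ge 1-\rho$ for a constant $\rho<1$; since heavy buckets are never recursed upon, the surviving light subproblem has size at most $\rho$ times the current one. Moreover all detected keys, and essentially all survivors, share the current top digit $0$: at level $t$ the mass escaping digit $0$ is at most $e^{-\lambda\,r/2^{\gamma(t+1)}}$, which is $n^{-\Omega(1)}$ for every $t$ up to $\sqrt{\log r}-\Theta(\log\log n/\sqrt{\log r})$. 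So the recursion follows a single ``digit-$0$'' chain with $n_t\le\rho^t n$.

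The rest is bookkeeping, carried out with a union bound over the $O(\sqrt{\log r})$ levels. Each call on the chain costs $O(n_t)$ by \cref{thm:msdours}, so $\sum_t O(n_t)=O(n/(1-\rho))=O(n)$. The other $2^\gamma-1$ MSD zones at each such level receive no records \whp{}, hence cost nothing. The last $O(\log\log n/\sqrt{\log r})=o(\sqrt{\log r})$ levels (where the residual key range has shrunk to $\polylog n$ and the escape bound is weak) start from a subproblem of size $\rho^{\Omega(\sqrt{\log r})}n=o(n)$, so even summing a further $O(\sqrt{\log r})$ levels of linear-in-subproblem-size work contributes $o(n)$. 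Base-case calls occur only once a subproblem drops below $2^{2\gamma}$, costing $O(2^{2\gamma}\gamma)=\tilde{O}(2^{2\sqrt{\log r}})$ in total, which is $o(n)$ in the standard regime $r=n^{O(1)}$. Adding up gives $O(n)$ work \whp{}.

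The step I expect to be the real obstacle is turning the clean inequality $\lambda e^\lambda\ge\bar c/2^\gamma$ into the high-probability statement ``a $(1-\rho)$-fraction of the records is detected heavy''. A single threshold key has expected multiplicity only $\Theta(1)$ in the subsample $S'$, so it is detected with merely constant probability and a key-by-key union bound fails; instead one must prove concentration for the \emph{aggregate} detected mass over the family of $\Theta(2^\gamma)$ heavy-candidate keys, exploiting $2^\gamma=2^{\sqrt{\log r}}=\omega(\log n)$ in the relevant range. A related nuisance is that after removing the detected keys the residual distribution is only approximately a shifted exponential --- a few threshold keys may escape detection at a given level --- and one must check these stragglers are too few to disturb the geometric decay and are themselves swept up in the subsequent, smaller subproblems.
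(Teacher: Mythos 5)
Your proposal reaches the right conclusion but by a genuinely different and substantially heavier route than the paper. The paper's proof is a \emph{one-level} argument: at the root, every key with frequency at least $\bar{c}/2^{\gamma}$ is detected heavy \whp{}, so by the exponential tail the surviving light records number only $O(\bar{c}n/(\lambda 2^{\gamma}))$ (written in the paper as $\bar{c}n\gamma/2^{\gamma}$); charging each of these survivors one unit of work for \emph{all} $O(\sqrt{\log r})$ remaining levels still gives $o(n)$, and the root level itself costs $O(n)$. No claim about the structure of the deeper recursion is needed. Your proof instead establishes geometric decay level by level via memorylessness, which is more robust in the small-$\lambda$ regime (where the root-level light mass need not be $n/2^{\Omega(\gamma)}$) but requires tracking how the survivors are partitioned into subproblems. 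Incidentally, the issue you flag as ``the real obstacle'' --- aggregate concentration of the detected mass --- is not one: you only ever credit keys whose \emph{distributional} frequency clears the threshold $c_2/2^{\gamma}$, each such key is individually detected \whp{}, and a union bound over the at most $2^{\gamma}\le n$ candidates suffices, which is exactly the guarantee the paper imports from the sampling literature.

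The genuine gap in your route is the single-chain/memorylessness step. The paper's model has the rounded exponential draws ``randomly mapped to an integer in $[r]$,'' and even under the identity embedding the recursion is driven by MSD zones, not by tail conditioning. Your claim that every subproblem actually recursed into is (up to truncation) a shifted exponential, and that essentially all survivors share top digit $0$ at every level, holds only if the key values $1,2,3,\dots$ occupy consecutive positions in $[r]$; under a random (or adversarial) embedding the light keys scatter across the $2^{\gamma}$ zones, the conditional frequency profile inside a zone is not a shifted exponential, and the per-level constant-fraction removal no longer follows from memorylessness. You would need either to restrict the model to the identity embedding or to redo the per-zone mass accounting, whereas the paper's one-level argument is agnostic to the embedding precisely because it never reasons about subproblem structure. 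Separately, both your argument and the paper's silently require $\bar{c}$ to dominate the sampling scheme's detection constant (you say this explicitly; the paper conflates the two constants), so this is not a defect relative to the paper --- but the embedding assumption is.
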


\begin{proof}
  We first consider the exponential distribution, where the input follows the probability density function $f(x;\lambda)=\lambda e^{-\lambda x}$ for $x>0$ and a parameter $\lambda>0$.
  Since we are considering integer sort, the randomly drawn real number is rounded to an integer, and randomly mapped to an integer in $[r]$.
  \oursort{} can detect a large fraction of the records as heavy keys when the input is drawn from an exponential distribution.
  Particularly, the sampling scheme in \cref{alg:intsort} can detect keys with more than $\bar{c}n/2^{\gamma}$ occurrences as heavy keys \whp{} for some constant $\bar{c}>1$~\cite{gu2015top}, or equivalently, keys with frequencies higher than $\bar{c}/2^{\gamma}$.
  This indicates that an integer $x$ with $\lambda e^{-\lambda x}\ge \bar{c}/2^{\gamma}$ will be identified as heavy.
  The cumulative distribution function of the exponential distribution is $F(x;\lambda)=1-e^{-\lambda x}$ for $x\ge 0$.
  Thus, the total number of light records is no more than $n\cdot e^{-\lambda x}=\bar{c}n\gamma/2^{\gamma}=\bar{c}n\sqrt{\log r}/2^{\sqrt{\log r}}$ when plugging in $x$ from the above threshold for heavy keys.
  However, since we are considering integer sort (keys are rounded to integers), we need to make sure that $x\ge 1$, so we need the assumption that $\lambda e^{\lambda}\ge \bar{c}/2^{\gamma}$.
  Even though we assume the worst case that the rest of the records will be processed for all the rest $\sqrt{\log r}-1$ levels, the cost to process them is $O(\bar{c}n\sqrt{\log r}/2^{\sqrt{\log r}}\cdot \sqrt{\log r})=O(n\log r/2^{\sqrt{\log r}})=o(n)$.
  Hence, the total work for \oursort{} on exponential distribution under the assumption is $O(n)$.
\end{proof}

\begin{theorem}\label{thm:special2}
  \oursort{} has $O(n)$ work \whp{} if there are no more than $c'2^{\gamma}$ distinct keys, for some constant $c'<1$.
\end{theorem}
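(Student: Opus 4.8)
The plan is to convert the hypothesis on the number of distinct keys into the statement that a constant fraction of the records is removed from the recursion at \emph{every} subproblem, and then sum a geometric series. Two structural facts drive the argument. First, every recursive subproblem's key set is a subset of the input's, so every subproblem has at most $c'2^{\gamma}$ distinct keys; and since the light subproblems at a fixed recursion depth have pairwise disjoint key sets, there are at most $c'2^{\gamma}$ nonempty subproblems at each depth, hence $O(\gamma 2^{\gamma})$ over the $O(\sqrt{\log r})$ levels of \cref{alg:intsort}. Second, recall from \cref{sec:sort} (used already in \cref{thm:special1}) that the \sampling{} step, taking $\Theta(2^{\gamma}\log n)$ samples, detects every key occurring at least $\bar{c}n'/2^{\gamma}$ times in a subproblem of size $n'$ as heavy, \whp{} in $n$, for a constant $\bar{c}>1$; this stays reliable at \emph{every} non-base subproblem because $n'>\basecase=2^{2\gamma}$, which exceeds the sample count and makes the Chernoff failure probability $n^{-\Omega(1)}$ (here we use $r=\Omega(n)$, so $2^{\gamma}=\omega(\log^{c}n)$ for every constant $c$). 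In such a subproblem the records whose key occurs fewer than $\bar{c}n'/2^{\gamma}$ times number at most $c'2^{\gamma}\cdot\bar{c}n'/2^{\gamma}=c'\bar{c}n'$, so choosing $c'<1/\bar{c}$ (hence $c'<1$) and setting $\delta=1-c'\bar{c}>0$, at least $\delta n'$ records go into heavy buckets and are never recursed on. A union bound over the $O(\gamma 2^{\gamma})=n^{o(1)}$ subproblems makes this hold for all of them simultaneously \whp{}.

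Granting this, I would split the total work into the part outside base cases and the part in base cases. For the former, \cref{thm:msdours} already shows that the sampling, counting-sort, and dovetail-merge steps of a subproblem of size $n'$ together cost $O(n')$. Writing $M_j$ for the combined size of all subproblems at depth $j$, we have $M_0=n$ and, by the $\delta$-fraction removal, $M_{j+1}\le(1-\delta)M_j$, so $M_j\le(1-\delta)^jn$ and the total non-base work is $\sum_j O(M_j)=O(n/\delta)=O(n)$.

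The delicate quantity is the base-case work: a base case has size $n'\le 2^{2\gamma}$ and the comparison sort on it costs $O(n'\log n')=O(\gamma n')$, so the naive bound $\sum_{\text{base}}n'\le n$ would yield only $O(\gamma n)$. This is where the bound on the \emph{number} of live subproblems is essential: there are $O(\gamma 2^{\gamma})$ subproblems in all and each base case contributes at most $2^{2\gamma}$ records, so $\sum_{\text{base}}n'=O(\gamma 2^{3\gamma})$ and the base-case work is $O(\gamma^{2}2^{3\gamma})=O\!\left((\log r)\,2^{3\sqrt{\log r}}\right)$, which is $n^{o(1)}=o(n)$ in the standard RAM regime $r=n^{O(1)}$ (and whenever $\log r=o(\log^{2}n)$). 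Adding the two parts gives $O(n)$ work \whp{}.

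I expect the base-case accounting to be the main obstacle: the geometric-decay argument by itself leaves a spurious $\Theta(\sqrt{\log r})$ factor, and it is precisely the ``few distinct keys'' hypothesis — which caps not just the shrinkage rate but the number of simultaneously live subproblems — that removes it. A secondary point to get right is the high-probability bookkeeping: detection is \whp{} in the original $n$, not in $n'$, which is why the sample size is $\Theta(2^{\gamma}\log n)$ and why the base-case threshold $2^{2\gamma}$ together with $r=\Omega(n)$ is exactly what keeps sampling reliable down to the smallest non-base subproblem.
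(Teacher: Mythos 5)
Your proposal follows the same core route as the paper's proof: the sampling step detects every key with frequency at least $\bar{c}n'/2^{\gamma}$ in a size-$n'$ subproblem as heavy \whp{}, so with at most $c'2^{\gamma}$ distinct keys the light records in any subproblem number at most $c'2^{\gamma}\cdot\bar{c}n'/2^{\gamma}=c'\bar{c}n'$; choosing $c'<1/\bar{c}$ makes the per-level total size decay geometrically, and the non-base work sums to $O(n)$. Where you go beyond the paper is the base-case accounting, and you are right that it is a genuine issue: the paper's proof asserts that the cost at each level is linear in that level's total size, which silently drops the $O(\log n')=O(\gamma)$ factor paid by the comparison-sort base cases; the geometric series alone therefore only yields $O(n\sqrt{\log r})$ for that part. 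Your fix --- observing that light subproblems at a fixed depth have disjoint key sets, so at most $c'2^{\gamma}$ of them are nonempty per level, hence at most $O(\gamma 2^{3\gamma})$ records ever enter base cases and the base-case work is $O(\gamma^2 2^{3\gamma})=n^{o(1)}$ --- is correct and closes the gap, at the (mild) price of invoking $r=n^{O(1)}$ (or $\log r=o(\log^2 n)$), an assumption the paper already makes in its discussion of \cref{thm:msd}. The high-probability union bound over the $n^{o(1)}$ subproblems is also a detail the paper leaves implicit that you handle correctly.
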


\begin{proof}
  For simplicity, we use the same notation $\bar{c}$ as the previous proof.
  Similarly, keys with frequencies higher than $\bar{c}/2^{\gamma}$ will be detected as heavy.
  The maximum number of light records is no more than $c'2^\gamma\cdot \bar{c}n'/2^{\gamma}<c'\bar{c}n'$ for a subproblem of size $n'$.
  By setting $c'<1/\bar{c}$, the number of light records will shrink by a constant fraction.
  The total cost in a recursive level is linearly proportional to the total size of all subproblems.
  Since the number of light records decreases geometrically across levels, the total work is asymptotically bounded by the root level, which is $O(n)$.
\end{proof}

\section{Implementation Optimizations}\label{sec:app-opt}

We now discuss some optimizations we find useful when implementing \oursort.

\myparagraph{Overflow Bucket.}
Note that integer sort takes the input keys from a range from $0$ up to $r$, but the actual input keys may be in a smaller range.
%integer sorting algorithm itself does not know the exact value of $r$.
For example, although the key can be 32-bit integers, the actual range can be much smaller than $2^{32}$.
To take advantage of this, one possibility is to compute the maximal key by a parallel reduce, as in many existing implementations~\cite{kokot2017sorting,blelloch2020parlaylib}.
Here we discuss a simple yet effective optimization to avoid this computation.

Similar to the heavy/light separation of the keys, here we also use sampling to detect the key range.
In each subproblem, \oursort{} sets the upper key range $r'$ to be the largest value in the sample set $S$.
As such, \oursort{} skips leading zeros in $r'$ since creating buckets for these bits are wasteful.
Since the upper bound is computed by a sample set, \oursort{} creates an \emph{overflow bucket} to hold records
whose keys have at least a ``1'' for the skipped bits.
The size of the overflow bucket is usually very small.
We put the overflow bucket at the end, and comparison sort records in this bucket.
As such, we do not need to compute the maximal key in the input.

\hide{
Although we will save the input using $r$-bit integers,
they sometimes only use $k < r$ \xiaojun{Check if we use $k$ somewhere else.} bits,
and the top $k-r$ bits are all zeros.\laxman{$r-k$?}
This case is common when we need to sort integers in range $[0, n)$ where $n \leq 2^k$ and $k < r$.
To avoid sorting the leading zero bits, existing implementations~\cite{kokot2017sorting,blelloch2020parlaylib} often
(1) allow the user to pass the number of bits to be sorted, or
(2) compute the maximum value in the input, then integer sort from the most significant digit of the maximum value.
We adapt the first optimization to our interface and further generalize the second heuristic.

Instead of computing the maximum integer from the entire input, we use the maximum integer from the sample set,
and setup the MSD zones from the most significant unsorted bit (say, $k$) of it.
For example, if we need to sort two bits (i.e., 00, 01, 10, 11), and the maximum integer from the sample set is 1,
we can optimistically assume the most significant bit are all zeros,
and allocate two MSD zones (00, 01) instead.
However, this assumption could be wrong.
To deal with these ``overflow'' integers, we add an extra overflow bucket after the two MSD zones.
If an integer does not belong to the two MSD zones, we add it to the overflow bucket.
Since none of the samples belong to the overflow bucket, we can assume that its size is small.
We can simply use a comparison-based sort to sort the integers in this bucket.}

\myparagraph{Minimizing Data Movement.}
In \cref{alg:intsort}, for simplicity, in our description we assume we use the stable distribution mentioned in \cref{sec:counting} and write back the bucket to $A$ (see \cref{line:distribute}).
The second copy-back is not necessary and should be avoided, since the performance of sorting algorithms can suffer due to excessive data movement.
In our implementation, we actually maintain two arrays $A$ and $T$.
After the distribution, we use array $T$ for future execution, until we see another distribution or a dovetail merging, and we then distribute or merge the results back to $A$ (or the other array in the recursion).
This approach will reduce the total data movement.
%Interestingly, in this approach, different recursive subproblems, even in the same recursive level, may not using the same array, so we carefully design \oursort{} to deal with it.
Finally, if the data remains in $T$ at the end of the algorithm, we need to move them back to $A$ before finishing.

\section{Experiments}\label{sec:exp}

\begin{table}
    \centering\small
    \begin{tabular}{@{}l@{  }c@{  }@{  }c@{}c@{  }@{  }l@{}}
      \toprule
      \bf Name & \bf Stable & \bf In-place &\bf Type & \bf Notes\\
      \midrule
      \textbf{\oursort} & Yes & No & Integer &Our integer sort algorithm\\
      \textbf{PLIS} & Yes & No & Integer &ParlayLib integer sort~\cite{blelloch2020parlaylib}\\
      \textbf{IPS$^2$Ra} & No & Yes & Integer & IPS$^2$Ra integer sort~\cite{axtmann2022engineering}\\
      \textbf{RS} & No & Yes &  Integer & RegionsSort~\cite{obeya2019theoretically}\\
      \textbf{RD} & No & No & Integer & RADULS~\cite{kokot2018even}\\
      \textbf{PLSS} &  Y/N & Y/N & Comparison &ParlayLib sample sort~\cite{blelloch2020parlaylib}\\
      \textbf{IPS$^4$o} & No & Yes & Comparison & IPS$^4$o sample sort~\cite{axtmann2022engineering}\\
      \bottomrule
    \end{tabular}
    \caption{\small \textbf{Algorithms tested in our experiments.}
    %``Int'' = integer sort.
    %``Spl'' = sample sort.
    %The fully in-place implementation in \plss{} is stable, but the out-of-place one is not.
    %We use the in-place version since it is faster.
    There are two versions of \plss{}. Here we use the unstable but faster one.
    In-place means fully in-place ($o(n)$ extra memory).
    %The keys of \raduls need to be padded to multiples of 64 bits.
    %We remove it from the 32-bit experiments because it does not outperform other competitors after padding the keys to 64 bits.
    \label{tab:baseline}}%\vspace{-.1in}
  \end{table} 
\begin{table*}[htbp]
\small
\centering
\vspace{-.1em}
\begin{tabular}{cc|rr@{}rr|r@{  }r||rr@{}rrr|r@{  }r}

% DO NOT EDIT BELOW
      \multicolumn{2}{c|}{\multirow{3}[1]{*}{\textbf{Instances}}} & \multicolumn{6}{c||}{\textbf{32-bit key and 32-bit value pairs}} & \multicolumn{7}{c}{\textbf{64-bit key and 64-bit value pairs}} \\
      \multicolumn{2}{c|}{} & \multicolumn{4}{c|}{\textbf{Integer}} & \multicolumn{2}{@{  }c@{  }||}{\textbf{Comparison}} & \multicolumn{5}{c|}{\textbf{Integer}} & \multicolumn{2}{@{  }c@{  }}{\textbf{Comparison}} \\
      \multicolumn{2}{c|}{} & \multicolumn{1}{c}{\textbf{Ours}} & \multicolumn{1}{c@{  }}{\textbf{PLIS}} & \multicolumn{1}{@{  }c@{}}{\boldmath{}\textbf{IPS$^2$Ra}\unboldmath{}} & \multicolumn{1}{c|}{\textbf{RS}} & \multicolumn{1}{c@{}}{\textbf{PLSS}} & \multicolumn{1}{c@{  }||}{\boldmath{}\textbf{IPS$^4$o}\unboldmath{}} & \multicolumn{1}{c}{\textbf{Ours}} & \multicolumn{1}{c@{  }}{\textbf{PLIS}} & \multicolumn{1}{@{  }c@{}}{\boldmath{}\textbf{IPS$^2$Ra}\unboldmath{}} & \multicolumn{1}{c}{\textbf{RS}} & \multicolumn{1}{c|}{\textbf{RD}} & \multicolumn{1}{c@{}}{\textbf{PLSS}} & \multicolumn{1}{c@{ }}{\boldmath{}\textbf{IPS$^4$o}\unboldmath{}} \\
      \midrule
      \multicolumn{15}{l}{\textbf{Standard distributions:}} \\
      \midrule
      \multirow{5}[2]{*}{\begin{sideways}\textbf{Uniform}\end{sideways}} & \boldmath{}\textbf{10$\boldsymbol{^9}$}\unboldmath{} & \underline{.500} & .537  & .671  & .718  & 1.27  & .690  & \underline{.994} & 1.14  & 1.09  & 1.43  & 1.86  & 1.65  & 1.11 \\
            & \boldmath{}\textbf{10$\boldsymbol{^7}$}\unboldmath{} & \underline{.501} & .549  & .600  & .705  & 1.14  & .604  & \underline{1.03} & 1.15  & 1.06  & 1.71  & 1.86  & 1.47  & 1.05 \\
            & \boldmath{}\textbf{10$\boldsymbol{^5}$}\unboldmath{} & \underline{.478} & .542  & .595  & .696  & 1.08  & .653  & \underline{.859} & 1.22  & 1.01  & 1.36  & 2.66  & 1.35  & 1.10 \\
            & \boldmath{}\textbf{10$\boldsymbol{^3}$}\unboldmath{} & .506  & .505  & .538  & .613  & .805  & \underline{.432} & .795  & 1.41  & 1.66  & 1.54  & 3.23  & 1.01  & \underline{.759} \\
            & \textbf{10} & \underline{.308} & .707  & 1.13  & .438  & .959  & .456  & \underline{.581} & 1.93  & 3.78  & 1.26  & 8.25  & 1.12  & .850 \\
            %& \textbf{Avg.} & \underline{.451} & .564  & .680  & .624  & 1.04  & .557  & \underline{.835} & 1.34  & 1.49  & 1.45  & 3.01  & 1.30  & .963 \\
      \midrule
      \multirow{5}[2]{*}{\begin{sideways}\textbf{Exponential}\end{sideways}} & \textbf{1} & \underline{.526} & .536  & .574  & .711  & 1.11  & .671  & \underline{.976} & 1.16  & 1.04  & 1.39  & 2.28  & 1.33  & 1.16 \\
            & \textbf{2} & \underline{.502} & .546  & .577  & .711  & 1.12  & .661  & \underline{.919} & 1.22  & 1.05  & 1.39  & 2.45  & 1.35  & 1.19 \\
            & \textbf{5} & \underline{.435} & .567  & .583  & .705  & 1.11  & .612  & \underline{.819} & 1.52  & 1.03  & 1.41  & 2.44  & 1.35  & 1.03 \\
            & \textbf{7} & \underline{.419} & .582  & .554  & .708  & 1.08  & .609  & \underline{.782} & 1.69  & 1.01  & 1.48  & 2.53  & 1.32  & .972 \\
            & \textbf{10} & \underline{.402} & .603  & .560  & .682  & 1.09  & .561  & \underline{.763} & 1.87  & 1.05  & 1.53  & 2.61  & 1.28  & .930 \\
            %& \textbf{Avg.} & \underline{.454} & .566  & .569  & .703  & 1.10  & .622  & \underline{.848} & 1.47  & 1.04  & 1.44  & 2.46  & 1.33  & 1.05 \\
      \midrule
      \multirow{5}[2]{*}{\begin{sideways}\textbf{Zipfian}\end{sideways}} & \textbf{0.6} & \underline{.493} & .543  & .630  & .720  & 1.23  & .691  & \underline{1.00} & 1.14  & 1.11  & 1.43  & 1.82  & 1.63  & 1.12 \\
            & \textbf{0.8} & \underline{.524} & .542  & .619  & .710  & 1.20  & .670  & \underline{1.00} & 1.18  & 1.09  & 1.46  & 1.92  & 1.56  & 1.08 \\
            & \textbf{1} & .601  & .631  & .648  & .735  & 1.08  & \underline{.590} & \underline{1.04} & 1.44  & 1.71  & 1.53  & 3.10  & 1.30  & 1.08 \\
            & \textbf{1.2} & \underline{.516} & .832  & 1.07  & .709  & 1.10  & .743  & \underline{.918} & 1.95  & 3.29  & 1.72  & 5.85  & 1.45  & 1.22 \\
            & \textbf{1.5} & \underline{.446} & .946  & 1.90  & .695  & 1.48  & .939  & \underline{.883} & 2.56  & 6.57  & 1.74  & 6.78  & 1.99  & 1.65 \\
            %& \textbf{Avg.} & \underline{.514} & .681  & .877  & .714  & 1.21  & .718  & \underline{.967} & 1.57  & 2.14  & 1.57  & 3.36  & 1.57  & 1.21 \\
      \midrule
      \multicolumn{2}{c|}{\textbf{Avg.}} & \underline{.472} & .601  & .698  & .679  & 1.11  & .629  & \underline{.882} & 1.46  & 1.49  & 1.49  & 2.92  & 1.39  & 1.07 \\
      \midrule
      \multicolumn{15}{l}{\textbf{Adversarial distributions for integer sort with mixed heavy and light keys:}} \\
      \midrule
      \multirow{6}[2]{*}{\begin{sideways}\textbf{\bitsexponential{}}\end{sideways}} & \textbf{10} & 1.11  & .833  & 1.38  & .841  & .857  & \underline{.610} & 3.30  & 2.70  & 3.31  & 1.89  & 8.03  & 1.57  & \underline{1.08} \\
            & \textbf{30} & \underline{.643} & 1.08  & 3.18  & .775  & 1.27  & .908  & 2.75  & 4.04  & 7.90  & 2.30  & 12.8  & 1.36  & \underline{1.26} \\
            & \textbf{50} & \underline{.550} & 1.20  & 4.29  & .717  & 1.60  & 1.00  & 1.85  & 4.57  & 11.9  & 2.27  & 9.47  & 1.74  & \underline{1.45} \\
            & \textbf{100} & \underline{.512} & 1.31  & 5.89  & .664  & 1.99  & 1.48  & \underline{1.42} & 4.92  & 17.8  & 2.20  & 4.96  & 2.44  & 2.03 \\
            & \textbf{300} & \underline{.616} & 1.40  & 8.22  & .606  & 2.32  & 2.02  & \underline{1.32} & 5.25  & 27.9  & 2.12  & 4.15  & 3.26  & 3.31 \\
            & \textbf{Avg.} & \underline{.659} & 1.15  & 3.91  & .716  & 1.52  & 1.11  & 1.99  & 4.19  & 10.9  & 2.15  & 7.25  & 1.97  & \underline{1.68} \\
      \bottomrule
      \end{tabular}%

% DO NOT EDIT ABOVE

\caption{\small \textbf{Running time (in seconds) on synthetic data with $\boldsymbol{n=10^9}$.}\label{tab:synthetic}
%An instance is a distribution-parameter combination (e.g., \uniform{$10$}).
The instances of one distribution is ordered by increasing number of heavy records from top to bottom.
%Each instance include the running times on both (32-bit key, 32-bit value) pairs and (64-bit key, 64-bit value) pairs.
%The parameters in exponential distributions are multiplied by $10^{4}$.
The fastest running time on each input instance is underlined.
``Avg.'' $=$ geometric mean.
%``Overall'' $=$ the geometric mean on all instances.
The keys of \raduls need to be padded to multiples of 64 bits.
    We remove it from the 32-bit experiments because it is too slow after padding to 64 bits. %does not outperform other competitors after padding the keys to 64 bits.
}
%\vspace{-.5em}
\end{table*}%

We implemented \oursort{} in C++ and tested its performance.
\oursort{} uses ParlayLib~\cite{blelloch2020parlaylib} to support fork-join parallelism.
%We compile our code using \texttt{clang} version 14.0.6 with optimization flags.

\myfirstparagraph{Experimental Setup.}
We run our experiments on a 96-core machine (192 hyper-threads) with 4 $\times$ 2.1 GHz
Intel Xeon Gold 6252 CPUs (each with 36MB L3 cache) and 1.5TB of main memory.
%Our algorithm is written in C++ and implemented based on the primitives in ParlayLib~\cite{blelloch2020parlaylib}.
We always use \texttt{numactl -i all} to interleave the memory on all CPUs except for sequential tests.
We run each test six times and report the median of the last five runs.
All running times are in seconds.

\myparagraph{Parameter Selection.}
%We choose parameters roughly based on the theoretical analysis in \cref{sec:theory}.
\oursort{} mostly follows the theoretical analysis in \cref{sec:theory}, with the minor observation that using a variable radix width $\usedbits$ (based on $n$) slightly improved the overall performance.
Hence, \oursort{} uses $\usedbits = \log_2\!\cbrt{n}$ and capped by 8 and 12 (guided by $\Theta(\!\sqrt{\log r})$, leading to the same asymptotic work bound), similar to some existing IS algorithms such as \plis{}.
The base case threshold is $\theta=2^{14}$.
\hide{
we sort $\usedbits = \min\{12, \cbrt{n}\}$ bits in each recursive call,
and
We observed that using a variable radix width $\usedbits$ (based on $n$) slightly improved the overall performance
(which is also used in some existing IS algorithms such as \plis{}),
but the value is kept below $2\sqrt{\log r}$, which matches our theory.
The number of samples is selected accordingly based on $\usedbits$.}
%In the sampling step, we select $\cbrt{n}\log n$ samples.
%A key is heavy if it appears more than $\log n$ times.
%In the distributing step, we set the number of subarrays to (approximately) $\cbrt{n}$.
%\xiaojun{The acutal number is $n\times \text{sizeof(T)} / (\text{\#buckets}\times 5000)$.}

\myparagraph{Tested Algorithms.} We compare our \oursort{} with six \emph{baseline} parallel algorithms, including four MSD integer sorting algorithms
and two comparison-based algorithms (both are samplesorts).
We present their information in \cref{tab:baseline}.

\hide{
We tested seven algorithms, including five integer sorting algorithms (our \oursort{},
ParlayLib integer sort \plis{}~\cite{blelloch2020parlaylib}, \ipsra~\cite{axtmann2022engineering},
RegionsSort \regions{}~\cite{obeya2019theoretically}, and RADULS~\cite{kokot2018even}) and two sample sort algorithms
(ParlayLib sample sort~\cite{blelloch2020parlaylib} and \ipso~\cite{axtmann2017place}).
The notation of the algorithms is given in \cref{tab:baseline}.
}

\myparagraph{Synthetic Distributions.}
We used three \emp{standard} real-world distributions to test all algorithms:
\uniform{$\mu$} (uniform distribution with $\mu$ distinct keys),
\exponential{$\lambda$} (exponential distribution with parameter $10^{-5}\lambda$),
and \zipfian{$s$} (Zipfian distribution with parameter $s$),
We also designed a special distribution \emph{\bitsexponential{}} (\bitexp{$t$}),
which aims to simulate ``worst-case'' inputs for \oursort{}.
For \bitexp{$t$}, every bit of a key is a 0 with probability $1/t$, and 1 otherwise.
\bexp{} controls the distribution of the bitwise encoding (rather than frequency) of keys.
%which is suitable to test the MSD sorting algorithms.
With a large $t$, most subproblems will have a mix of light and heavy records,
which is an adversarial case for \oursort{} due to the necessary dovetail merging for almost every subproblem.
In fact, \bexp{} is generally hard for all MSD algorithms because it makes subproblem sizes uneven.
%We call the first three distributions \emp{standard distributions}.
%and the \bexp{} \emph{bad-case} distribution.
For simplicity,
we say a distribution is \emph{heavy} (or \emph{heavier} than another)
if it contains more duplicates, and \emph{light} (\emph{lighter}) otherwise.
\hide{
Therefore, \bexp{} illustrates how the distribution of the bits (rather than frequency of keys)
affects the performance of sorting algorithms.
}

We choose five parameters for each distribution to give
various ratios of heavy keys (see \cref{tab:synthetic}).
%The ratio of the number of heavy keys is presented in
For uniform distributions, we test $\mu = 10, 10^3, 10^5, 10^7, 10^9$.
For exponential distributions, we test $\lambda = 1, 2, 5, 7, 10$.
For Zipfian distributions, we test $s = 0.6, 0.8, 1, 1.2, 1.5$.
For \bexp{} distribution, we test $t = 10, 30, 50, 100, 300$.
We round keys to the closest integer for exponential and Zipfian distributions.
%Both $\lambda$ and $s$ are the standard parameters for the distributions, and keys are rounded to the closest integer.
%Since we are testing integer sort, keys are rounded to the closest integer and randomly mapped to another integer in $[r]$.
For the standard distributions, we map the keys to larger ranges (up to $2^{32}$ or $2^{64}$) to test the algorithms on the full range of $[r]$.
%to ensure the tested algorithms to work on all $\log r$ bits.
We use the geometric mean to compare the \emph{average} performance across multiple tests.
%This distribution is a bad case for parallel integer sort as it can make
%the number of elements in each bucket skewed, and thus leading to load imbalance.

%\bigskip
\subsection{Overall Performance on Synthetic Data}
We present the running time of all tested algorithms on synthetic distributions in \cref{tab:synthetic}
with different key lengths.

%\medskip
\myparagraph{Standard Distributions.}
%With 32-bit keys, \oursort{} is the fastest on \emph{all but two tests}.
With 32-bit keys, \oursort{} has the best overall performance, which is 1.27--1.47$\times$ faster than the baselines.
A heatmap for 32-bit results is also presented in \cref{fig:heatmap}.
As expected, the other integer sorting algorithms are usually competitive to \oursort{} on light distributions,
but can be much slower than \oursort{} on heavy distributions.
Take the best baseline on 32-bit keys, \plis{}, as an example. On light distributions,
It is only 2--8\% slower than \oursort{}, and is faster than comparison sorts.
%However, with more duplicates, \plis{} becomes slower than, or only has a small advantage over the best comparison sort \ipso{}.
However, on the heaviest several distributions in each category, \plis{} is almost always slower than the best comparison sort \ipso{},
making its overall advantage to \ipso{} small (only 4\%).
On the other hand, \oursort{} can take advantage of the heavy keys and outperforms \ipso{} in all but two tests (which we study in more details below).
Across all tests, \oursort{} can be up to 2.3$\times$ faster than \plis{}.
%and can also be slower than the comparison sort \ipso{} (e.g., \zipfian{1.2}).

With 64-bit keys, all algorithms perform slower than with 32-bit keys due to the increase in data size, but the integer sorting algorithms are affected more significantly.
This matches the theory that the work bounds of these algorithms increase with the value of $r$ (see \cref{sec:theory}).
%all other integer sort competitors are slower than \ipso{} on overage.
In this case, \emph{all} baseline integer sorting algorithms are slower than the fastest comparison sort \ipso{} on average,
and many of them cannot outperform \ipso{} even on some light distributions.
On the contrary, \oursort{} is still faster than both tested comparison sorts in almost all tests, and is 1.2--1.5$\times$ faster on average.
%on average, while \oursort{} is still 20\% faster.

Among all tests on the standard distributions, \oursort{} \emph{performs the best on all but three tests}, where samplesort \ipso{} is better.
Two of them are \uniform{$10^3$} for both 32- and 64-bits.
The reason should be that, based on our parameter $\gamma$, the average frequency of \uniform{$10^3$} is the threshold to distinguish heavy and light keys on the root level (about $10^6$).
We tested this on purpose to show the bad case for \oursort{}, since all ``light keys'' are reasonably heavy (right around the threshold),
but they have to go through another level of recursion and have to be processed in the dovetail merging step.
%therefore it incurs more expensive merging step.
%In fact, this is
%This matches our study in \cref{sec:exp-study},
%where the merging step takes 10\% or more running time on \uniform{$10^3$}.
This also illustrates the inherent difficulty of dealing with duplicates for parallel IS algorithms.
The last case where \oursort{} is slower than \ipso{} is \zipfian{1}, and \oursort{} is within 5\% slower.

%In the two cases that \ipso{} is slightly faster than \oursort{}, the difference is within 5\%.
%This is due to increasing the number bits $r$ will make integer sorting algorithms asymptotically more expensive.
%However, our algorithm is still faster than \ipso, with a 8.5$\times$ speedup.
%Our algorithm is slower than \ipso on only two instances,
%in which cases our algorithm is still very competitive (within 10\%).

In general, \oursort{} effectively takes advantage of different levels of heavy duplicates: it exhibits a similar trend as samplesort algorithms where the running time decreases with more duplicates. Meanwhile, as an integer sort, \oursort{} also has the benefit of being asymptotically more efficient than comparison sorts with realistic input key range $r=n^{O(1)}$.

\myparagraph{The Adversarial Distribution ({\em \emph{BExp}}).}
As mentioned above, we design the \emph{\bexp} distribution to study an adversarial case of \oursort{} and other MSD algorithms.
%Rather than changing the frequency distributions of each key as in real-world distributions,
%we change the distribution of the bits in the \bexp{} distribution.
%Since the probability of choosing ones is higher than choosing zeros in every bit, for a certain recursive level,
%there will be more elements in the high MSD zones (MSD $\geq 2^{\gamma-1}$) than low MSD zones (MSD $\geq 2^{\gamma-1}$).
The distribution incurs very different sizes of MSD zones for \emph{all} subproblems, which incurs load imbalance for all MSD algorithms.
It also generates a mix of light and heavy buckets in almost all subproblems,
which greatly increases the cost of the dovetail merging and offsets the benefit of identifying heavy keys.
%This a bad case for integer sort because ideally every MSD zone should have roughly the same size in the recursions,
%and having imbalance MSD zone sizes will break the load balance of the parallel tasks.
%This is less of a problem for samplesort algorithms since they are comparison-based.
%Indeed both \plis{} and \ipso{} have similar running time on \bexp{} as the standard distributions.

Almost all existing IS algorithms have poor performance on this distribution.
Interestingly, even the comparison-based sorting algorithms perform much worse when the distribution is very skewed.
%Their running time can increase by up to 10$\times$ comparing to other distributions (e.g., \ipsra{} with 64 bits).
Among the MSD algorithms, \ipsra{} and \raduls{} are significantly affected and can be up to 10$\times$ slower than the others.
\oursort{}, \plis{}, and \regions{} are lightly affected on 32-bit keys,
but in general can still be within 2$\times$ slower compared to their performance on standard distributions.
A possible explanation is that all the three algorithms use similar work-stealing scheduler (\texttt{cilk} for \regions{}, ParlayLib scheduler for \oursort{} and \plis{}), which supports dynamic load balancing.
This matches our scalability tests (reported in \ifconference{the full paper~\cite{dong2024parallelfull}}\iffullversion{in \cref{sec:app-scalability}}), where \ipsra{} and \raduls{} do not scale beyond 24 threads on this distribution.
\regions{} seems to be particularly resistant to the \bexp{} distribution.
One reason is that \regions{} is \emph{unstable}, and records with the same key do not need to be put in a specific order.
This sacrifice on stability has a side-effect to benefit from the highly skewed distributions.

Overall, \oursort{} still achieves the best performance  on 32-bit keys,
and is still faster than comparison sorts in most cases.
We believe this advantage comes from the removal of heavy keys from recursive problems, and thus alleviates the load-imbalance issue.
In addition, the performance of \oursort{} is also significantly improved by using our proposed \dtmerge{} algorithm.
As we will show in \cref{sec:exp-study}, as expected, on \bexp{} distributions,
the merging step is very expensive, which can take 50\% of the entire running time if we simply use parallel merge.
Our new algorithm accelerates this step by 1.7--2.8$\times$, making the overall running time much lower.
%\oursort{} is also lightly affected on 32-bit keys (except for \bexp{10}).
%Surprisingly, although our algorithm is also affected by this distribution,
%it is still comparable with sample sort algorithms, and can be even faster with 32 bits.
%We present more details in \cref{xx} \xiaojun{add the performance breakdown} to show that the \bexp{} distribution
%increase the running time in the dovetail merging step significantly.

\begin{table}[t]
\small
\centering
\begin{tabular}{@{}>{\bf}rr@{}r|rr@{  }rr|r@{  }r}

% DO NOT EDIT BELOW
      \multicolumn{2}{@{}c@{}}{\multirow{2}[1]{*}{\textbf{Instances}}} & \multicolumn{1}{c|}{\multirow{2}[1]{*}{\boldmath{}\textbf{$\boldsymbol{n}$}\unboldmath{}}} & \multicolumn{4}{c|}{\textbf{Integer}} & \multicolumn{2}{@{  }c@{  }}{\textbf{Comparison}} \\
      &      &       & \multicolumn{1}{c}{\textbf{Ours}} & \multicolumn{1}{c}{\textbf{PLIS}} & \multicolumn{1}{@{}c@{}}{\boldmath{}\textbf{IPS$^2$Ra}\unboldmath{}} & \multicolumn{1}{c|}{\textbf{RS}} & \multicolumn{1}{c@{}}{\textbf{PLSS}} & \multicolumn{1}{c}{\boldmath{}\textbf{IPS$^4$o}\unboldmath{}} \\
      \midrule
      \multicolumn{8}{c}{\textbf{Graph transpose}} \\
      \midrule
      &\multicolumn{1}{@{  }c}{\textbf{LJ}} & \textbf{69.0M} & \underline{.043}  & .043 & s.g.  & .065  & .080  & .159 \\
      &\multicolumn{1}{@{  }c}{\textbf{TW}} & \textbf{1.47B} & \underline{.888} & .942  & 3.24  & 1.05  & 1.57  & .891 \\
      &\multicolumn{1}{@{  }c}{\textbf{CM}} & \textbf{1.61B} & \underline{.782} & .945  & 1.41  & 1.07  & 1.84  & 1.12 \\
      &\multicolumn{1}{@{  }c}{\textbf{SD}} & \textbf{2.04B} & \underline{1.10} & 1.29  & 2.87  & 1.34  & 2.08  & 1.23 \\
      &\multicolumn{1}{@{  }c}{\textbf{CW}} & \textbf{42.6B} & 28.5  & 37.0  & 32.5  & s.g.  & 60.1  & \underline{24.6} \\
      &\multicolumn{2}{r|}{\textbf{Avg.}} & \underline{.985} & 1.13  & -     & -     & 1.96  & 1.37 \\
      \midrule
      \midrule
      \multicolumn{8}{c}{\textbf{Morton order}} \\
      \midrule
      \multirow{4}{*}{\begin{sideways}Real\par World\end{sideways}} &\multicolumn{1}{@{  }c}{\textbf{GL}} & \textbf{24.9M} & .026  & .028  & .259  & \underline{.024} & .028  & .171 \\
      &\multicolumn{1}{@{  }c}{\textbf{CM}} & \textbf{321M} & .184  & \underline{.178} & .343  & .209  & .327  & .338 \\
      &\multicolumn{1}{@{  }c}{\textbf{OSM}} & \textbf{2.77B} & 2.32  & 2.39  & 3.65  & s.g.  & 2.73  & \underline{1.53} \\
      &\multicolumn{2}{r|}{\textbf{Avg.}} & \underline{.223} & .227  & .687  & -     & .293  & .445 \\
      \midrule
      \multirow{4}{*}{\begin{sideways}Varden~\cite{gan2017hardness}\end{sideways}} &\multicolumn{1}{l}{\textbf{SS2d}} & \textbf{1B} & \underline{.498} & .557  & .662  & .634  & 1.27  & .775 \\
      &\multicolumn{1}{l}{\textbf{SS3d}} & \textbf{1B} & \underline{.512} & .568  & .778  & .611  & 1.12  & .754 \\
      &\multicolumn{1}{l}{\textbf{SS2d'}} & \textbf{2B} & \underline{.973} & 1.16  & 1.27  & 1.17  & 2.44  & 1.39 \\
      &\multicolumn{1}{l}{\textbf{SS2d'}} & \textbf{2B} & \underline{.990} & 1.60  & 2.86  & 1.17  & 2.30  & 1.96 \\
      &\multicolumn{2}{r|}{\textbf{Avg.}} & \underline{.704} & .875  & 1.17  & .854  & 1.68  & 1.12 \\
      \end{tabular}%

% DO NOT EDIT ABOVE

\caption{\small \textbf{Running time (in seconds) on multiple applications.}\label{tab:application}
We use 32-bit keys and 32-bit values.
See \cref{sec:application} for the detailed explanation on the keys and values.
The fastest running time on each instance is underlined.
``$n$'' $=$ input sizes.
``s.g.'' $=$ segmentation fault.
%``RW'' $=$ real-world datasets.
%``SS'' $=$ synthetic datasets generated by seed spreader~\cite{gan2017hardness}.
``Avg.'' $=$ the geometric mean on instances of the same application.
``-'' $=$ not applicable.\vspace{1em}}
\end{table}%

\subsection{Applications}\label{sec:application}
Integer sort is widely used in practice.
We choose two representative applications on graph and geometric processing to evaluate the performance of the tested algorithms.

\myparagraph{Graph Transpose.}
Given a directed graph $G=(V, E)$, the graph transpose problem is to generate $G^\matrixtrans=(V, E^\matrixtrans)$ where $E^\matrixtrans=\{(v, u): (u, v) \in E\}$.
Here we assume the edges are stored using the standard format of compressed sparse row.
To compute the transposed graph, we only need to stably integer sort all edges using the key as the second vertex $v$; here the vertices with larger degrees are the ``heavy keys''.
%all reversed edges have to be sorted by the first vertex.
Graph transpose is widely used as a subroutine in many graph algorithms~\cite{blelloch2016parallelism,gbbs2021,ji2018ispan,slota2014bfs}.
%By definition, we take all edges sorted by their first endpoint in a graph as input and sort them by their second endpoint.
%A vertex with a large outdegree will have more occurrences and should be a heavy key.
%For example, some strongly connected components algorithms~\cite{blelloch2016parallelism,gbbs2021,ji2018ispan,slota2014bfs} require to traverse the graph both in ``forward'' and ``backward'' directions, where the ``backward'' searches are performed on the transpose graph $G^\matrixtrans$ of the original graph $G$.

We tested five real-world directed graphs,
%~\cite{backstrom2006group,kwak2010twitter,cosmo50,wang2021geograph,webgraph},
including soc-LiveJournal (LJ)~\cite{backstrom2006group},
twitter (TW)~\cite{kwak2010twitter}, Cosmo50 (CM)~\cite{cosmo50,wang2021geograph},
sd\_arc (SD)~\cite{webgraph}, and clueweb (CW)~\cite{webgraph}.
The largest graph has 42.6 billion directed edges.
We save each edge as a (32-bit, 32-bit) integer pair since all vertex ids are within 32 bits.
On social networks (LJ and TW) and web graphs (SD and CW),
the degree distributions are more skewed (i.e., more duplicate keys).
%, which implies more heavy keys in our algorithm.
CM is a $k$-NN graph, and the degrees are more evenly-distributed.
We present our results in \cref{tab:application}.
%We provide the graph information and experimental results in \cref{tab:application}.
%\xiaojun{Add an column}.\yihan{need to double check once the heavy ratio is available.}

%On average, integer sorts generally perform well.
Compared to the best baseline (\plis{}), \oursort{} is always faster by up to 1.3$\times$.
The samplesort \ipso{} performs the best on CW,
but performs much slower on small graphs.  %(similar to what we observed in the scalability tests in \cref{sec:scalability}).
Overall, \oursort{} has the best performance across all graphs.
%It is 3\% faster than the second best competitors (\plis{}).
%Our algorithm is the fastest on two (CM and SD) out of five graphs.
%On CW, although \ipso is 15\% faster than our algorithm,
%our algorithm is still faster than all other integer sort algorithms,
%with a 1.14$\times$ speedup over the second best integer sort algorithms.

%\bigskip
\myparagraph{Morton Sort.}
For a $d$-dimensional point,
a z-value is calculated by interleaving the binary representations of each coordinate.
The process to sort the z-values of a set of points is called Morton sort,
which orders multidimensional data in one dimension while preserving the locality of data points.
It is widely used in geometry, graphics, and machine learning applications~\cite{blelloch2016just,wang2020theoretically,blelloch2022parallel,gu2013efficient,gu2015ray,li2012kann}.
%Morton order is often used in nearest neighbors search~\cite{blelloch2022parallel}, and linear algebra~\cite{bulucc2009parallel}.

We tested seven datasets for Morton sort and list the result in \cref{tab:application}.
Three are real-world datasets, including GeoLife (GL)~\cite{geolife,wang2021geograph},
Cosmo50 (CM)~\cite{cosmo50,wang2021geograph} and OpenStreetMap (OSM)~\cite{roadgraph}.
The other four are synthetic datasets by the widely-used generator Varden~\cite{gan2017hardness} that creates points with varying densities.
%enable varied densities of the points.
%These four datasets have various number of dimensions (two dimensions in SS2d and SS2d', three dimensions in SS3d and SS3d'),
%and number of input points ($10^9$ in SS2d and SS3d, $2\times 10^9$ in SS2d' and SS3d').
%\xiaojun{Describe the distributions once we have it.}
On the synthetic distribution with varying densities, \oursort{} outperforms all other baselines by 1.2--3.4$\times$.
On the three datasets with real-world points, existing integer sorts (\regions{} or \plis{}) achieved the best performance on GL and CM,
while \oursort{} is competitive (3--17\% slower). On OSM, none of the integer sorts outperform the samplesort \ipso{}.
Interestingly, \oursort{} still achieved the best average performance on the three datasets (\plis{} is also close).
This may be due to the reason that \oursort{} \emph{combines the benefit for both comparison and integer sorts},
and hence \emph{does not have a very bad worst-case performance}.
\hide{
Overall, our algorithm achieves the best performance on overage.
which is 13\% faster than the second best algorithm (\plis).
Our integer algorithm is competitive with other sorting algorithms except for OSM.
}

\begin{figure*}
  \small
  \centering
    \begin{minipage}{.66\columnwidth}\centering
      \includegraphics[width=\textwidth]{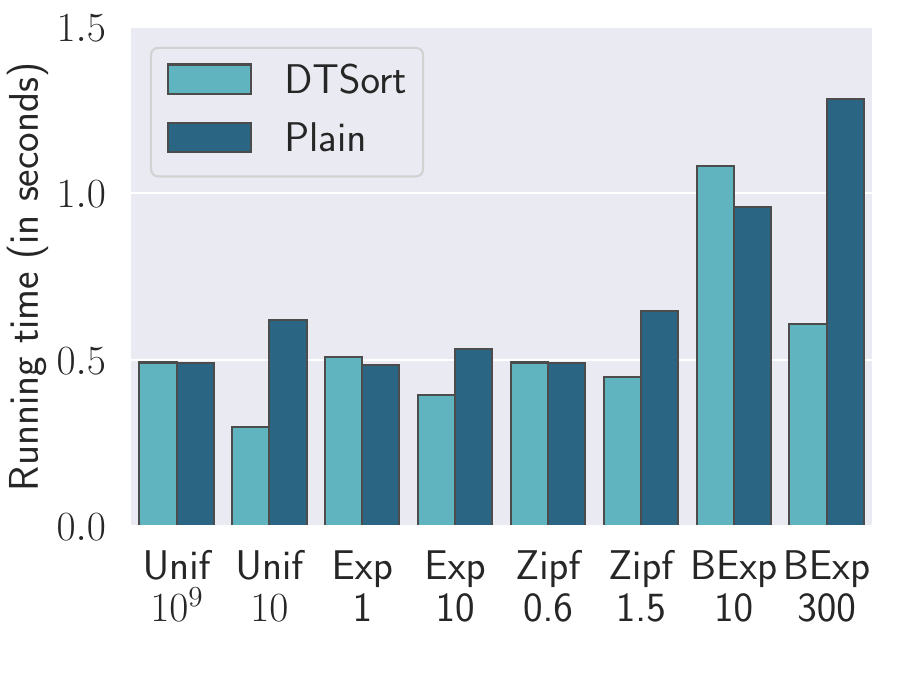}\\\vspace{-.8em}(a)
    \end{minipage}
    \begin{minipage}{.66\columnwidth}\centering
          \includegraphics[width=\textwidth]{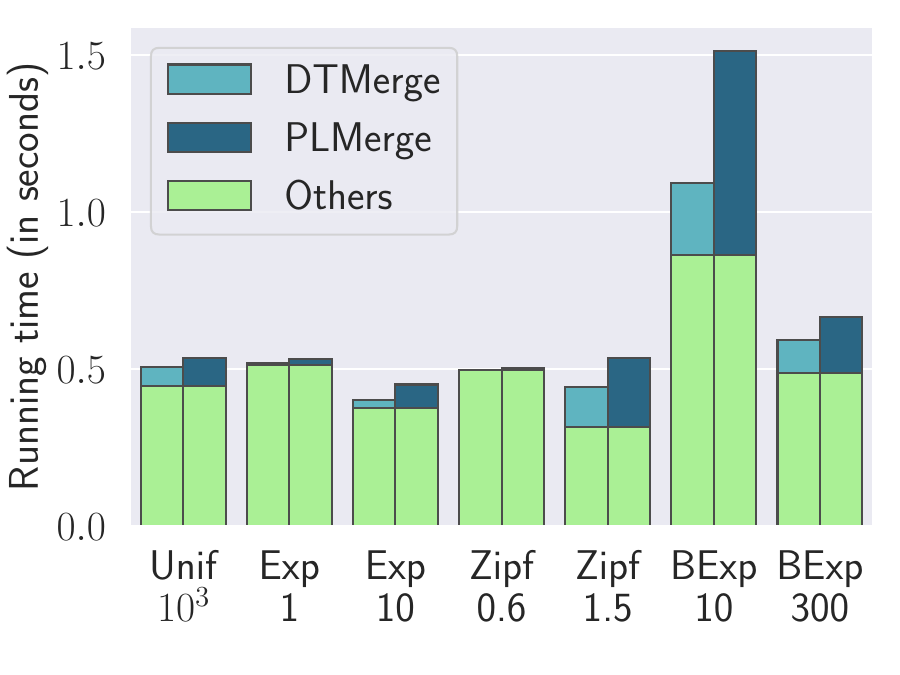}\\\vspace{-.8em}(c)
    \end{minipage}
    \begin{minipage}{.66\columnwidth}\centering
          \includegraphics[width=\textwidth]{output/scaling/32bit-zipfian-0.8.pdf}\\\vspace{-.3em}(e)
    \end{minipage}

    \begin{minipage}{.66\columnwidth}\centering
      \includegraphics[width=\textwidth]{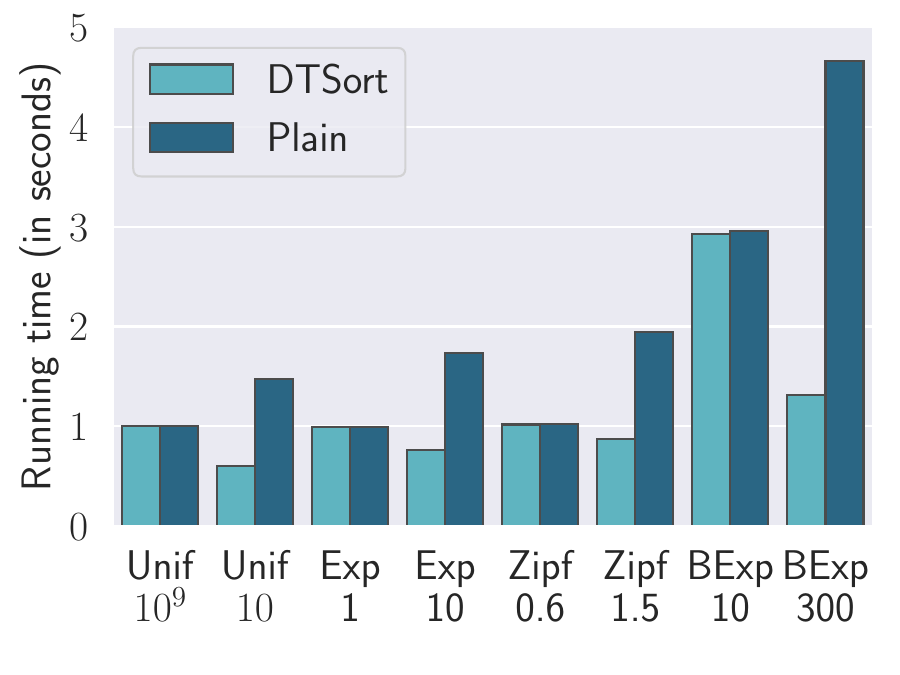}\\\vspace{-.8em}(b)
    \end{minipage}
    \begin{minipage}{.66\columnwidth}\centering
      \includegraphics[width=\textwidth]{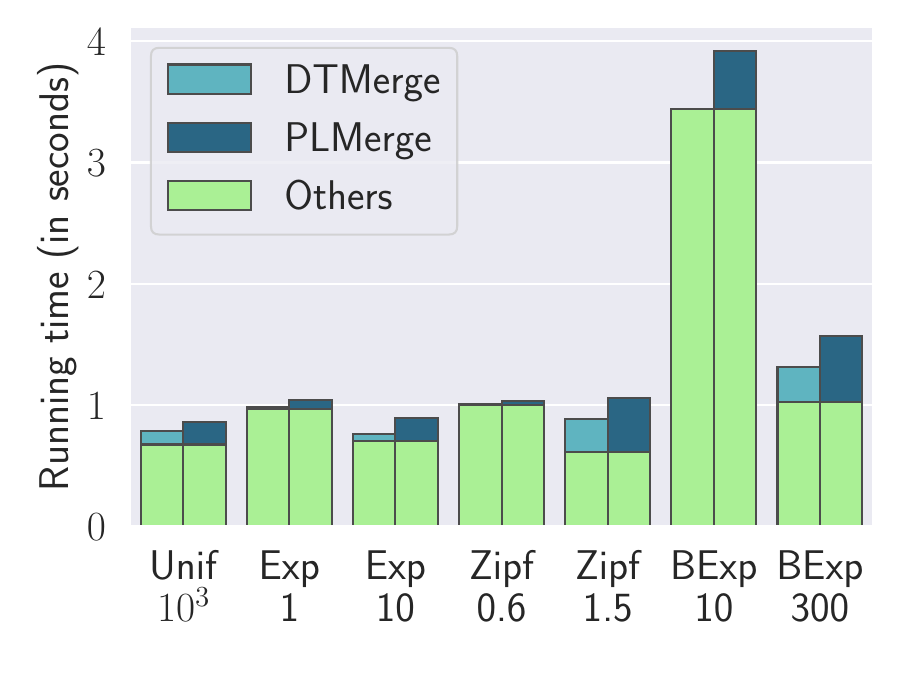}\\\vspace{-.8em}(d)
    \end{minipage}\begin{minipage}{.66\columnwidth}\centering
      \includegraphics[width=\textwidth]{output/increasing_input_size/32bit-zipfian-0.8.pdf}\\\vspace{-.3em}(f)
    \end{minipage}
  \caption{
  (a) and (b): Analysis for the performance of heavy-key detection.
  Numbers are running time (lower is better) with or without heavy-key detection. (a) is for 32-bit keys and (b) is for 64-bit keys.
  %Analysis for the performance of heavy-key detection (lower is better).
  (c) and (d): Analysis for the performance of dovetail merging.
  Numbers are running time (lower is better) using our dovetail merging algorithm or a baseline merging algorithm. 
  %Analysis for the performance of dovetail merging (lower is better).
  (c) is for 32-bit keys and (d) is for 64-bit keys.
  (e) and (f): Scalability (higher is better) with varying number of threads and running time (lower is better) with varying input sizes on 32-bit key and 32-bit value pairs on one instance: \zipfian{0.8}. Full analysis is given in \ifconference{the full paper~\cite{dong2024parallelfull}}\iffullversion{\cref{sec:app-scalability}}.
  Discussions are in \cref{sec:exp-study}.}
  \label{fig:scale}
\end{figure*}

\subsection{Performance Study}\label{sec:exp-study}
\myfirstparagraph{Heavy Keys Detection Evaluation.}
To understand the performance gain by detecting heavy keys and processing them separately,
we compare our implementation with (labeled as \oursort{}) and without (labeled as \impname{Plain}) heavy keys detection in \cref{fig:scale} (a) and (b).
We select eight representative distributions, which are the lightest and heaviest cases for each distribution respectively.

As discussed in \cref{sec:oursort}, the overhead of \oursort{} compared to \plain{} includes
sampling to detect heavy keys, and merging heavy keys with light keys.
On 32-bit inputs, when there are no or fewer heavy keys, \oursort{} is comparable with \plain{},
with at most 12\% performance loss on \bitexp{10}.
It indicates that the overhead of sampling is negligible compared to other steps.
With heavy duplicate keys, since heavy keys are gathered in separate buckets
and do not need to be sorted in later recursions,
\oursort{} outperforms \plain{} by up to 2.12$\times$ (on \bitexp{300}).
Overall, \oursort{} outperforms \plain{} by 25\% on average.
On 64-bit input, the performance improvement by using heavy keys detection becomes larger.
\oursort{} consistently outperforms \plain{}, with a 1.50$\times$ speedup on average.

\myparagraph{Dovetail Merge Evaluation.} In this section, we present an in-depth study of the dovetail merging step
that implements the heavy-light separation idea.
As mentioned in \cref{sec:merge}, our baseline solution (parallel merge), is correct and even asymptotically better, but it requires to move all records between the input array and a temporary array.
To achieve the best performance, we utilized the special property of this problem (i.e., very few heavy buckets) and proposed our \dtmerge{} algorithm in \cref{sec:merge}.
We select seven representative distributions to evaluate the merge performance.
For uniform distribution, we only use \uniform{$10^3$}, since no merge is needed in the lightest and heaviest cases.
For other distributions, we select the lightest and heaviest cases.

To illustrate the effectiveness of our \dtmerge{} approach, we compare it against a standard parallel merge in ParlayLib~\cite{blelloch2020parlaylib} (referred to as \plmerge{}) as the baseline.
When reporting the running time, we also run the algorithms \emph{without} the merging step in all recursive calls. This will not give a correct sorting result, but indicates the running time of the other steps in \oursort{}.
%We select the same distributions as in the heavy keys detection experiment.
We present the result for both 32 bits and 64 bits in \cref{fig:scale} (c) and (d).
Since they generally have similar trends, we discuss the experiments based on 32 bits.

On lighter distributions (e.g., \exponential{1} and \zipfian{0.6}), the overall improvement is marginal given that the dovetail merging step is fast anyway.
However, on heavier (more skewed) distributions,
this step can take a significant portion of the overall time if we use \plmerge{}, which is 17\% on \exponential{10} and 42\% on \zipfian{1.5}.
With our \dtmerge{}, this step only takes 6\% in \exponential{10} and 28\% for \zipfian{1.5}, improving the overall performance for both by about 20\%.

For \bexp{}, since almost all subproblems have a mix of heavy and light keys, the merging time is significant.
Using \plmerge{}, the merging step takes 27--43\% running time.
Our \dtmerge{} improved the performance of the merging step by up to 2.8$\times$,
leading to 1.1--1.38$\times$ faster sorting time.

\myparagraph{Scalability Tests.}
\oursort{} has good scalability to both the number of processors and input sizes.
%Indeed, most SOTA parallel sorting algorithms are carefully designed and have good scalability on most instances (see \cref{fig:scale}).
Due to page limit, we provide results on one distribution (\zipfian{0.8}) as an example in \cref{fig:scale} (e) and (f), and give the full discussions in \ifconference{the full paper~\cite{dong2024parallelfull}}\iffullversion{\cref{sec:app-scalability}}.

\section{Conclusion}
We study the theory and practice of parallel integer sort.
In theory, we show that a class of widely-used practical MSD integer sorting algorithms have $O(n\sqrt{\log r})$ work and low span.
%, where $n$ is the input size and $r$ is the integer key range.
The bounds are asymptotically lower than comparison sort for a large key range $r=n^{o(\log n)}$,
and provide a solid theoretical foundation for explaining the high performance of parallel integer sort.
In practice, we propose \oursort{} to overcome a common challenge in existing MSD algorithms:
the difficulty to detect and take advantage of duplicate keys.
\oursort{} combines algorithmic insights from both integer and comparison sorting algorithms,
and uses a special dovetail merging step to make it more efficient.
\oursort{} achieves the same $O(n\sqrt{\log r})$ work bound as other MSD integer sorting algorithms,
and has the same advantage of processing heavy duplicates efficiently as samplesort algorithms.
In our experiments, \oursort{} achieves competitive or better performance than state-of-the-art parallel integer and comparison sorting algorithms
on various synthetic and real-world datasets.  As a future work, we consider further improving the performance of parallel sorting algorithm by accelerating the distributing step, inspired by some existing work~\cite{obeya2019theoretically,axtmann2022engineering,gu2021parallel}.

%\clearpage
\bibliographystyle{ACM-Reference-Format}
\balance
%\bibliography{bib/strings, bib/main}

%%% -*-BibTeX-*-
%%% Do NOT edit. File created by BibTeX with style
%%% ACM-Reference-Format-Journals [18-Jan-2012].

\begin{thebibliography}{0}

%%% ====================================================================
%%% NOTE TO THE USER: you can override these defaults by providing
%%% customized versions of any of these macros before the \bibliography
%%% command.  Each of them MUST provide its own final punctuation,
%%% except for \shownote{}, \showDOI{}, and \showURL{}.  The latter two
%%% do not use final punctuation, in order to avoid confusing it with
%%% the Web address.
%%%
%%% To suppress output of a particular field, define its macro to expand
%%% to an empty string, or better, \unskip, like this:
%%%
%%% \newcommand{\showDOI}[1]{\unskip}   % LaTeX syntax
%%%
%%% \def \showDOI #1{\unskip}           % plain TeX syntax
%%%
%%% ====================================================================

\ifx \showCODEN    \undefined \def \showCODEN     #1{\unskip}     \fi
\ifx \showDOI      \undefined \def \showDOI       #1{#1}\fi
\ifx \showISBNx    \undefined \def \showISBNx     #1{\unskip}     \fi
\ifx \showISBNxiii \undefined \def \showISBNxiii  #1{\unskip}     \fi
\ifx \showISSN     \undefined \def \showISSN      #1{\unskip}     \fi
\ifx \showLCCN     \undefined \def \showLCCN      #1{\unskip}     \fi
\ifx \shownote     \undefined \def \shownote      #1{#1}          \fi
\ifx \showarticletitle \undefined \def \showarticletitle #1{#1}   \fi
\ifx \showURL      \undefined \def \showURL       {\relax}        \fi
% The following commands are used for tagged output and should be
% invisible to TeX
\providecommand\bibfield[2]{#2}
\providecommand\bibinfo[2]{#2}
\providecommand\natexlab[1]{#1}
\providecommand\showeprint[2][]{arXiv:#2}

\end{thebibliography}


\begin{thebibliography}{62}

%%% ====================================================================
%%% NOTE TO THE USER: you can override these defaults by providing
%%% customized versions of any of these macros before the \bibliography
%%% command.  Each of them MUST provide its own final punctuation,
%%% except for \shownote{}, \showDOI{}, and \showURL{}.  The latter two
%%% do not use final punctuation, in order to avoid confusing it with
%%% the Web address.
%%%
%%% To suppress output of a particular field, define its macro to expand
%%% to an empty string, or better, \unskip, like this:
%%%
%%% \newcommand{\showDOI}[1]{\unskip}   % LaTeX syntax
%%%
%%% \def \showDOI #1{\unskip}           % plain TeX syntax
%%%
%%% ====================================================================

\ifx \showCODEN    \undefined \def \showCODEN     #1{\unskip}     \fi
\ifx \showDOI      \undefined \def \showDOI       #1{#1}\fi
\ifx \showISBNx    \undefined \def \showISBNx     #1{\unskip}     \fi
\ifx \showISBNxiii \undefined \def \showISBNxiii  #1{\unskip}     \fi
\ifx \showISSN     \undefined \def \showISSN      #1{\unskip}     \fi
\ifx \showLCCN     \undefined \def \showLCCN      #1{\unskip}     \fi
\ifx \shownote     \undefined \def \shownote      #1{#1}          \fi
\ifx \showarticletitle \undefined \def \showarticletitle #1{#1}   \fi
\ifx \showURL      \undefined \def \showURL       {\relax}        \fi
% The following commands are used for tagged output and should be
% invisible to TeX
\providecommand\bibfield[2]{#2}
\providecommand\bibinfo[2]{#2}
\providecommand\natexlab[1]{#1}
\providecommand\showeprint[2][]{arXiv:#2}

\bibitem[roa(2010)]%
        {roadgraph}
 \bibinfo{year}{2010}\natexlab{}.
\newblock \bibinfo{title}{OpenStreetMap \copyright{} OpenStreetMap contributors}.
\newblock \bibinfo{howpublished}{\url{https://www.openstreetmap.org/}}.
\newblock


\bibitem[Adve and Hill(1990)]%
        {adve1990weak}
\bibfield{author}{\bibinfo{person}{Sarita~V Adve} {and} \bibinfo{person}{Mark~D Hill}.} \bibinfo{year}{1990}\natexlab{}.
\newblock \showarticletitle{Weak ordering—a new definition}. In \bibinfo{booktitle}{\emph{{ACM} International Symposium on Computer Architecture (ISCA)}}, Vol.~\bibinfo{volume}{18}. \bibinfo{pages}{2--14}.
\newblock


\bibitem[Albers and Hagerup(1997)]%
        {albers1997improved}
\bibfield{author}{\bibinfo{person}{Susanne Albers} {and} \bibinfo{person}{Torben Hagerup}.} \bibinfo{year}{1997}\natexlab{}.
\newblock \showarticletitle{Improved parallel integer sorting without concurrent writing}.
\newblock \bibinfo{journal}{\emph{Information and Computation}} \bibinfo{volume}{136}, \bibinfo{number}{1} (\bibinfo{year}{1997}), \bibinfo{pages}{25--51}.
\newblock


\bibitem[Andersson et~al\mbox{.}(1998)]%
        {andersson1998sorting}
\bibfield{author}{\bibinfo{person}{Arne Andersson}, \bibinfo{person}{Torben Hagerup}, \bibinfo{person}{Stefan Nilsson}, {and} \bibinfo{person}{Rajeev Raman}.} \bibinfo{year}{1998}\natexlab{}.
\newblock \showarticletitle{Sorting in Linear Time?}
\newblock \bibinfo{journal}{\emph{J. Computer and System Sciences}} \bibinfo{volume}{57}, \bibinfo{number}{1} (\bibinfo{year}{1998}), \bibinfo{pages}{74--93}.
\newblock


\bibitem[Arora et~al\mbox{.}(2001)]%
        {ABP01}
\bibfield{author}{\bibinfo{person}{N.~S. Arora}, \bibinfo{person}{R.~D. Blumofe}, {and} \bibinfo{person}{C.~G. Plaxton}.} \bibinfo{year}{2001}\natexlab{}.
\newblock \showarticletitle{Thread Scheduling for Multiprogrammed Multiprocessors}.
\newblock \bibinfo{journal}{\emph{Theory of Computing Systems (TOCS)}} \bibinfo{volume}{34}, \bibinfo{number}{2} (\bibinfo{date}{01 Apr} \bibinfo{year}{2001}).
\newblock


\bibitem[Axtmann et~al\mbox{.}(2022)]%
        {axtmann2022engineering}
\bibfield{author}{\bibinfo{person}{Michael Axtmann}, \bibinfo{person}{Sascha Witt}, \bibinfo{person}{Daniel Ferizovic}, {and} \bibinfo{person}{Peter Sanders}.} \bibinfo{year}{2022}\natexlab{}.
\newblock \showarticletitle{Engineering in-place (shared-memory) sorting algorithms}.
\newblock \bibinfo{journal}{\emph{{ACM} Transactions on Parallel Computing (TOPC)}} \bibinfo{volume}{9}, \bibinfo{number}{1} (\bibinfo{year}{2022}), \bibinfo{pages}{1--62}.
\newblock


\bibitem[Backstrom et~al\mbox{.}(2006)]%
        {backstrom2006group}
\bibfield{author}{\bibinfo{person}{Lars Backstrom}, \bibinfo{person}{Dan Huttenlocher}, \bibinfo{person}{Jon Kleinberg}, {and} \bibinfo{person}{Xiangyang Lan}.} \bibinfo{year}{2006}\natexlab{}.
\newblock \showarticletitle{Group formation in large social networks: membership, growth, and evolution}. In \bibinfo{booktitle}{\emph{ACM International Conference on Knowledge Discovery and Data Mining (SIGKDD)}}. \bibinfo{pages}{44--54}.
\newblock


\bibitem[Bhatt et~al\mbox{.}(1991)]%
        {bhatt1991improved}
\bibfield{author}{\bibinfo{person}{Pramod Chandra~P Bhatt}, \bibinfo{person}{Krzysztof Diks}, \bibinfo{person}{Torben Hagerup}, \bibinfo{person}{VC Prasad}, \bibinfo{person}{Tomasz Radzik}, {and} \bibinfo{person}{Sanjeev Saxena}.} \bibinfo{year}{1991}\natexlab{}.
\newblock \showarticletitle{Improved deterministic parallel integer sorting}.
\newblock \bibinfo{journal}{\emph{Information and Computation}} \bibinfo{volume}{94}, \bibinfo{number}{1} (\bibinfo{year}{1991}), \bibinfo{pages}{29--47}.
\newblock


\bibitem[Bhattacharya et~al\mbox{.}(2022)]%
        {bhattacharya2022cache}
\bibfield{author}{\bibinfo{person}{Arghya Bhattacharya}, \bibinfo{person}{Abiyaz Chowdhury}, \bibinfo{person}{Helen Xu}, \bibinfo{person}{Rathish Das}, \bibinfo{person}{Rezaul~A Chowdhury}, \bibinfo{person}{Rob Johnson}, \bibinfo{person}{Rishab Nithyanand}, {and} \bibinfo{person}{Michael~A Bender}.} \bibinfo{year}{2022}\natexlab{}.
\newblock \showarticletitle{When Are Cache-Oblivious Algorithms Cache Adaptive? A Case Study of Matrix Multiplication and Sorting}. In \bibinfo{booktitle}{\emph{European Symposium on Algorithms (ESA)}}.
\newblock


\bibitem[Blelloch et~al\mbox{.}(2020a)]%
        {blelloch2020parlaylib}
\bibfield{author}{\bibinfo{person}{Guy~E. Blelloch}, \bibinfo{person}{Daniel Anderson}, {and} \bibinfo{person}{Laxman Dhulipala}.} \bibinfo{year}{2020}\natexlab{a}.
\newblock \showarticletitle{ParlayLib --- a toolkit for parallel algorithms on shared-memory multicore machines}. In \bibinfo{booktitle}{\emph{{ACM} Symposium on Parallelism in Algorithms and Architectures (SPAA)}}. \bibinfo{pages}{507--509}.
\newblock


\bibitem[Blelloch and Dobson(2022)]%
        {blelloch2022parallel}
\bibfield{author}{\bibinfo{person}{Guy~E Blelloch} {and} \bibinfo{person}{Magdalen Dobson}.} \bibinfo{year}{2022}\natexlab{}.
\newblock \showarticletitle{Parallel Nearest Neighbors in Low Dimensions with Batch Updates}. In \bibinfo{booktitle}{\emph{Algorithm Engineering and Experiments (ALENEX)}}. SIAM, \bibinfo{pages}{195--208}.
\newblock


\bibitem[Blelloch et~al\mbox{.}(2016)]%
        {blelloch2016just}
\bibfield{author}{\bibinfo{person}{Guy~E. Blelloch}, \bibinfo{person}{Daniel Ferizovic}, {and} \bibinfo{person}{Yihan Sun}.} \bibinfo{year}{2016}\natexlab{}.
\newblock \showarticletitle{Just Join for Parallel Ordered Sets}. In \bibinfo{booktitle}{\emph{{ACM} Symposium on Parallelism in Algorithms and Architectures (SPAA)}}.
\newblock


\bibitem[Blelloch et~al\mbox{.}(2012)]%
        {blelloch2012internally}
\bibfield{author}{\bibinfo{person}{Guy~E. Blelloch}, \bibinfo{person}{Jeremy~T. Fineman}, \bibinfo{person}{Phillip~B. Gibbons}, {and} \bibinfo{person}{Julian Shun}.} \bibinfo{year}{2012}\natexlab{}.
\newblock \showarticletitle{Internally deterministic parallel algorithms can be fast}. In \bibinfo{booktitle}{\emph{{ACM} Symposium on Principles and Practice of Parallel Programming (PPOPP)}}. \bibinfo{pages}{181--192}.
\newblock


\bibitem[Blelloch et~al\mbox{.}(2020b)]%
        {blelloch2020optimal}
\bibfield{author}{\bibinfo{person}{Guy~E. Blelloch}, \bibinfo{person}{Jeremy~T. Fineman}, \bibinfo{person}{Yan Gu}, {and} \bibinfo{person}{Yihan Sun}.} \bibinfo{year}{2020}\natexlab{b}.
\newblock \showarticletitle{Optimal parallel algorithms in the binary-forking model}. In \bibinfo{booktitle}{\emph{{ACM} Symposium on Parallelism in Algorithms and Architectures (SPAA)}}. \bibinfo{pages}{89--102}.
\newblock


\bibitem[Blelloch et~al\mbox{.}(2010)]%
        {blelloch2010low}
\bibfield{author}{\bibinfo{person}{Guy~E. Blelloch}, \bibinfo{person}{Phillip~B. Gibbons}, {and} \bibinfo{person}{Harsha~Vardhan Simhadri}.} \bibinfo{year}{2010}\natexlab{}.
\newblock \showarticletitle{Low depth cache-oblivious algorithms}. In \bibinfo{booktitle}{\emph{{ACM} Symposium on Parallelism in Algorithms and Architectures (SPAA)}}.
\newblock


\bibitem[Blelloch et~al\mbox{.}(2020c)]%
        {blelloch2016parallelism}
\bibfield{author}{\bibinfo{person}{Guy~E. Blelloch}, \bibinfo{person}{Yan Gu}, \bibinfo{person}{Julian Shun}, {and} \bibinfo{person}{Yihan Sun}.} \bibinfo{year}{2020}\natexlab{c}.
\newblock \showarticletitle{Parallelism in Randomized Incremental Algorithms}.
\newblock \bibinfo{journal}{\emph{J. {ACM}}} \bibinfo{volume}{67}, \bibinfo{number}{5} (\bibinfo{year}{2020}), \bibinfo{pages}{1--27}.
\newblock


\bibitem[Blelloch et~al\mbox{.}(1991)]%
        {Blelloch91}
\bibfield{author}{\bibinfo{person}{Guy~E. Blelloch}, \bibinfo{person}{Charles~E. Leiserson}, \bibinfo{person}{Bruce~M. Maggs}, \bibinfo{person}{C.~Greg Plaxton}, \bibinfo{person}{Stephen~J. Smith}, {and} \bibinfo{person}{Marco Zagha}.} \bibinfo{year}{1991}\natexlab{}.
\newblock \showarticletitle{A Comparison of Sorting Algorithms for the {Connection Machine CM}-2}. In \bibinfo{booktitle}{\emph{{ACM} Symposium on Parallelism in Algorithms and Architectures (SPAA)}}.
\newblock


\bibitem[Blumofe and Leiserson(1998)]%
        {BL98}
\bibfield{author}{\bibinfo{person}{Robert~D. Blumofe} {and} \bibinfo{person}{Charles~E. Leiserson}.} \bibinfo{year}{1998}\natexlab{}.
\newblock \showarticletitle{Space-Efficient Scheduling of Multithreaded Computations}.
\newblock \bibinfo{journal}{\emph{{SIAM} J. on Computing}} \bibinfo{volume}{27}, \bibinfo{number}{1} (\bibinfo{year}{1998}).
\newblock


\bibitem[Cho et~al\mbox{.}(2015)]%
        {cho2015paradis}
\bibfield{author}{\bibinfo{person}{Minsik Cho}, \bibinfo{person}{Daniel Brand}, \bibinfo{person}{Rajesh Bordawekar}, \bibinfo{person}{Ulrich Finkler}, \bibinfo{person}{Vincent Kulandaisamy}, {and} \bibinfo{person}{Ruchir Puri}.} \bibinfo{year}{2015}\natexlab{}.
\newblock \showarticletitle{PARADIS: an efficient parallel algorithm for in-place radix sort}.
\newblock \bibinfo{journal}{\emph{Proceedings of the VLDB Endowment (PVLDB)}} \bibinfo{volume}{8}, \bibinfo{number}{12} (\bibinfo{year}{2015}), \bibinfo{pages}{1518--1529}.
\newblock


\bibitem[Cormen et~al\mbox{.}(2009)]%
        {CLRS}
\bibfield{author}{\bibinfo{person}{Thomas~H. Cormen}, \bibinfo{person}{Charles~E. Leiserson}, \bibinfo{person}{Ronald~L. Rivest}, {and} \bibinfo{person}{Clifford Stein}.} \bibinfo{year}{2009}\natexlab{}.
\newblock \bibinfo{booktitle}{\emph{Introduction to Algorithms (3rd edition)}}.
\newblock \bibinfo{publisher}{MIT Press}.
\newblock


\bibitem[Dhulipala et~al\mbox{.}(2021)]%
        {gbbs2021}
\bibfield{author}{\bibinfo{person}{Laxman Dhulipala}, \bibinfo{person}{Guy~E. Blelloch}, {and} \bibinfo{person}{Julian Shun}.} \bibinfo{year}{2021}\natexlab{}.
\newblock \showarticletitle{Theoretically efficient parallel graph algorithms can be fast and scalable}.
\newblock \bibinfo{journal}{\emph{{ACM} Transactions on Parallel Computing (TOPC)}} \bibinfo{volume}{8}, \bibinfo{number}{1} (\bibinfo{year}{2021}), \bibinfo{pages}{1--70}.
\newblock


\bibitem[Dong et~al\mbox{.}(2023a)]%
        {integersortcode}
\bibfield{author}{\bibinfo{person}{Xiaojun Dong}, \bibinfo{person}{Laxman Dhulipala}, \bibinfo{person}{Yan Gu}, {and} \bibinfo{person}{Yihan Sun}.} \bibinfo{year}{2023}\natexlab{a}.
\newblock \bibinfo{title}{DovetailSort: A Parallel Integer Sort Algorithm}.
\newblock \bibinfo{howpublished}{\url{https://github.com/ucrparlay/DovetailSort}}.
\newblock


\bibitem[Dong et~al\mbox{.}(2023b)]%
        {dong2023high}
\bibfield{author}{\bibinfo{person}{Xiaojun Dong}, \bibinfo{person}{Yunshu Wu}, \bibinfo{person}{Zhongqi Wang}, \bibinfo{person}{Laxman Dhulipala}, \bibinfo{person}{Yan Gu}, {and} \bibinfo{person}{Yihan Sun}.} \bibinfo{year}{2023}\natexlab{b}.
\newblock \showarticletitle{High-Performance and Flexible Parallel Algorithms for Semisort and Related Problems}. In \bibinfo{booktitle}{\emph{{ACM} Symposium on Parallelism in Algorithms and Architectures (SPAA)}}.
\newblock


\bibitem[Gan and Tao(2017)]%
        {gan2017hardness}
\bibfield{author}{\bibinfo{person}{Junhao Gan} {and} \bibinfo{person}{Yufei Tao}.} \bibinfo{year}{2017}\natexlab{}.
\newblock \showarticletitle{On the hardness and approximation of Euclidean DBSCAN}.
\newblock \bibinfo{journal}{\emph{ACM Transactions on Database Systems (TODS)}} \bibinfo{volume}{42}, \bibinfo{number}{3} (\bibinfo{year}{2017}), \bibinfo{pages}{1--45}.
\newblock


\bibitem[Gharachorloo et~al\mbox{.}(1990)]%
        {gharachorloo1990memory}
\bibfield{author}{\bibinfo{person}{Kourosh Gharachorloo}, \bibinfo{person}{Daniel Lenoski}, \bibinfo{person}{James Laudon}, \bibinfo{person}{Phillip Gibbons}, \bibinfo{person}{Anoop Gupta}, {and} \bibinfo{person}{John Hennessy}.} \bibinfo{year}{1990}\natexlab{}.
\newblock \showarticletitle{Memory consistency and event ordering in scalable shared-memory multiprocessors}. In \bibinfo{booktitle}{\emph{{ACM} International Symposium on Computer Architecture (ISCA)}}, Vol.~\bibinfo{volume}{18}. \bibinfo{publisher}{ACM New York, NY, USA}, \bibinfo{pages}{15--26}.
\newblock


\bibitem[Goodrich and Jacob(2023)]%
        {goodrich2023optimal}
\bibfield{author}{\bibinfo{person}{Michael~T Goodrich} {and} \bibinfo{person}{Riko Jacob}.} \bibinfo{year}{2023}\natexlab{}.
\newblock \showarticletitle{Optimal Parallel Sorting with Comparison Errors}. In \bibinfo{booktitle}{\emph{{ACM} Symposium on Parallelism in Algorithms and Architectures (SPAA)}}. \bibinfo{pages}{355--365}.
\newblock


\bibitem[Gries and Mills(1981)]%
        {gries1981swapping}
\bibfield{author}{\bibinfo{person}{David Gries} {and} \bibinfo{person}{Harlan Mills}.} \bibinfo{year}{1981}\natexlab{}.
\newblock \bibinfo{booktitle}{\emph{Swapping sections}}.
\newblock \bibinfo{type}{{T}echnical {R}eport}. \bibinfo{institution}{Cornell University}.
\newblock


\bibitem[Gu et~al\mbox{.}(2015a)]%
        {gu2015ray}
\bibfield{author}{\bibinfo{person}{Yan Gu}, \bibinfo{person}{Yong He}, {and} \bibinfo{person}{Guy~E Blelloch}.} \bibinfo{year}{2015}\natexlab{a}.
\newblock \showarticletitle{Ray specialized contraction on bounding volume hierarchies}. In \bibinfo{booktitle}{\emph{Computer Graphics Forum}}, Vol.~\bibinfo{volume}{34}. \bibinfo{pages}{309--318}.
\newblock


\bibitem[Gu et~al\mbox{.}(2013)]%
        {gu2013efficient}
\bibfield{author}{\bibinfo{person}{Yan Gu}, \bibinfo{person}{Yong He}, \bibinfo{person}{Kayvon Fatahalian}, {and} \bibinfo{person}{Guy Blelloch}.} \bibinfo{year}{2013}\natexlab{}.
\newblock \showarticletitle{Efficient {BVH} construction via approximate agglomerative clustering}. In \bibinfo{booktitle}{\emph{High-Performance Graphics (HPG)}}.
\newblock


\bibitem[Gu et~al\mbox{.}(2022)]%
        {gu2022analysis}
\bibfield{author}{\bibinfo{person}{Yan Gu}, \bibinfo{person}{Zachary Napier}, {and} \bibinfo{person}{Yihan Sun}.} \bibinfo{year}{2022}\natexlab{}.
\newblock \showarticletitle{Analysis of Work-Stealing and Parallel Cache Complexity}. In \bibinfo{booktitle}{\emph{{SIAM} Symposium on Algorithmic Principles of Computer Systems (APOCS)}}. SIAM, \bibinfo{pages}{46--60}.
\newblock


\bibitem[Gu et~al\mbox{.}(2021)]%
        {gu2021parallel}
\bibfield{author}{\bibinfo{person}{Yan Gu}, \bibinfo{person}{Omar Obeya}, {and} \bibinfo{person}{Julian Shun}.} \bibinfo{year}{2021}\natexlab{}.
\newblock \showarticletitle{Parallel In-Place Algorithms: Theory and Practice}. In \bibinfo{booktitle}{\emph{{SIAM} Symposium on Algorithmic Principles of Computer Systems (APOCS)}}. \bibinfo{pages}{114--128}.
\newblock


\bibitem[Gu et~al\mbox{.}(2015b)]%
        {gu2015top}
\bibfield{author}{\bibinfo{person}{Yan Gu}, \bibinfo{person}{Julian Shun}, \bibinfo{person}{Yihan Sun}, {and} \bibinfo{person}{Guy~E. Blelloch}.} \bibinfo{year}{2015}\natexlab{b}.
\newblock \showarticletitle{A Top-Down Parallel Semisort}. In \bibinfo{booktitle}{\emph{{ACM} Symposium on Parallelism in Algorithms and Architectures (SPAA)}}. \bibinfo{pages}{24--34}.
\newblock


\bibitem[Hagerup(1991)]%
        {hagerup1991constant}
\bibfield{author}{\bibinfo{person}{Torben Hagerup}.} \bibinfo{year}{1991}\natexlab{}.
\newblock \showarticletitle{Constant-time parallel integer sorting}. In \bibinfo{booktitle}{\emph{{ACM} Symposium on Theory of Computing (STOC)}}. \bibinfo{pages}{299--306}.
\newblock


\bibitem[Inoue and Taura(2015)]%
        {inoue2015simd}
\bibfield{author}{\bibinfo{person}{Hiroshi Inoue} {and} \bibinfo{person}{Kenjiro Taura}.} \bibinfo{year}{2015}\natexlab{}.
\newblock \showarticletitle{SIMD-and cache-friendly algorithm for sorting an array of structures}.
\newblock \bibinfo{journal}{\emph{Proceedings of the VLDB Endowment}} \bibinfo{volume}{8}, \bibinfo{number}{11} (\bibinfo{year}{2015}), \bibinfo{pages}{1274--1285}.
\newblock


\bibitem[Ji et~al\mbox{.}(2018)]%
        {ji2018ispan}
\bibfield{author}{\bibinfo{person}{Yuede Ji}, \bibinfo{person}{Hang Liu}, {and} \bibinfo{person}{H~Howie Huang}.} \bibinfo{year}{2018}\natexlab{}.
\newblock \showarticletitle{ispan: Parallel identification of strongly connected components with spanning trees}. In \bibinfo{booktitle}{\emph{International Conference for High Performance Computing, Networking, Storage, and Analysis (SC)}}. IEEE, \bibinfo{pages}{731--742}.
\newblock


\bibitem[Katajainen and Tr{\"a}ff(1997)]%
        {katajainen1997meticulous}
\bibfield{author}{\bibinfo{person}{Jyrki Katajainen} {and} \bibinfo{person}{Jesper~Larsson Tr{\"a}ff}.} \bibinfo{year}{1997}\natexlab{}.
\newblock \showarticletitle{A meticulous analysis of mergesort programs}. In \bibinfo{booktitle}{\emph{Italian Conference on Algorithms and Complexity (ICAC)}}. Springer, \bibinfo{pages}{217--228}.
\newblock


\bibitem[Kokot et~al\mbox{.}(2017)]%
        {kokot2017sorting}
\bibfield{author}{\bibinfo{person}{Marek Kokot}, \bibinfo{person}{Sebastian Deorowicz}, {and} \bibinfo{person}{Agnieszka Debudaj-Grabysz}.} \bibinfo{year}{2017}\natexlab{}.
\newblock \showarticletitle{Sorting data on ultra-large scale with RADULS: New incarnation of radix sort}. In \bibinfo{booktitle}{\emph{Beyond Databases, Architectures and Structures (BDAS)}}. Springer, \bibinfo{pages}{235--245}.
\newblock


\bibitem[Kokot et~al\mbox{.}(2018)]%
        {kokot2018even}
\bibfield{author}{\bibinfo{person}{Marek Kokot}, \bibinfo{person}{Sebastian Deorowicz}, {and} \bibinfo{person}{Maciej D{\l}ugosz}.} \bibinfo{year}{2018}\natexlab{}.
\newblock \showarticletitle{Even faster sorting of (not only) integers}. In \bibinfo{booktitle}{\emph{Man-Machine Interactions 5: 5th International Conference on Man-Machine Interactions, ICMMI 2017 Held at Krak{\'o}w, Poland, October 3-6, 2017}}. Springer, \bibinfo{pages}{481--491}.
\newblock


\bibitem[Kwak et~al\mbox{.}(2010)]%
        {kwak2010twitter}
\bibfield{author}{\bibinfo{person}{Haewoon Kwak}, \bibinfo{person}{Changhyun Lee}, \bibinfo{person}{Hosung Park}, {and} \bibinfo{person}{Sue Moon}.} \bibinfo{year}{2010}\natexlab{}.
\newblock \showarticletitle{What is Twitter, a social network or a news media?}. In \bibinfo{booktitle}{\emph{International World Wide Web Conference (WWW)}}. \bibinfo{pages}{591--600}.
\newblock


\bibitem[Kwon et~al\mbox{.}(2010)]%
        {cosmo50}
\bibfield{author}{\bibinfo{person}{YongChul Kwon}, \bibinfo{person}{Dylan Nunley}, \bibinfo{person}{Jeffrey~P Gardner}, \bibinfo{person}{Magdalena Balazinska}, \bibinfo{person}{Bill Howe}, {and} \bibinfo{person}{Sarah Loebman}.} \bibinfo{year}{2010}\natexlab{}.
\newblock \showarticletitle{Scalable clustering algorithm for N-body simulations in a shared-nothing cluster}. In \bibinfo{booktitle}{\emph{International Conference on Scientific and Statistical Database Management}}. Springer, \bibinfo{pages}{132--150}.
\newblock


\bibitem[Lee et~al\mbox{.}(2002)]%
        {lee2002partitioned}
\bibfield{author}{\bibinfo{person}{Shin-Jae Lee}, \bibinfo{person}{Minsoo Jeon}, \bibinfo{person}{Dongseung Kim}, {and} \bibinfo{person}{Andrew Sohn}.} \bibinfo{year}{2002}\natexlab{}.
\newblock \showarticletitle{Partitioned parallel radix sort}.
\newblock \bibinfo{journal}{\emph{J. Parallel Distrib. Comput.}} \bibinfo{volume}{62}, \bibinfo{number}{4} (\bibinfo{year}{2002}), \bibinfo{pages}{656--668}.
\newblock


\bibitem[Li et~al\mbox{.}(2012)]%
        {li2012kann}
\bibfield{author}{\bibinfo{person}{Shengren Li}, \bibinfo{person}{Lance~C Simons}, \bibinfo{person}{Jagaseesh~Bhaskar Pakaravoor}, \bibinfo{person}{Fatemeh Abbasinejad}, \bibinfo{person}{John~D Owens}, {and} \bibinfo{person}{Nina Amenta}.} \bibinfo{year}{2012}\natexlab{}.
\newblock \showarticletitle{{$k$ANN} on the {GPU} with shifted sorting}. In \bibinfo{booktitle}{\emph{High-Performance Graphics (HPG)}}.
\newblock


\bibitem[Matias and Vishkin(1991)]%
        {matias1991parallel}
\bibfield{author}{\bibinfo{person}{Yossi Matias} {and} \bibinfo{person}{Uzi Vishkin}.} \bibinfo{year}{1991}\natexlab{}.
\newblock \showarticletitle{On parallel hashing and integer sorting}.
\newblock \bibinfo{journal}{\emph{J. Algorithms}} \bibinfo{volume}{12}, \bibinfo{number}{4} (\bibinfo{year}{1991}), \bibinfo{pages}{573--606}.
\newblock


\bibitem[Meusel et~al\mbox{.}(2014)]%
        {webgraph}
\bibfield{author}{\bibinfo{person}{Robert Meusel}, \bibinfo{person}{Oliver Lehmberg}, \bibinfo{person}{Christian Bizer}, {and} \bibinfo{person}{Sebastiano Vigna}.} \bibinfo{year}{2014}\natexlab{}.
\newblock \bibinfo{title}{Web Data Commons --- Hyperlink Graphs}.
\newblock \bibinfo{howpublished}{\url{http://webdatacommons.org/hyperlinkgraph}}.
\newblock


\bibitem[Obeya et~al\mbox{.}(2019)]%
        {obeya2019theoretically}
\bibfield{author}{\bibinfo{person}{Omar Obeya}, \bibinfo{person}{Endrias Kahssay}, \bibinfo{person}{Edward Fan}, {and} \bibinfo{person}{Julian Shun}.} \bibinfo{year}{2019}\natexlab{}.
\newblock \showarticletitle{Theoretically-Efficient and Practical Parallel In-Place Radix Sorting}. In \bibinfo{booktitle}{\emph{{ACM} Symposium on Parallelism in Algorithms and Architectures (SPAA)}}. \bibinfo{pages}{213--224}.
\newblock


\bibitem[Polychroniou and Ross(2014)]%
        {polychroniou2014comprehensive}
\bibfield{author}{\bibinfo{person}{Orestis Polychroniou} {and} \bibinfo{person}{Kenneth~A Ross}.} \bibinfo{year}{2014}\natexlab{}.
\newblock \showarticletitle{A comprehensive study of main-memory partitioning and its application to large-scale comparison-and radix-sort}. In \bibinfo{booktitle}{\emph{ACM SIGMOD International Conference on Management of Data (SIGMOD)}}. \bibinfo{pages}{755--766}.
\newblock


\bibitem[Rajasekaran and Reif(1989)]%
        {RR89}
\bibfield{author}{\bibinfo{person}{Sanguthevar Rajasekaran} {and} \bibinfo{person}{John~H. Reif}.} \bibinfo{year}{1989}\natexlab{}.
\newblock \showarticletitle{Optimal and sublogarithmic time randomized parallel sorting algorithms}.
\newblock \bibinfo{journal}{\emph{{SIAM} J. on Computing}} \bibinfo{volume}{18}, \bibinfo{number}{3} (\bibinfo{year}{1989}), \bibinfo{pages}{594--607}.
\newblock


\bibitem[Sato et~al\mbox{.}(2016)]%
        {sato2016chunking}
\bibfield{author}{\bibinfo{person}{Hitoshi Sato}, \bibinfo{person}{Ryo Mizote}, \bibinfo{person}{Satoshi Matsuoka}, {and} \bibinfo{person}{Hirotaka Ogawa}.} \bibinfo{year}{2016}\natexlab{}.
\newblock \showarticletitle{{I/O} chunking and latency hiding approach for out-of-core sorting acceleration using {GPU} and flash {NVM}}. In \bibinfo{booktitle}{\emph{IEEE International Conference on Big Data (Big Data)}}. \bibinfo{pages}{398--403}.
\newblock


\bibitem[Shamoto et~al\mbox{.}(2016)]%
        {shamoto2016gpu}
\bibfield{author}{\bibinfo{person}{Hideyuki Shamoto}, \bibinfo{person}{Koichi Shirahata}, \bibinfo{person}{Aleksandr Drozd}, \bibinfo{person}{Hitoshi Sato}, {and} \bibinfo{person}{Satoshi Matsuoka}.} \bibinfo{year}{2016}\natexlab{}.
\newblock \showarticletitle{GPU-accelerated large-scale distributed sorting coping with device memory capacity}.
\newblock \bibinfo{journal}{\emph{IEEE Trans.\ on Big Data}} \bibinfo{volume}{2}, \bibinfo{number}{1} (\bibinfo{year}{2016}), \bibinfo{pages}{57--69}.
\newblock


\bibitem[Shen and Ding(2004)]%
        {shen2004adaptive}
\bibfield{author}{\bibinfo{person}{Xipeng Shen} {and} \bibinfo{person}{Chen Ding}.} \bibinfo{year}{2004}\natexlab{}.
\newblock \showarticletitle{Adaptive data partition for sorting using probability distribution}. In \bibinfo{booktitle}{\emph{International Conference on Parallel Processing (ICPP)}}. IEEE, \bibinfo{pages}{250--257}.
\newblock


\bibitem[Slota et~al\mbox{.}(2014)]%
        {slota2014bfs}
\bibfield{author}{\bibinfo{person}{George~M. Slota}, \bibinfo{person}{Sivasankaran Rajamanickam}, {and} \bibinfo{person}{Kamesh Madduri}.} \bibinfo{year}{2014}\natexlab{}.
\newblock \showarticletitle{BFS and coloring-based parallel algorithms for strongly connected components and related problems}. In \bibinfo{booktitle}{\emph{{IEEE} International Parallel and Distributed Processing Symposium (IPDPS)}}. IEEE, \bibinfo{pages}{550--559}.
\newblock


\bibitem[Solomonik and Kale(2010)]%
        {solomonik2010highly}
\bibfield{author}{\bibinfo{person}{Edgar Solomonik} {and} \bibinfo{person}{Laxmikant~V Kale}.} \bibinfo{year}{2010}\natexlab{}.
\newblock \showarticletitle{Highly scalable parallel sorting}. In \bibinfo{booktitle}{\emph{{IEEE} International Parallel and Distributed Processing Symposium (IPDPS)}}. IEEE, \bibinfo{pages}{1--12}.
\newblock


\bibitem[Sun et~al\mbox{.}(2019)]%
        {ptreedb}
\bibfield{author}{\bibinfo{person}{Yihan Sun}, \bibinfo{person}{Guy~E Blelloch}, \bibinfo{person}{Wan~Shen Lim}, {and} \bibinfo{person}{Andrew Pavlo}.} \bibinfo{year}{2019}\natexlab{}.
\newblock \showarticletitle{On supporting efficient snapshot isolation for hybrid workloads with multi-versioned indexes}.
\newblock \bibinfo{journal}{\emph{Proceedings of the VLDB Endowment (PVLDB)}} \bibinfo{volume}{13}, \bibinfo{number}{2} (\bibinfo{year}{2019}), \bibinfo{pages}{211--225}.
\newblock


\bibitem[Tr{\"a}ff(2018)]%
        {traff2018parallel}
\bibfield{author}{\bibinfo{person}{Jesper~Larsson Tr{\"a}ff}.} \bibinfo{year}{2018}\natexlab{}.
\newblock \showarticletitle{Parallel quicksort without pairwise element exchange}.
\newblock \bibinfo{journal}{\emph{arXiv preprint:1804.07494}} (\bibinfo{year}{2018}).
\newblock


\bibitem[Vishkin(2010)]%
        {vishkin2010thinking}
\bibfield{author}{\bibinfo{person}{Uzi Vishkin}.} \bibinfo{year}{2010}\natexlab{}.
\newblock \showarticletitle{Thinking in parallel: Some basic data-parallel algorithms and techniques}.
\newblock  (\bibinfo{year}{2010}).
\newblock


\bibitem[Wang et~al\mbox{.}(2023)]%
        {wang2023parallel}
\bibfield{author}{\bibinfo{person}{Letong Wang}, \bibinfo{person}{Xiaojun Dong}, \bibinfo{person}{Yan Gu}, {and} \bibinfo{person}{Yihan Sun}.} \bibinfo{year}{2023}\natexlab{}.
\newblock \showarticletitle{Parallel Strong Connectivity Based on Faster Reachability}. In \bibinfo{booktitle}{\emph{ACM SIGMOD International Conference on Management of Data (SIGMOD)}}.
\newblock


\bibitem[Wang et~al\mbox{.}(2020)]%
        {wang2020theoretically}
\bibfield{author}{\bibinfo{person}{Yiqiu Wang}, \bibinfo{person}{Yan Gu}, {and} \bibinfo{person}{Julian Shun}.} \bibinfo{year}{2020}\natexlab{}.
\newblock \showarticletitle{Theoretically-Efficient and Practical Parallel DBSCAN}. In \bibinfo{booktitle}{\emph{ACM SIGMOD International Conference on Management of Data (SIGMOD)}}. \bibinfo{pages}{2555--2571}.
\newblock


\bibitem[Wang et~al\mbox{.}(2021)]%
        {wang2021geograph}
\bibfield{author}{\bibinfo{person}{Yiqiu Wang}, \bibinfo{person}{Shangdi Yu}, \bibinfo{person}{Laxman Dhulipala}, \bibinfo{person}{Yan Gu}, {and} \bibinfo{person}{Julian Shun}.} \bibinfo{year}{2021}\natexlab{}.
\newblock \showarticletitle{GeoGraph: A Framework for Graph Processing on Geometric Data}.
\newblock \bibinfo{journal}{\emph{ACM SIGOPS Operating Systems Review}} \bibinfo{volume}{55}, \bibinfo{number}{1} (\bibinfo{year}{2021}), \bibinfo{pages}{38--46}.
\newblock


\bibitem[Wassenberg and Sanders(2011)]%
        {wassenberg2011engineering}
\bibfield{author}{\bibinfo{person}{Jan Wassenberg} {and} \bibinfo{person}{Peter Sanders}.} \bibinfo{year}{2011}\natexlab{}.
\newblock \showarticletitle{Engineering a multi-core radix sort}. In \bibinfo{booktitle}{\emph{European Conference on Parallel Processing (Euro-Par)}}. Springer, \bibinfo{pages}{160--169}.
\newblock


\bibitem[Yang et~al\mbox{.}(2013)]%
        {yang2013place}
\bibfield{author}{\bibinfo{person}{Qingxuan Yang}, \bibinfo{person}{John Ellis}, \bibinfo{person}{Khalegh Mamakani}, {and} \bibinfo{person}{Frank Ruskey}.} \bibinfo{year}{2013}\natexlab{}.
\newblock \showarticletitle{In-place permuting and perfect shuffling using involutions}.
\newblock \bibinfo{journal}{\emph{Inform. Process. Lett.}} \bibinfo{volume}{113}, \bibinfo{number}{10-11} (\bibinfo{year}{2013}), \bibinfo{pages}{386--391}.
\newblock


\bibitem[Zhang and Wu(2012)]%
        {zhang2012novel}
\bibfield{author}{\bibinfo{person}{Keliang Zhang} {and} \bibinfo{person}{Baifeng Wu}.} \bibinfo{year}{2012}\natexlab{}.
\newblock \showarticletitle{A novel parallel approach of radix sort with bucket partition preprocess}. In \bibinfo{booktitle}{\emph{International Conference on High Performance Computing (HPCC)}}. IEEE, \bibinfo{pages}{989--994}.
\newblock


\bibitem[Zheng et~al\mbox{.}(2008)]%
        {geolife}
\bibfield{author}{\bibinfo{person}{Yu Zheng}, \bibinfo{person}{Like Liu}, \bibinfo{person}{Longhao Wang}, {and} \bibinfo{person}{Xing Xie}.} \bibinfo{year}{2008}\natexlab{}.
\newblock \showarticletitle{Learning transportation mode from raw gps data for geographic applications on the web}. In \bibinfo{booktitle}{\emph{International World Wide Web Conference (WWW)}}. \bibinfo{pages}{247--256}.
\newblock


\end{thebibliography}
%%% -*-BibTeX-*-
%%% Do NOT edit. File created by BibTeX with style
%%% ACM-Reference-Format-Journals [18-Jan-2012].

\iffullversion{
\appendix

\section{Determinacy Race}

A \emp{determinacy race} is when two logically parallel operations access the same memory location and at least one of them is a write~\cite{CLRS}.
We say an algorithm is \emph{race-free} if it has no determinacy race.
Note that this requirement is stronger than data race~\cite{gharachorloo1990memory,adve1990weak}.
A race-free algorithm is (internally) deterministic~\cite{blelloch2012internally},
and has many advantages, %including typically being easier to reason about and debug.
including ease of reasoning about the code, verifying correctness, debugging, and analyzing the performance.
%Note that this is a different definition of the other concept of determinism that no randomness is used.
For \oursort{}, all operations in the algorithm are deterministic once we fix the random seed,
which is a useful feature in practice.
%\oursort{} is race-free. %and internally and externally deterministic once we fix the random seed.

\section{More Details about Counting Sort}\label{sec:app-counting}

Counting sort is a commonly-used primitive in parallel sorting algorithms to distribute records into (a small number of) buckets.
It takes an input of $n'$ records with a function to map each record to an integer (the \emp{bucket id}) in $[r']$ ($r'\le n'$).
The goal is to sort the records by the bucket ids.
The classic stable counting sort~\cite{vishkin2010thinking}, which is widely used in practice, has $O(n')$ work and $O(r'+\log n')$ span, and can be made I/O-efficient~\cite{blelloch2010low}.
Here we do not use $n$ since counting sort is used in the recursive structure of many sorting algorithms and thus the subproblem size $n'$ can be smaller than $n$.
The algorithm partitions the input array into $l=\Theta(n'/r')$ subarrays each with size $\Theta(r')$.
For each subarray, we sequentially count the appearances of each key, and in total we store $l\cdot r'=\Theta(n)$ such counts in a matrix (the counting matrix).
The total count of each key can be computed by a reduce on the counts for all subarrays.
We then apply a prefix-sum on the counting matrix to get the offset of each bucket in each subarray, and finally distribute the records to their destinations in the buckets.
In practice, to make it faster, we pick the number of subarrays smaller than $m/r'$ so the counting matrix is smaller (and all operations on it can be faster).
%In practice, the number of subarrays $l$ should be selected to be much larger than the number of cores to enable sufficient parallelism.
Under this premise, a smaller number of subarrays is preferred to improve the performance, since it makes the counting matrix small to fit in cache~\cite{dong2023high}.
%\yihan{Not sure if we need to have the last sentence since it's probably not used in this paper. }
%To improve the practical performance, the number of subarrays is usually picked much smaller than $m/r'$ ---
%as long as it is much larger than the number of processors to ensure sufficient parallelism, a smaller value of $l$ is preferred to
%make the counting matrix small to fit in cache~\cite{dong2023high}.
%the number of subarrays is usually set to be small to make the counting matrix small to fit in cache~\cite{dong2023high},
%as long as it is ensure sufficient parallelism (i.e., much larger than the number of processors),

There exists a randomized algorithm for unstable counting sort on $n'=\Omega(\log^2n)$ records with $r'=O(n'\log^{O(1)}n)$ buckets in $O(n'+r')$ work and $O(\log n)$ span \whp{} in $n$~\cite{RR89}.
However, this approach is complicated and is less frequently used in practice due to some overhead.
The reason is that this algorithm requires allocating an $O(n)$-size array but with a large constant hidden in the Big-O, and uses many random accesses to distribute the records to the associated buckets which is I/O-unfriendly.
Finally, it also needs to pack the large array.
Putting all pieces together makes the algorithm less practical than the stable counting sort mentioned above.

\section{More Experimental Results}\label{sec:app-scalability}

\myparagraph{Parallel Scalability.} We test all algorithms using various numbers of threads (from 1 to 192) to show how they scale with the number of threads.
For each distribution, we choose two representative tests (a heavy one and a light one), and present our results in \cref{fig:scalability-uniform-1e7-32,fig:scalability-uniform-1e3-32,fig:scalability-exponential-2e-5-32,fig:scalability-exponential-7e-5-32,fig:scalability-Zipfian-0.7-32,fig:scalability-Zipfian-1.2-32,fig:scalability-bitsexp-30-32,fig:scalability-bitsexp-100-32,fig:scalability-uniform-1e7-64,fig:scalability-uniform-1e3-64,fig:scalability-exponential-2e-5-64,fig:scalability-exponential-7e-5-64,fig:scalability-Zipfian-0.7-64,fig:scalability-Zipfian-1.2-64,fig:scalability-bitsexp-30-64,fig:scalability-bitsexp-100-64}.
On the three standard distributions, most algorithms present favorably good speedups.
Our algorithm achieves the top-tier (almost linear) speedup, with up to 72$\times$ on 192 hyper-threads.

On the \bexp{} distribution, the self-relative speedups are slightly worst for all algorithms.
In particular, \ipsra{}, \ipso{} and \raduls{} do not scale well (or become even worse) with increasing number of threads.
As discussed in the main body, this is mainly because \bexp{} creates subproblems with very different sizes,
and thus is challenging to be handled by integer sorts due to load imbalance.
In most cases, \oursort{}, \regions{}, \plss{} and \plis{} still scale to at least 96 cores (\regions{}'s performance drops slightly with hyperthreading).
This may be due to the use of a work-stealing scheduler in these implementations, which enables dynamic load balancing.
Our algorithm still achieves almost-linear scalability, and has the best speedups in most cases.

\myparagraph{Input Size Scalability.} We test all algorithms with various input sizes for both 32-bit and 64-bit keys.
For each distribution, we choose two representative tests (a heavy one and a light one), and present our results in \cref{fig:input-size-uniform-1e7-32,fig:input-size-uniform-1e3-32,fig:input-size-exponential-2e-5-32,fig:input-size-exponential-7e-5-32,fig:input-size-Zipfian-0.8-32,fig:input-size-Zipfian-1.2-32,fig:input-size-bitsexp-30-32,fig:input-size-bitsexp-100-32,fig:input-size-uniform-1e7-64,fig:input-size-uniform-1e3-64,fig:input-size-exponential-2e-5-64,fig:input-size-exponential-7e-5-64,fig:input-size-Zipfian-0.8-64,fig:input-size-Zipfian-1.2-64,fig:input-size-bitsexp-30-64,fig:input-size-bitsexp-100-64}.
We remove \raduls{} from these figures since it has a serious scale-down issue on some of the distributions (needs more than two seconds with $n=10^7$).
We measure the input size scalability starting from $n=10^7$,
because on an even smaller size, most algorithms only take less than ten milliseconds.
%One can consider using sequential sorting algorithms for smaller input sizes.
Overall, our algorithm \oursort{}, \plis, and \plss scale well in a large range ($n=10^7$ to $2\times10^9$).
and our algorithm achieves the best performance in most cases.
\regions{}, \ipsra, and \ipso do not perform well when the input sizes are relatively small ($n \leq 10^8$),
but they also achieve good performance when the input sizes are sufficiently large.

\begin{figure*}[t]
{\Large Self-speedup with Varying Thread Counts ($n=10^9$, 32-bit keys and 32-bit values). Higher is better.}\\
Uniform Distribution\\
\begin{minipage}{.9\columnwidth}
  \centering
  \includegraphics[width=.8\columnwidth]{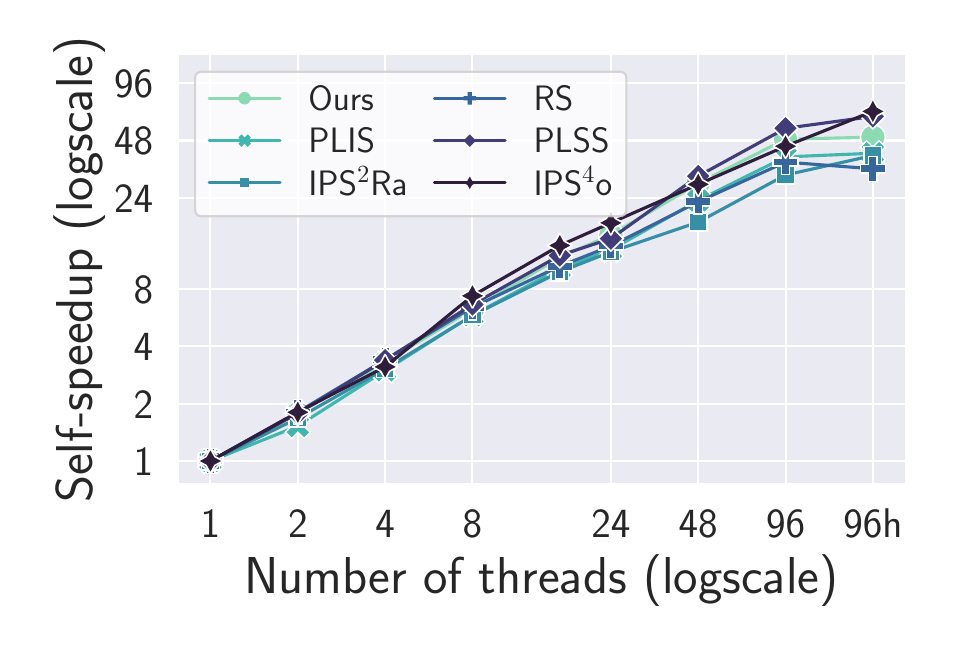}
  \vspace{-.5em}
  \caption{\textbf{Self-speedup with varying thread counts of all tested implementations on \uniform-$\boldsymbol{10^7}$.}}\label{fig:scalability-uniform-1e7-32}
\end{minipage}\hfil
\begin{minipage}{.9\columnwidth}
  \centering
  \includegraphics[width=.8\columnwidth]{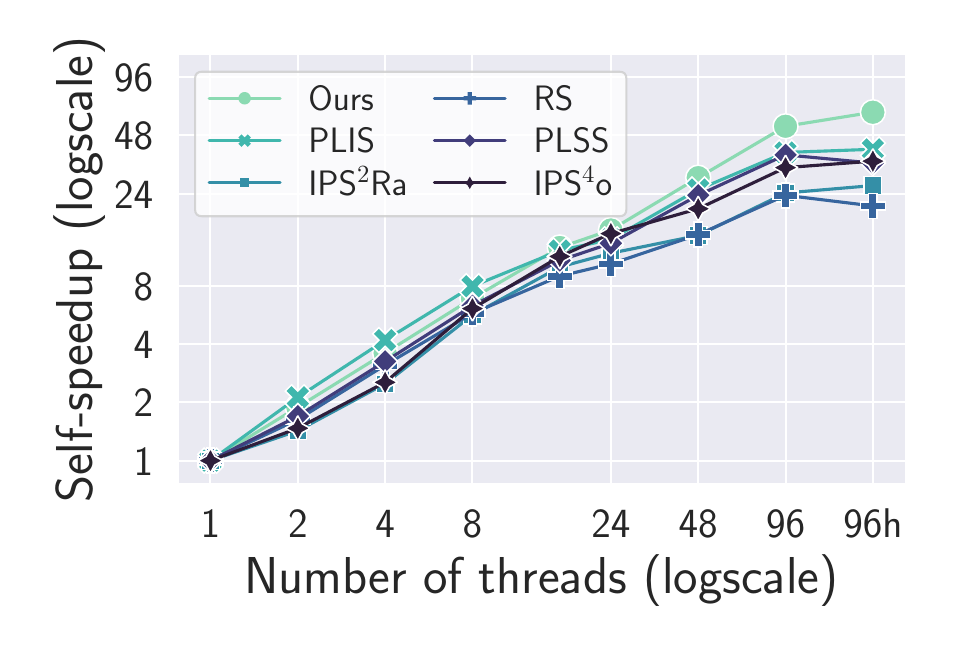}
  \vspace{-.5em}
  \caption{\textbf{Self-speedup with varying thread counts of all tested implementations on \uniform-$\boldsymbol{10^3}$.}}\label{fig:scalability-uniform-1e3-32}
\end{minipage}

\bigskip
Exponential Distribution\\
\begin{minipage}{.9\columnwidth}
  \centering
  \includegraphics[width=.8\columnwidth]{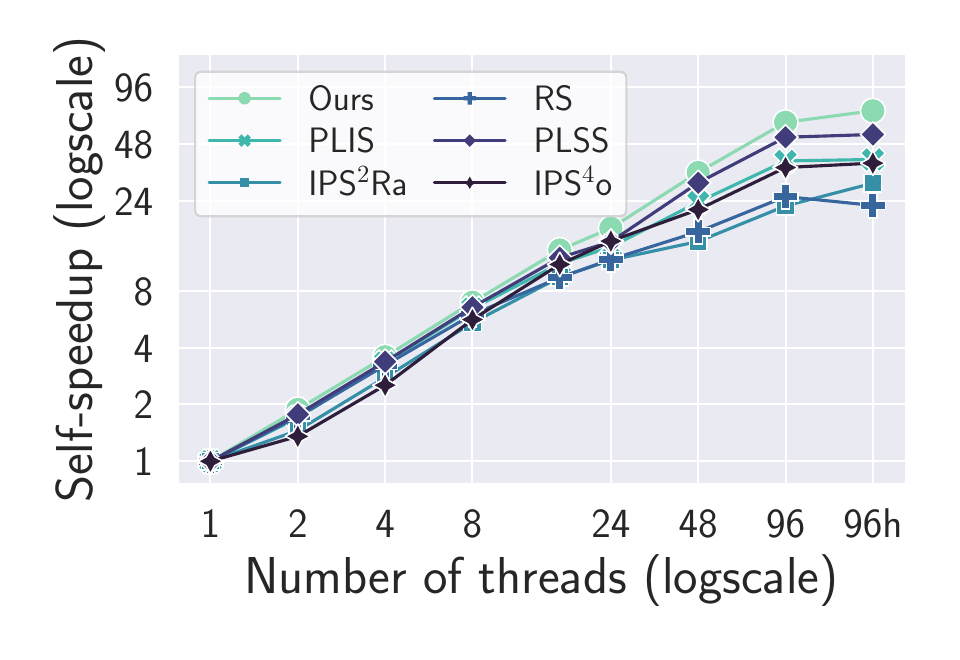}
  \vspace{-.5em}
  \caption{\textbf{Self-speedup with varying thread counts of all tested implementations on \exponential-$\boldsymbol{2}$.}}\label{fig:scalability-exponential-2e-5-32}
\end{minipage}\hfil
\begin{minipage}{.9\columnwidth}
  \centering
  \includegraphics[width=.8\columnwidth]{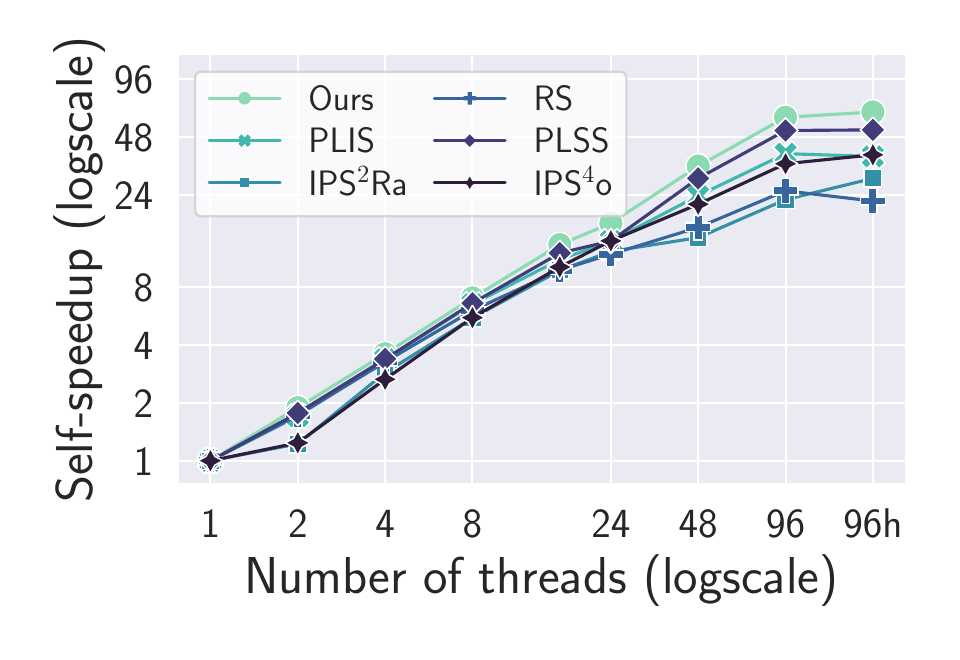}
  \vspace{-.5em}
  \caption{\textbf{Self-speedup with varying thread counts of all tested implementations on \exponential-$\boldsymbol{7}$.}}\label{fig:scalability-exponential-7e-5-32}
\end{minipage}

\bigskip
Zipfian Distribution\\
\begin{minipage}{.9\columnwidth}
  \centering
  \includegraphics[width=.8\columnwidth]{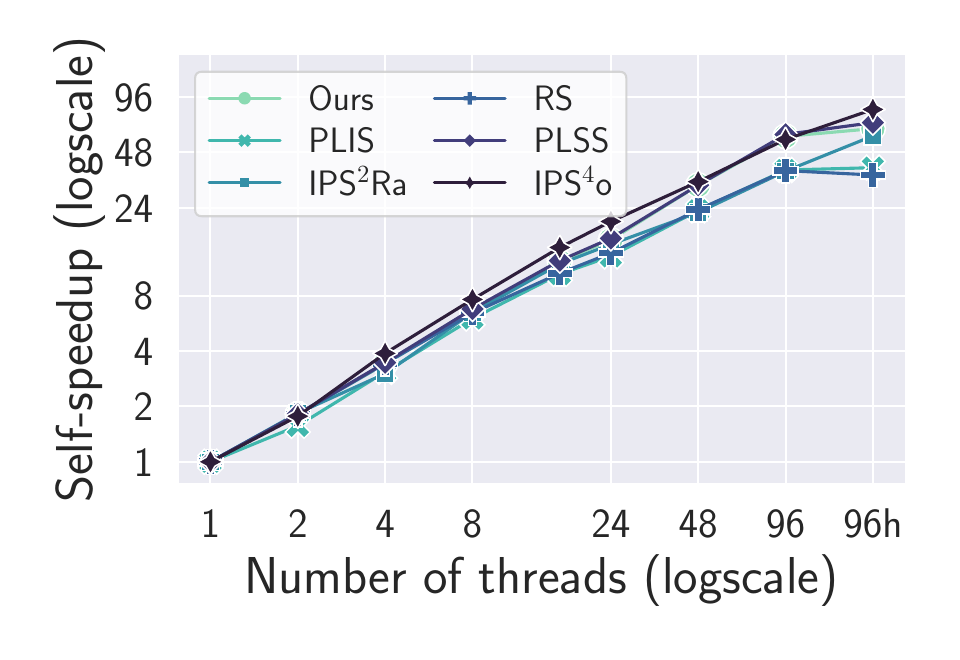}
  \vspace{-.5em}
  \caption{\textbf{Self-speedup with varying thread counts of all tested implementations on \zipfian-$\boldsymbol{0.8}$.}}\label{fig:scalability-Zipfian-0.7-32}
\end{minipage}\hfil
\begin{minipage}{.9\columnwidth}
  \centering
  \includegraphics[width=.8\columnwidth]{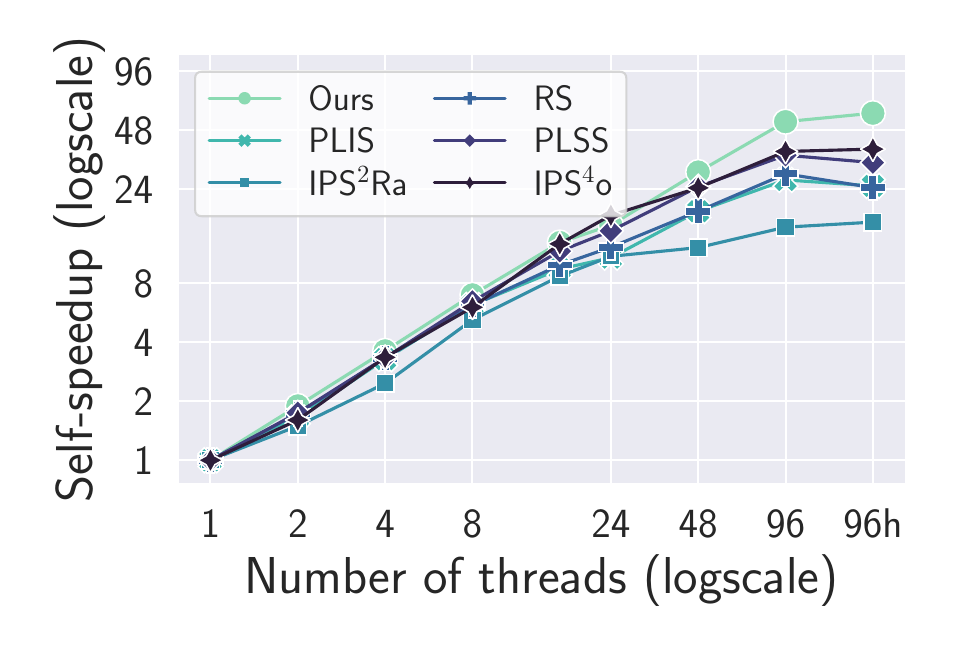}
  \vspace{-.5em}
  \caption{\textbf{Self-speedup with varying thread counts of all tested implementations on \zipfian-$\boldsymbol{1.2}$.}}\label{fig:scalability-Zipfian-1.2-32}
\end{minipage}

\bigskip
\bexp{} Distribution\\
\begin{minipage}{.9\columnwidth}
  \centering
  \includegraphics[width=.8\columnwidth]{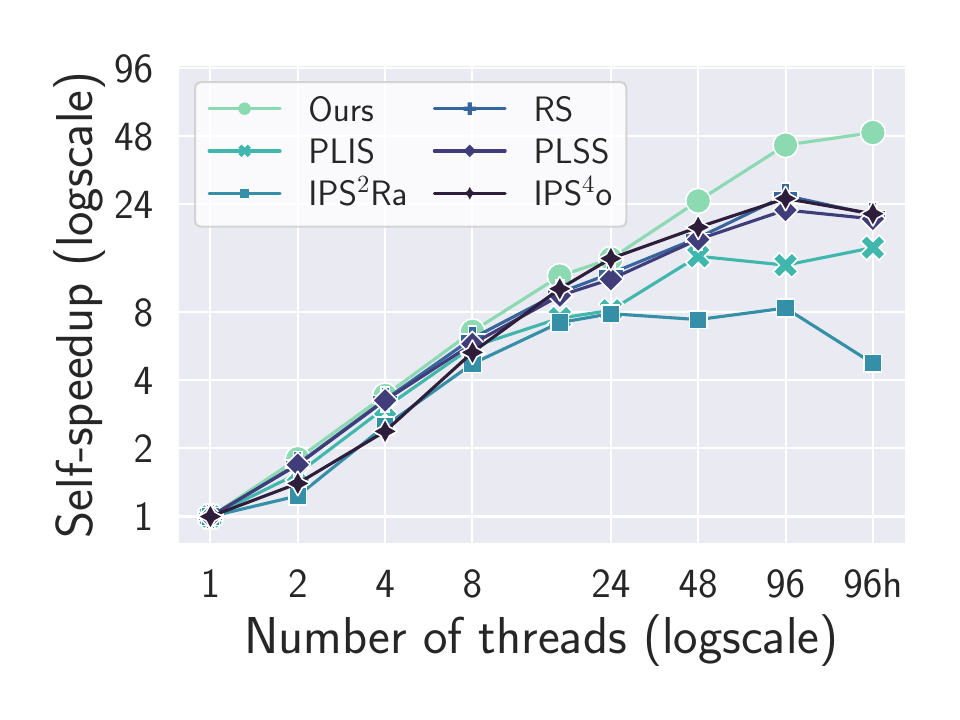}
  \vspace{-.5em}
  \caption{\textbf{Self-speedup with varying thread counts of all tested implementations on \bitexp-$\boldsymbol{30}$.}}\label{fig:scalability-bitsexp-30-32}
\end{minipage}\hfil
\begin{minipage}{.9\columnwidth}
  \centering
  \includegraphics[width=.8\columnwidth]{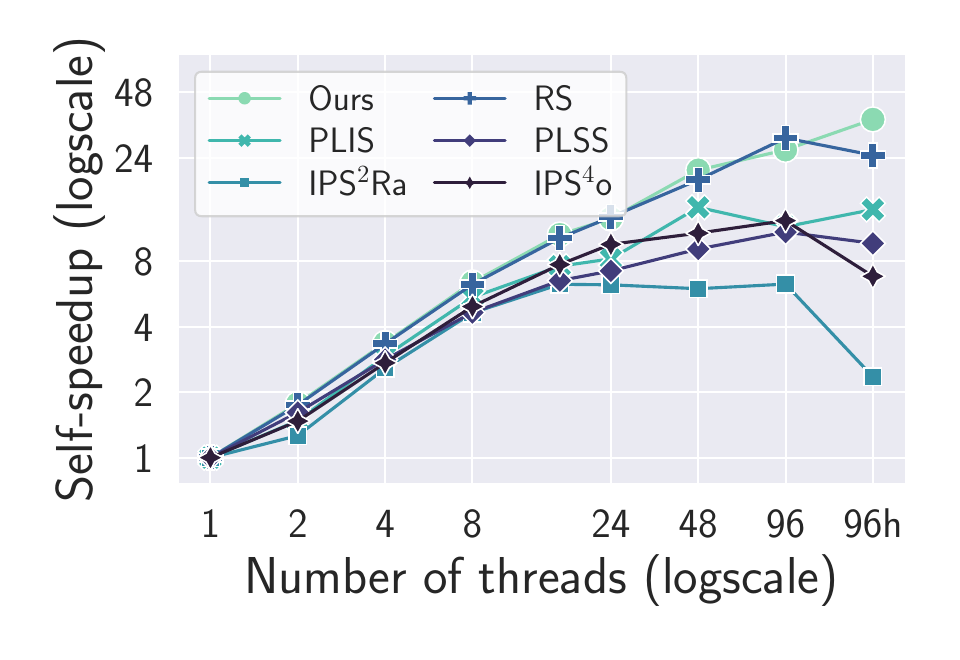}
  \vspace{-.5em}
  \caption{\textbf{Self-speedup with varying thread counts of all tested implementations on \bitexp-$\boldsymbol{100}$.}}\label{fig:scalability-bitsexp-100-32}
\end{minipage}
\end{figure*}

\begin{figure*}[t]
    {\Large Self-speedup with Varying Thread Counts ($n=10^9$, 64-bit keys and 64-bit values). Higher is better.}\\
    Uniform Distribution\\
    \begin{minipage}{.9\columnwidth}
      \centering
      \includegraphics[width=.8\columnwidth]{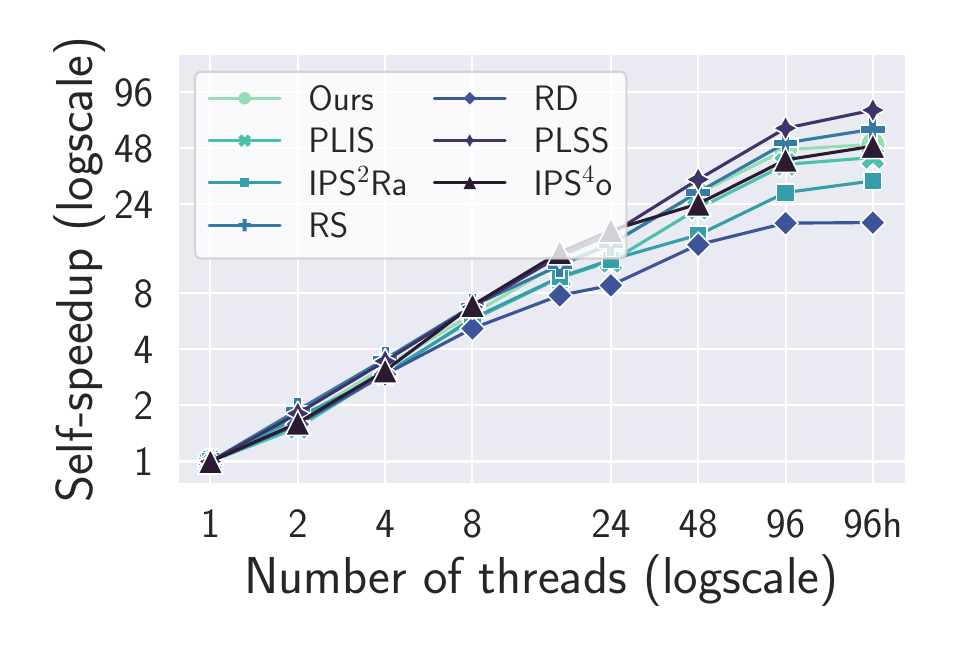}
  \vspace{-.5em}
      \caption{\textbf{Self-speedup with varying thread counts of all tested implementations on \uniform-$\boldsymbol{10^7}$.}}\label{fig:scalability-uniform-1e7-64}
    \end{minipage}\hfil
    \begin{minipage}{.9\columnwidth}
      \centering
      \includegraphics[width=.8\columnwidth]{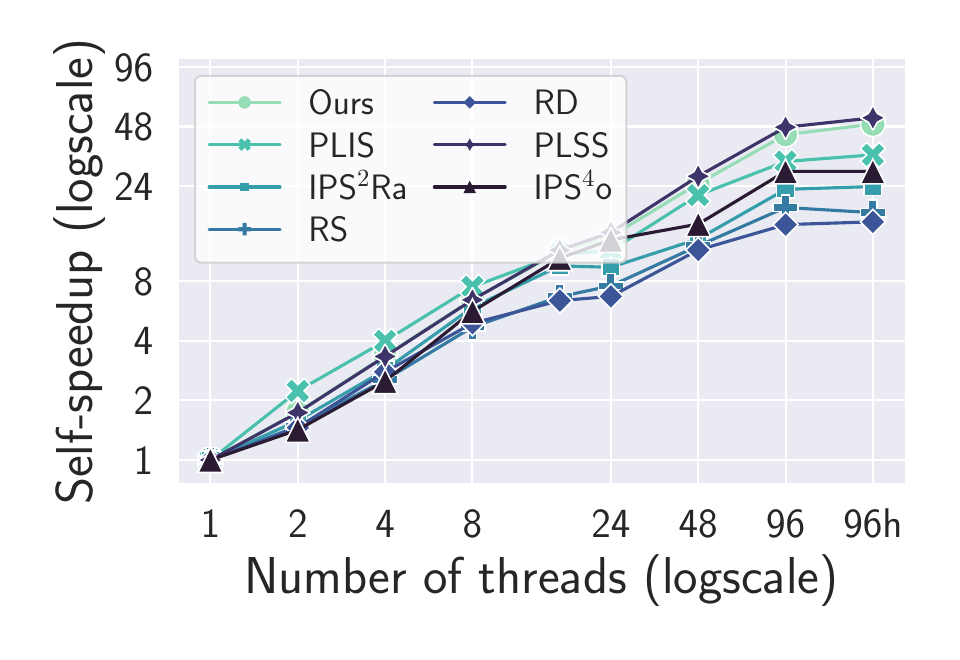}
  \vspace{-.5em}
      \caption{\textbf{Self-speedup with varying thread counts of all tested implementations on \uniform-$\boldsymbol{10^3}$.}}\label{fig:scalability-uniform-1e3-64}
    \end{minipage}

\bigskip
    Exponential Distribution\\
    \begin{minipage}{.9\columnwidth}
      \centering
      \includegraphics[width=.8\columnwidth]{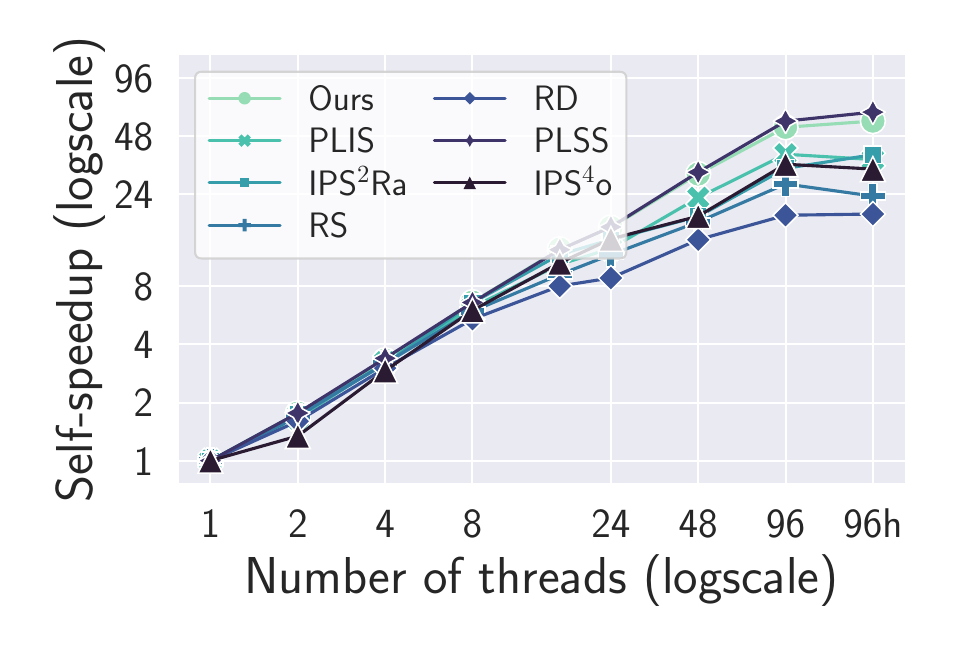}
  \vspace{-.5em}
      \caption{\textbf{Self-speedup with varying thread counts of all tested implementations on \exponential-$\boldsymbol{2}$.}}\label{fig:scalability-exponential-2e-5-64}
    \end{minipage}\hfil
    \begin{minipage}{.9\columnwidth}
      \centering
      \includegraphics[width=.8\columnwidth]{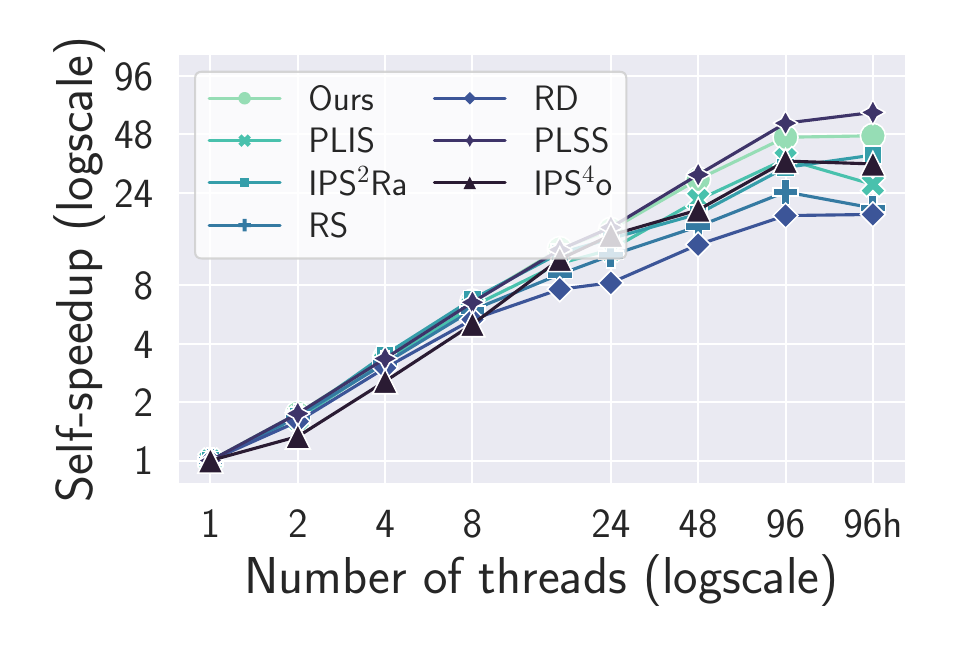}
  \vspace{-.5em}
      \caption{\textbf{Self-speedup with varying thread counts of all tested implementations on \exponential-$\boldsymbol{7}$.}}\label{fig:scalability-exponential-7e-5-64}
    \end{minipage}

\bigskip
    Zipfian Distribution\\
    \begin{minipage}{.9\columnwidth}
      \centering
      \includegraphics[width=.8\columnwidth]{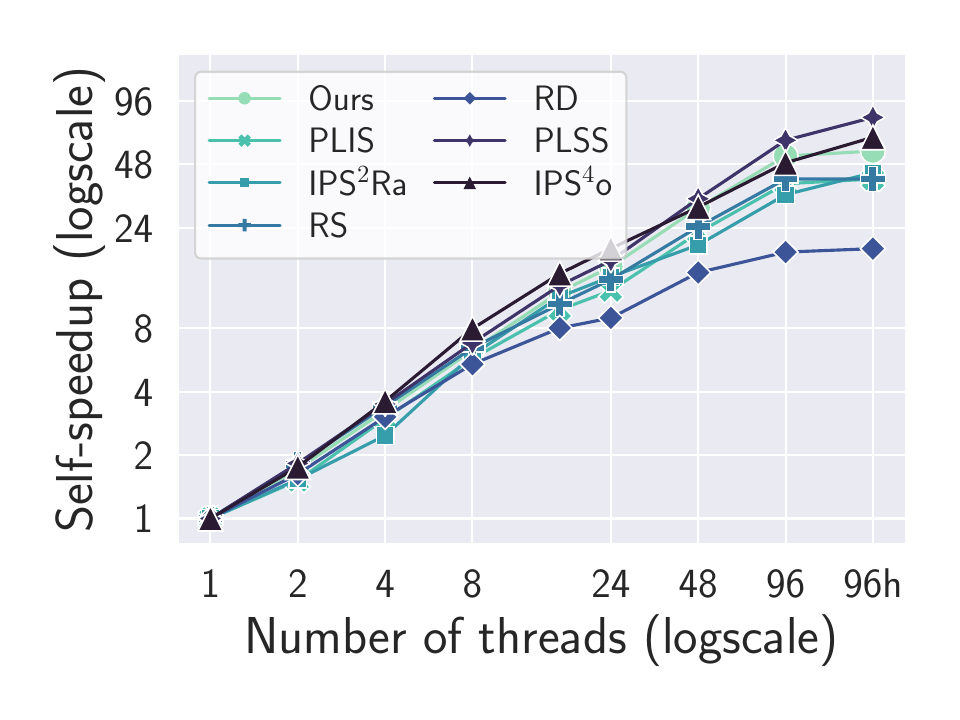}
  \vspace{-.5em}
      \caption{\textbf{Self-speedup with varying thread counts of all tested implementations on \zipfian-$\boldsymbol{0.8}$.}}\label{fig:scalability-Zipfian-0.7-64}
    \end{minipage}\hfil
    \begin{minipage}{.9\columnwidth}
      \centering
      \includegraphics[width=.8\columnwidth]{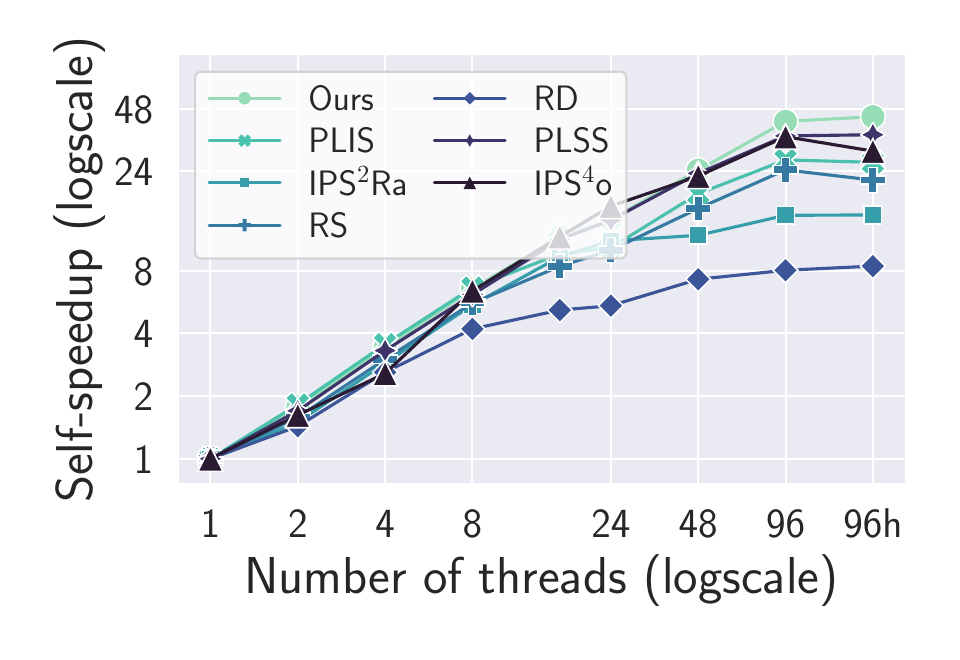}
  \vspace{-.5em}
      \caption{\textbf{Self-speedup with varying thread counts of all tested implementations on \zipfian-$\boldsymbol{1.2}$.}}\label{fig:scalability-Zipfian-1.2-64}
    \end{minipage}

\bigskip
    \bexp{} Distribution\\
    \begin{minipage}{.9\columnwidth}
      \centering
      \includegraphics[width=.8\columnwidth]{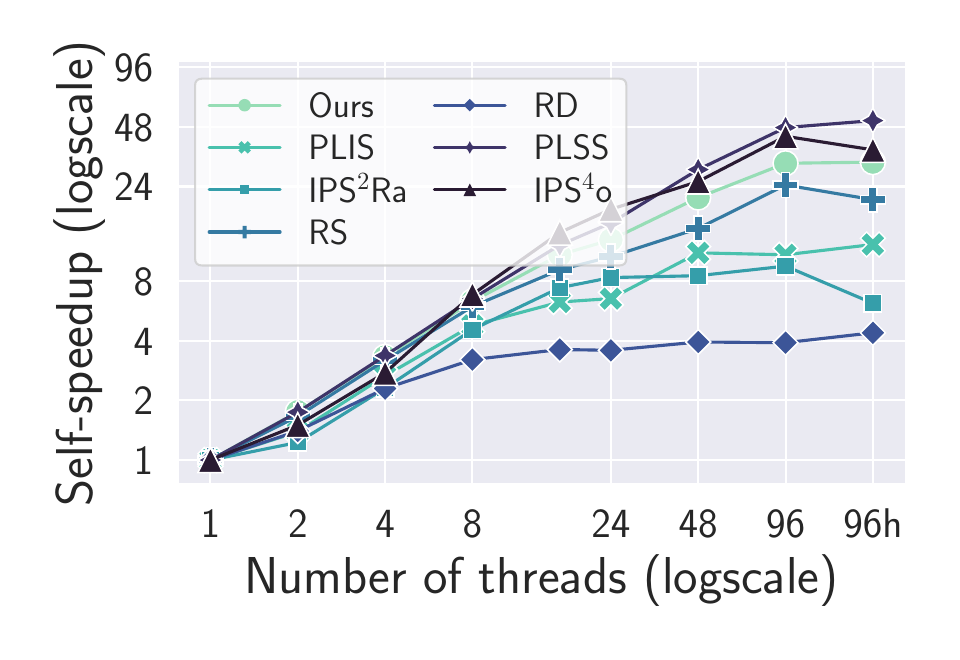}
  \vspace{-.5em}
      \caption{\textbf{Self-speedup with varying thread counts of all tested implementations on \bitexp-$\boldsymbol{30}$.}}\label{fig:scalability-bitsexp-30-64}
    \end{minipage}\hfil
    \begin{minipage}{.9\columnwidth}
      \centering
      \includegraphics[width=.8\columnwidth]{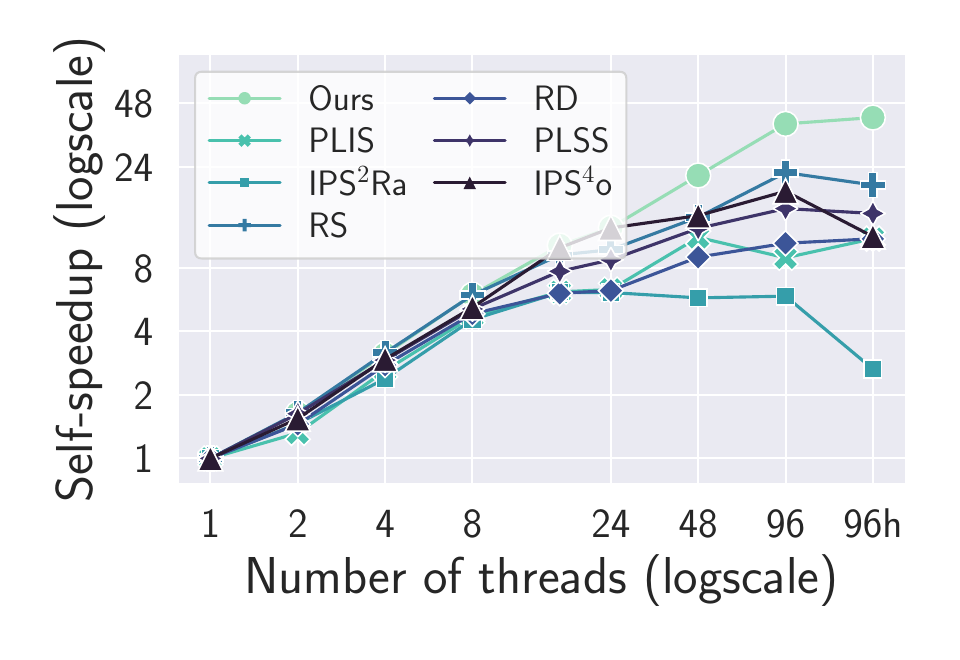}
  \vspace{-.5em}
      \caption{\textbf{Self-speedup with varying thread counts of all tested implementations on \bitexp-$\boldsymbol{100}$.}}\label{fig:scalability-bitsexp-100-64}
    \end{minipage}
    \end{figure*}

\begin{figure*}[t]
    {\Large Scalability with increasing input size (32-bit keys). Lower is better.}\\~\\
    Uniform Distribution\\
    \begin{minipage}{.9\columnwidth}
      \centering
      \includegraphics[width=.8\columnwidth]{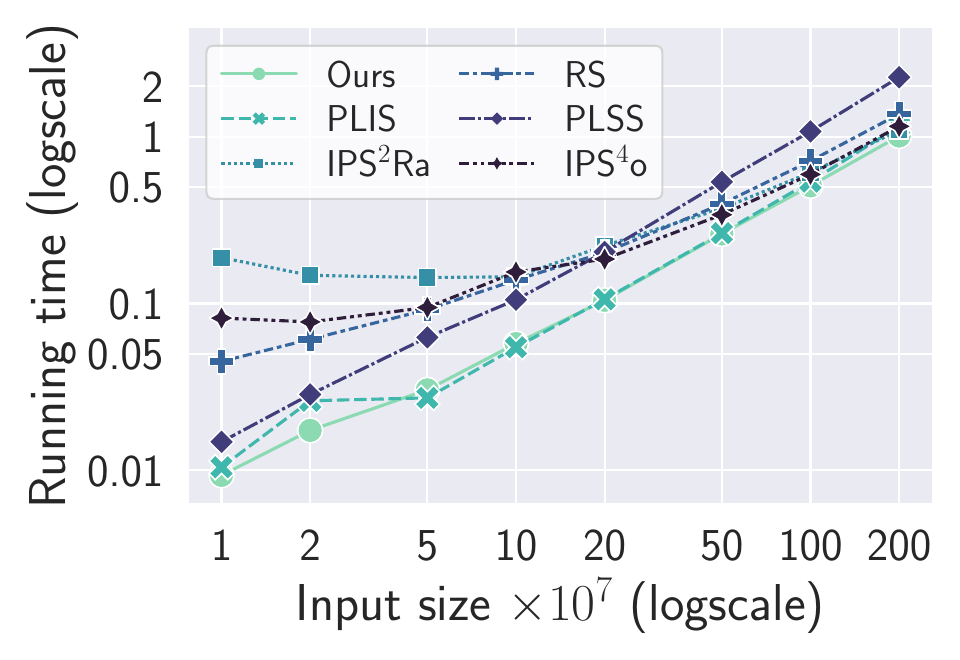}
      \vspace{-.5em}\caption{\textbf{Scalability with increasing input size ($n$) of all tested implementations on \uniform-$\boldsymbol{10^7}$.}}\label{fig:input-size-uniform-1e7-32}
    \end{minipage}\hfill
    \begin{minipage}{.9\columnwidth}
      \centering
      \includegraphics[width=.8\columnwidth]{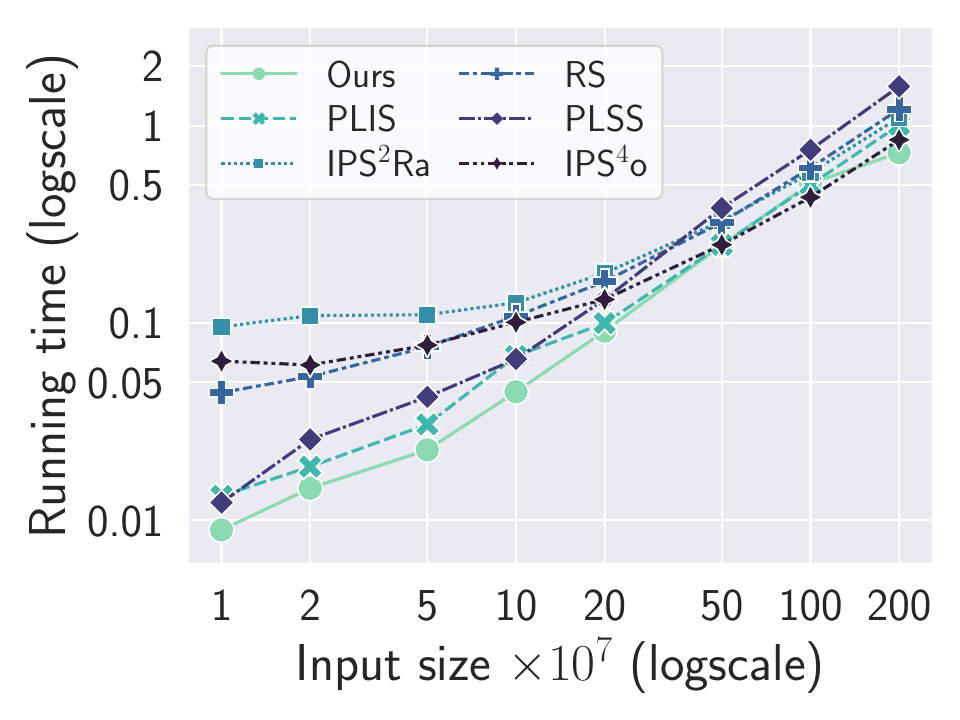}
      \vspace{-.5em}\caption{\textbf{Scalability with increasing input size ($n$) of all tested implementations on \uniform-$\boldsymbol{10^3}$.}}\label{fig:input-size-uniform-1e3-32}
    \end{minipage}\\
\bigskip
    Exponential Distribution\\
    \begin{minipage}{.9\columnwidth}
      \centering
      \includegraphics[width=.8\columnwidth]{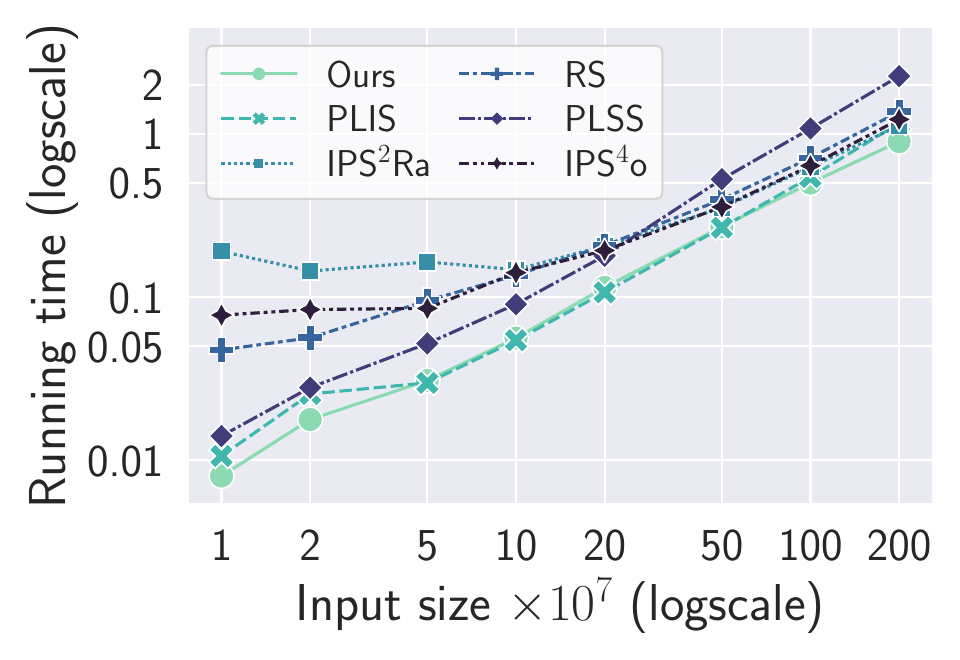}
      \vspace{-.5em}\caption{\textbf{Scalability with increasing input size ($n$) of all tested implementations on \exponential-$\boldsymbol{2}$.}}\label{fig:input-size-exponential-2e-5-32}
    \end{minipage}\hfill
    \begin{minipage}{.9\columnwidth}
      \centering
      \includegraphics[width=.8\columnwidth]{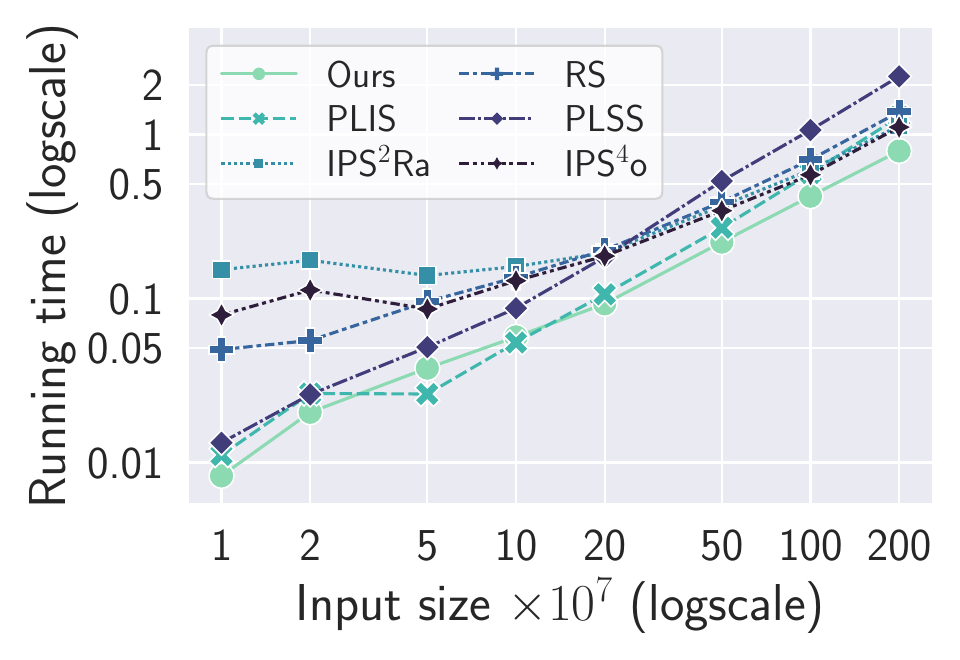}
      \vspace{-.5em}\caption{\textbf{Scalability with increasing input size ($n$) of all tested implementations on \exponential-$\boldsymbol{7}$.}}\label{fig:input-size-exponential-7e-5-32}
    \end{minipage}

\bigskip
    Zipfian Distribution\\
    \begin{minipage}{.9\columnwidth}
      \centering
      \includegraphics[width=.8\columnwidth]{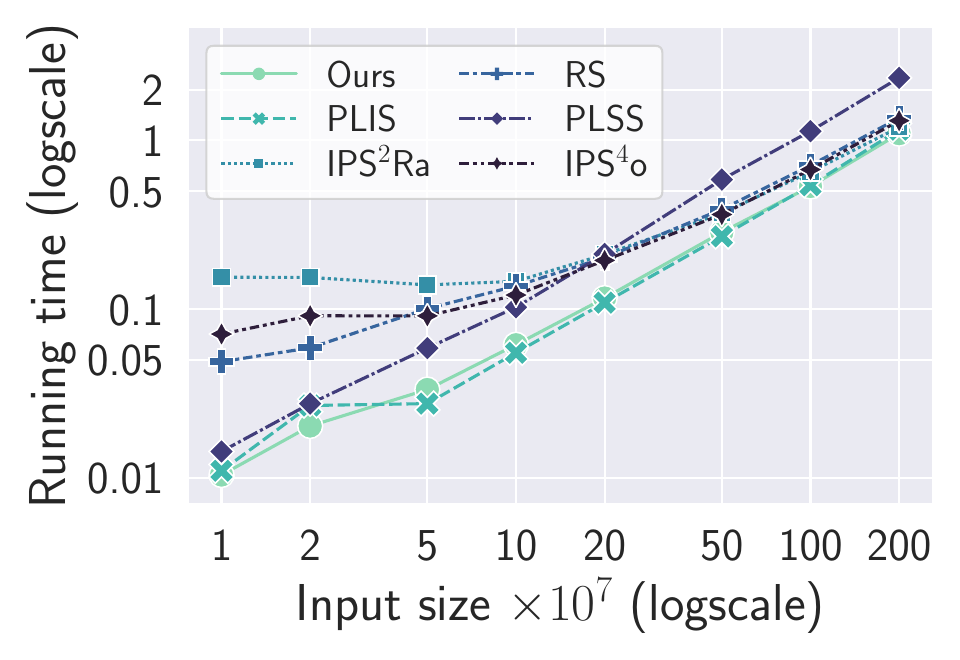}
      \vspace{-.5em}\caption{\textbf{Scalability with increasing input size ($n$) of all tested implementations on \zipfian-$\boldsymbol{0.8}$.}}\label{fig:input-size-Zipfian-0.8-32}
    \end{minipage}\hfill
    \begin{minipage}{.9\columnwidth}
      \centering
      \includegraphics[width=.8\columnwidth]{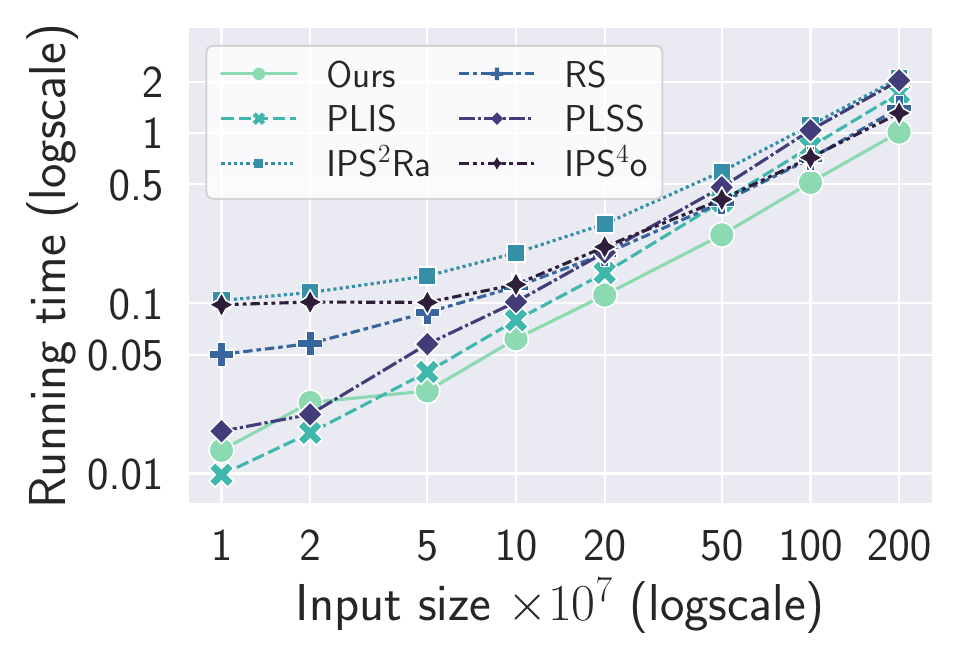}
      \vspace{-.5em}\caption{\textbf{Scalability with increasing input size ($n$) of all tested implementations on \zipfian-$\boldsymbol{1.2}$.}}\label{fig:input-size-Zipfian-1.2-32}
    \end{minipage}

\bigskip
    \bexp{} Distribution\\
    \begin{minipage}{.9\columnwidth}
      \centering
      \includegraphics[width=.8\columnwidth]{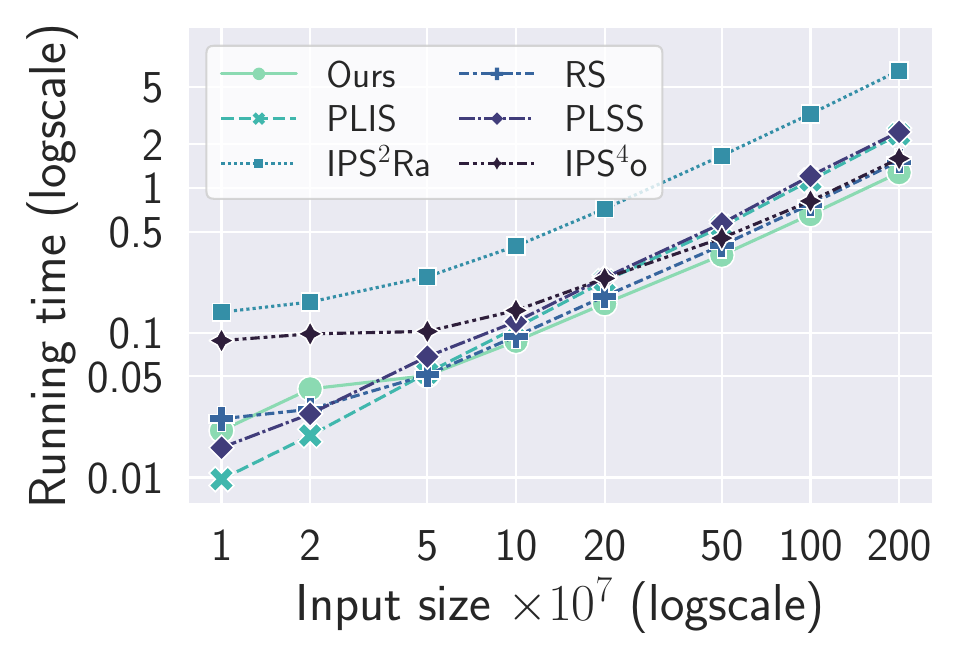}
      \vspace{-.5em}\caption{\textbf{Scalability with increasing input size ($n$) of all tested implementations on \bitexp-$\boldsymbol{30}$.}}\label{fig:input-size-bitsexp-30-32}
    \end{minipage}\hfill
    \begin{minipage}{.9\columnwidth}
      \centering
      \includegraphics[width=.8\columnwidth]{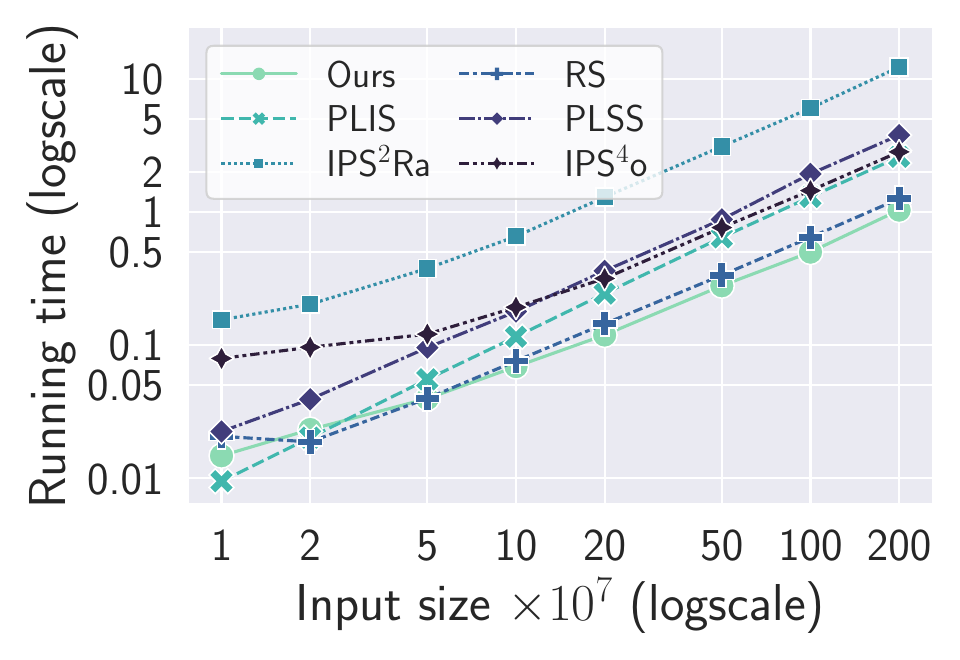}
      \vspace{-.5em}\caption{\textbf{Scalability with increasing input size ($n$) of all tested implementations on \bitexp-$\boldsymbol{100}$.}}\label{fig:input-size-bitsexp-100-32}
    \end{minipage}
    \end{figure*}

    \begin{figure*}[t]
        {\Large Scalability with increasing input size (64-bit keys). Lower is better.}\\~\\
        Uniform Distribution\\
        \begin{minipage}{.9\columnwidth}
          \centering
          \includegraphics[width=.8\columnwidth]{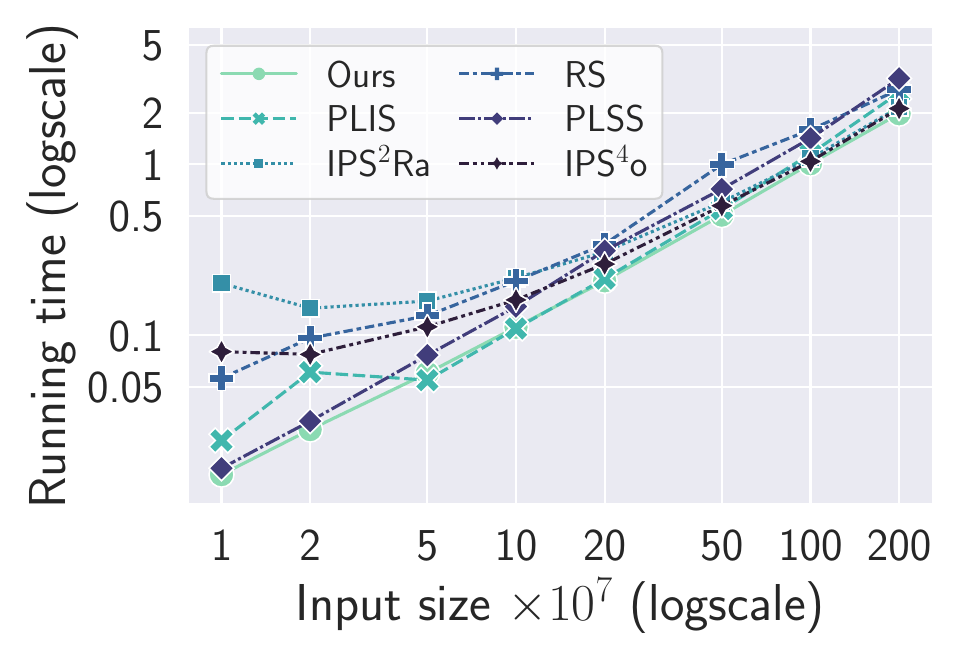}
          \vspace{-.5em}\caption{\textbf{Scalability with increasing input size ($n$) of all tested implementations on \uniform-$\boldsymbol{10^7}$.}}\label{fig:input-size-uniform-1e7-64}
        \end{minipage}\hfill
        \begin{minipage}{.9\columnwidth}
          \centering
          \includegraphics[width=.8\columnwidth]{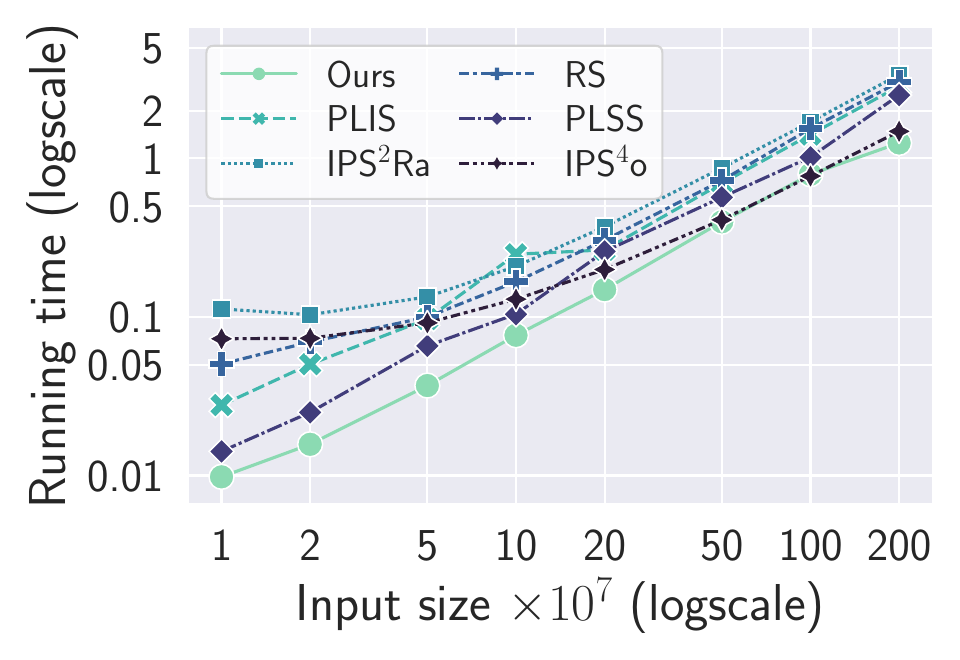}
          \vspace{-.5em}\caption{\textbf{Scalability with increasing input size ($n$) of all tested implementations on \uniform-$\boldsymbol{10^3}$.}}\label{fig:input-size-uniform-1e3-64}
        \end{minipage}

\bigskip
        Exponential Distribution\\
        \begin{minipage}{.9\columnwidth}
          \centering
          \includegraphics[width=.8\columnwidth]{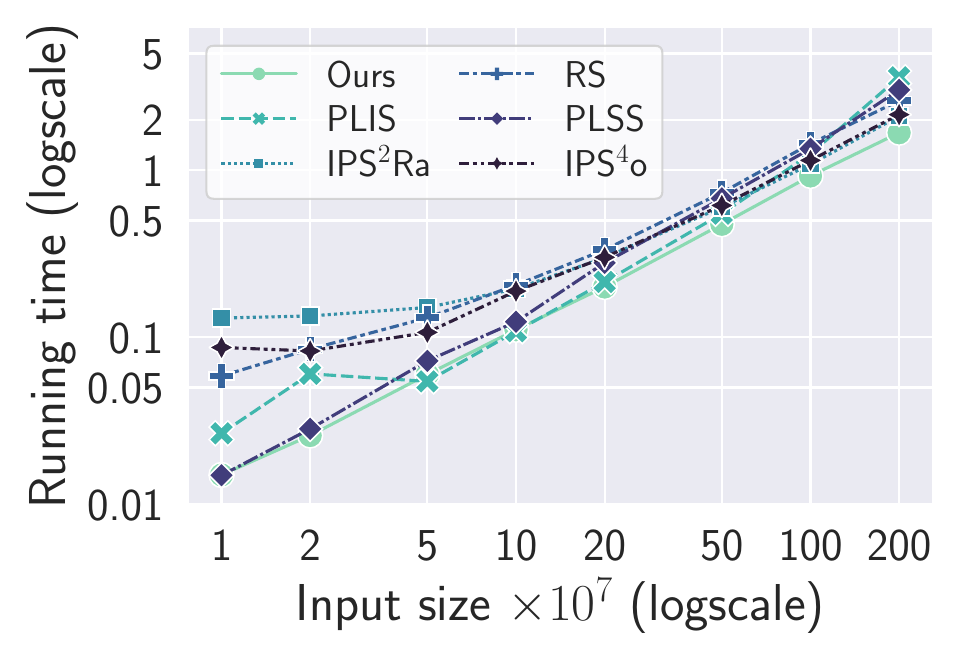}
          \vspace{-.5em}\caption{\textbf{Scalability with increasing input size ($n$) of all tested implementations on \exponential-$\boldsymbol{2}$.}}\label{fig:input-size-exponential-2e-5-64}
        \end{minipage}\hfill
        \begin{minipage}{.9\columnwidth}
          \centering
          \includegraphics[width=.8\columnwidth]{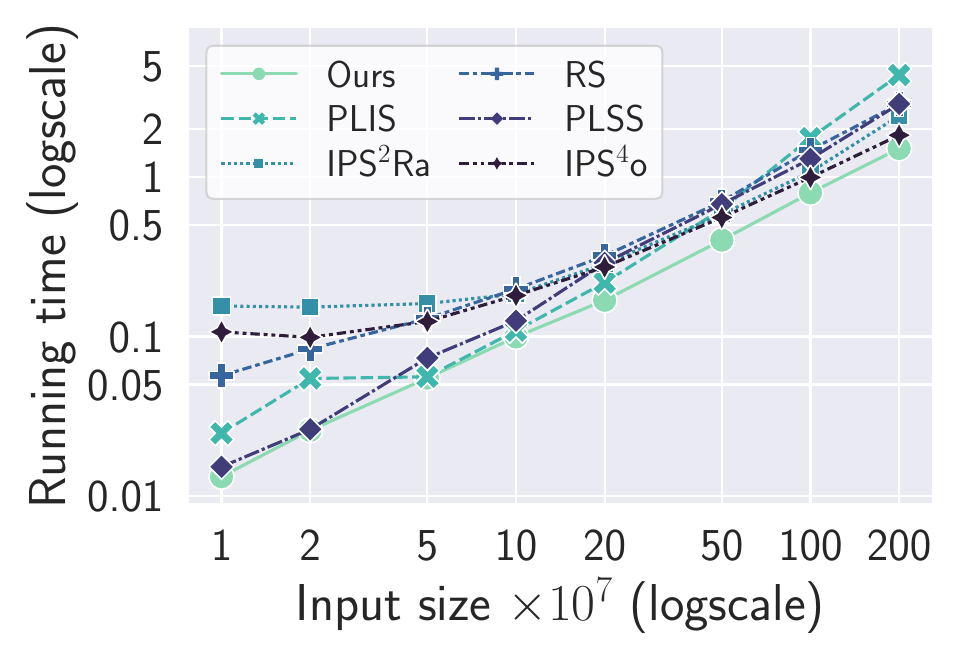}
          \vspace{-.5em}\caption{\textbf{Scalability with increasing input size ($n$) of all tested implementations on \exponential-$\boldsymbol{7}$.}}\label{fig:input-size-exponential-7e-5-64}
        \end{minipage}

\bigskip
        Zipfian Distribution\\
        \begin{minipage}{.9\columnwidth}
          \centering
          \includegraphics[width=.8\columnwidth]{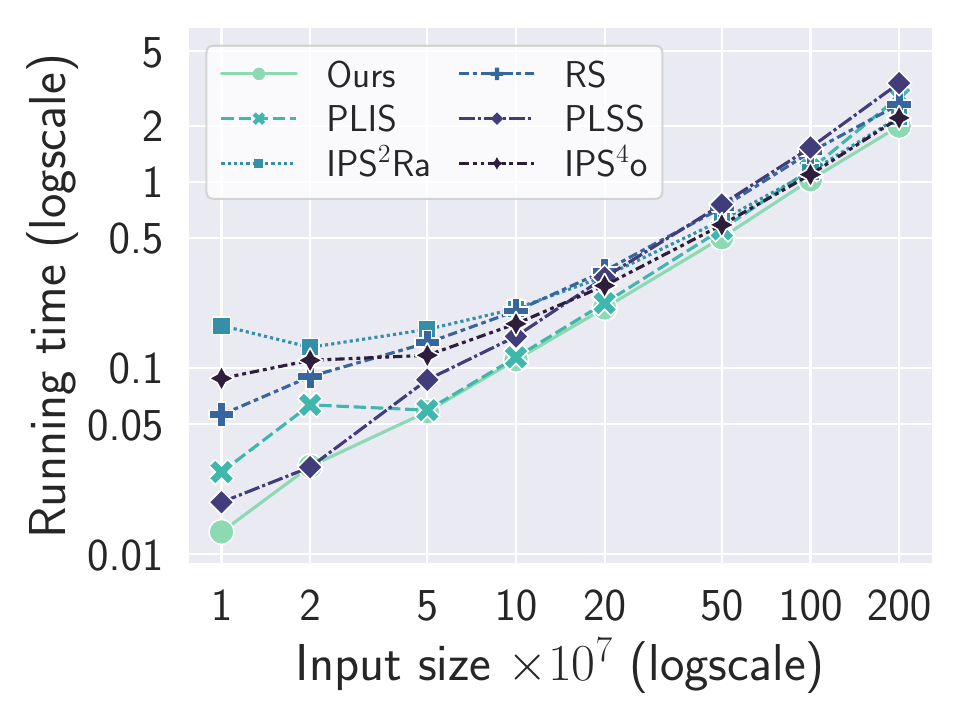}
          \vspace{-.5em}\caption{\textbf{Scalability with increasing input size ($n$) of all tested implementations on \zipfian-$\boldsymbol{0.8}$.}}\label{fig:input-size-Zipfian-0.8-64}
        \end{minipage}\hfill
        \begin{minipage}{.9\columnwidth}
          \centering
          \includegraphics[width=.8\columnwidth]{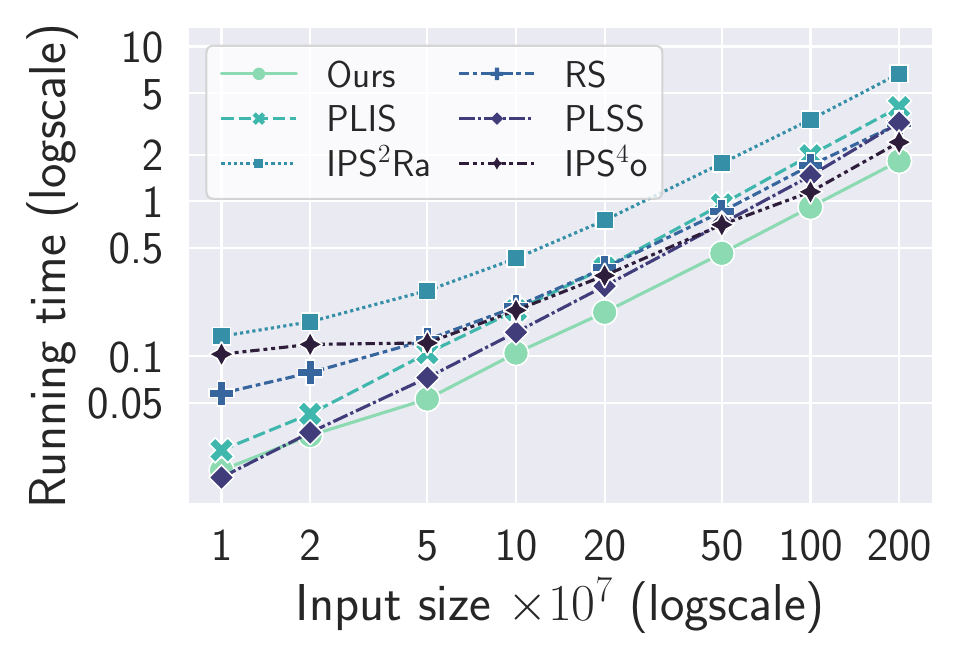}
          \vspace{-.5em}\caption{\textbf{Scalability with increasing input size ($n$) of all tested implementations on \zipfian-$\boldsymbol{1.2}$.}}\label{fig:input-size-Zipfian-1.2-64}
        \end{minipage}

\bigskip
        \bexp{} Distribution\\
        \begin{minipage}{.9\columnwidth}
          \centering
          \includegraphics[width=.8\columnwidth]{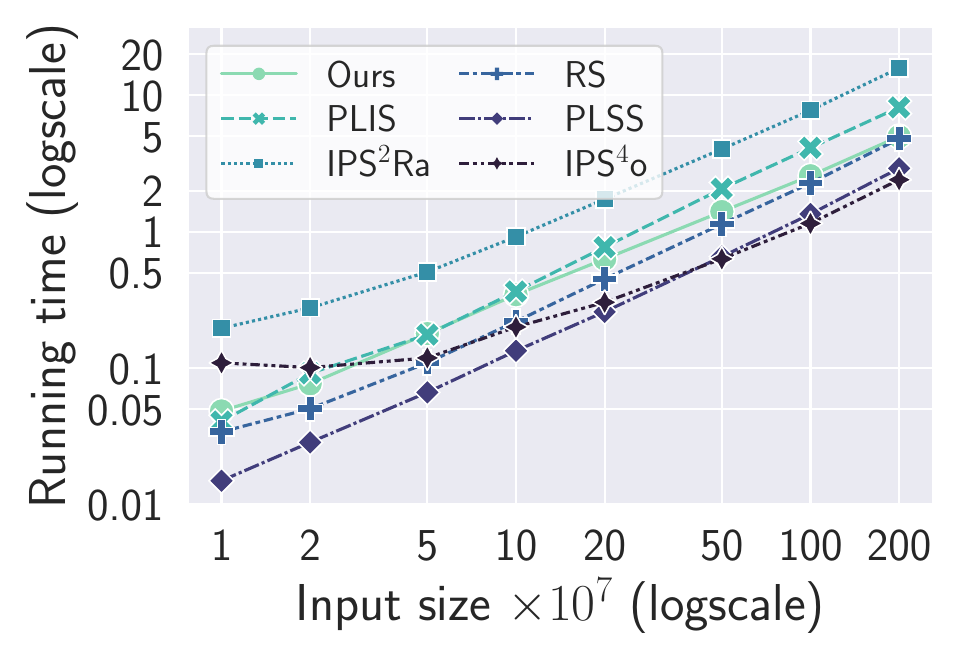}
          \vspace{-.5em}\caption{\textbf{Scalability with increasing input size ($n$) of all tested implementations on \bexp-$\boldsymbol{30}$.}}\label{fig:input-size-bitsexp-30-64}
        \end{minipage}\hfill
        \begin{minipage}{.9\columnwidth}
          \centering
          \includegraphics[width=.8\columnwidth]{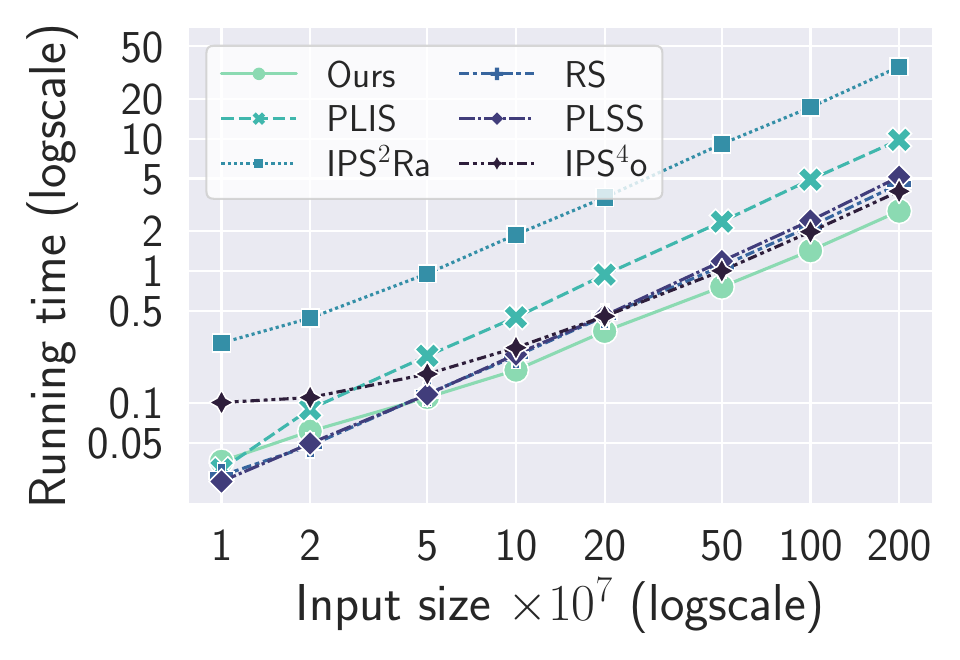}
          \vspace{-.5em}\caption{\textbf{Scalability with increasing input size ($n$) of all tested implementations on \bexp-$\boldsymbol{100}$.}}\label{fig:input-size-bitsexp-100-64}
        \end{minipage}
        \end{figure*}

}

\end{document}
\endinput
%%
%% End of file `main.tex'.